\documentclass[mnsc,nonblindrev]{informs3} 

\OneAndAHalfSpacedXI 
 
\RequirePackage{endnotes} \usepackage{endnotes}

\usepackage{natbib}
\usepackage{bibunits}
\bibpunct[, ]{(}{)}{,}{a}{}{,}%
\def\bibfont{\small}%
\usepackage{xspace} 
\usepackage{endnotes} \usepackage{enumerate} \usepackage{rotating} \usepackage{subfigure}
\usepackage{float} \usepackage{mathtools} \usepackage{booktabs} \usepackage{tabularx} \usepackage{multirow} \usepackage{caption} \usepackage{amsmath} \usepackage{graphicx} \usepackage{enumitem} \usepackage{multirow} \usepackage{makecell} \usepackage{comment} \usepackage{lipsum} \usepackage{threeparttable} \usepackage{hhline} \usepackage{algorithm} \usepackage{algpseudocode} \usepackage{mathrsfs} \usepackage{physics} \usepackage{bbm} \usepackage{svg} \usepackage{bm} \usepackage{mathdots} \usepackage{tikz}
\graphicspath{{./Fig/}} \usetikzlibrary{arrows.meta,positioning,fit,calc} \newcommand{\qedsymbol}{$\square$} \allowdisplaybreaks \newcommand{\supp}{\operatorname{supp}}

\definecolor{DarkBlue}{rgb}{0,0.08,0.45} \definecolor{Navy}{rgb}{0,0,0.5} \definecolor{Salmon}{rgb}{0.95,0.38,0.32} \definecolor{Teal}{rgb}{0,0.45,0.45} \usepackage[colorlinks=true,breaklinks=true,bookmarks=true,urlcolor=DarkBlue,citecolor=DarkBlue,linkcolor=DarkBlue,bookmarksopen=false,draft=false]{hyperref} \usepackage[noabbrev, nameinlink, capitalize]{cleveref} \crefname{assumption}{Assumption}{Assumptions} \crefname{lemma}{Lemma}{Lemmas} \crefname{theorem}{Theorem}{Theorems} \crefname{corollary}{Corollary}{Corollaries} \crefname{proposition}{Proposition}{Propositions} \crefname{claim}{Claim}{Claims} \crefname{algorithm}{Algorithm}{Algorithms} \crefname{figure}{Figure}{Figures} \crefname{remark}{Remark}{Remarks} \crefname{section}{Section}{Sections} \crefname{procedure}{Procedure}{Procedures} \crefname{definition}{Definition}{Definitions} \crefname{example}{Example}{Examples} \crefname{table}{Table}{Tables} \crefname{equation}{}{}

\TheoremsNumberedThrough \ECRepeatTheorems \EquationsNumberedThrough

\newcommand{\E}{\mathbb{E}} \newcommand{\Pbb}{\mathbb{P}}            \newcommand{\OPT}{\mathrm{OPT}} \newcommand{\Reg}{\mathrm{Reg}} \newcommand{\ALG}{\mathrm{ALG}} 

\newcommand{\policyname}[1]{#1\xspace}
\newcommand{\polStationary}{\policyname{Stationary}}
\newcommand{\polBayesian}{\policyname{Bayesian}}
\newcommand{\polRobust}{\policyname{Robust}}
\newcommand{\polTrendAware}{\policyname{Trend-aware}}
\newcommand{\polBuyLast}{\policyname{Buy Last}}
\newcommand{\polBuyNow}{\policyname{Buy Now}}
\newcommand{\polHistMin}{\policyname{Historical Min}}
\newcommand{\polHistMean}{\policyname{Historical Mean}}
\newcommand{\polHistMed}{\policyname{Historical Median}}
\newcommand{\polSelector}{\policyname{Heuristic Selector}}

\MANUSCRIPTNO{}

\DeclareFontShape{OT1}{cmr}{bx}{sc}{<->ssub * cmr/bx/n}{} \emergencystretch=2em \hbadness=10000 \hfuzz=30pt

\makeatletter
\newcommand{\ECBibHyperlinks}{%
  \def\hyper@natlinkstart##1{%
    \Hy@backout{EC.##1}%
    \hyper@linkstart{cite}{cite.EC.##1}%
    \def\hyper@nat@current{EC.##1}%
  }%
  \def\hyper@natlinkbreak##1##2{%
    \hyper@linkend##1\hyper@linkstart{cite}{cite.EC.##2}%
  }%
  \def\hyper@natanchorstart##1{%
    \Hy@raisedlink{\hyper@anchorstart{cite.EC.##1}}%
  }%
}
\makeatother

\defaultbibliographystyle{informs2014}
\defaultbibliography{reference}
\newcommand{\PutSingleSpacedBib}{%
  \begingroup
  \OneAndAHalfSpacedXI
  \putbib
  \endgroup
}

\begin{document}
\begin{bibunit}


\RUNTITLE{Strategic Buying Agents}

\TITLE{\Large Strategic Buying Agents}

 \ARTICLEAUTHORS{%
 \AUTHOR{Mingyang Fu} \AFF{Rotman School of Management, University of Toronto, \EMAIL{mingyang.fu@utoronto.ca}} \AUTHOR{Ming Hu} \AFF{Rotman School of Management, University of Toronto, \EMAIL{ming.hu@rotman.utoronto.ca}} }

\ABSTRACT{%
The emergence of agentic AI is shifting online shopping from search and recommendation toward delegated purchasing by autonomous buying agents that monitor markets, reason about uncertainty, and make purchase decisions on behalf of consumers. We study the design of strategic buying agents that decide when to purchase an item within a finite shopping window. The central challenge is to translate price observations, the shopping window, and information about future price changes into a purchase policy. We formulate this problem across three information regimes: stationary, Bayesian, and robust, and use the resulting optimal policies as a policy menu for implementation. In the stationary regime, price adjustments follow an exogenously specified Poisson arrival process, and post-adjustment prices are drawn from a known stationary distribution. We show that the optimal policy is a dynamic purchase-threshold policy, with the threshold characterized by an ordinary differential equation. In the Bayesian regime, the adjustment intensity is assumed known, but the price-adjustment distribution is uncertain. We show that the optimal rule remains threshold-based, with the threshold depending on posterior beliefs. We also bound the value of knowing the true price-adjustment distribution. In the robust regime, the agent relies only on price bounds and seeks worst-case protection. We show that randomized threshold policies yield optimal guarantees for both competitive ratio and minimax regret. Finally, we evaluate our proposed policies using Amazon product price histories tracked by Keepa, comprising 367 items and 48,933 time-stamped price observations. We also examine how these policies can be incorporated into language-model buying agents. On our test instances, the stationary and Bayesian policies perform competitively in terms of mean normalized consumer surplus despite their stylized assumptions, while the robust policy performs best at the 10th percentile of the normalized surplus distribution. The results also suggest that language models are better suited to choosing among the three information regimes and selecting price samples for calibration than to making buy-or-wait decisions directly.
}%


\maketitle

\section{Introduction}\label{sec:intro_buying}
Online shopping platforms have historically treated human consumers as the primary decision makers. Consumers search, compare products, evaluate prices, and complete checkout, while platforms provide search tools, recommendations, and payment systems. Recent developments in agentic commerce suggest a different mode of interaction, in which consumers may delegate parts of the shopping process to AI agents that monitor products, compare prices, and execute transactions subject to user-specified conditions \citep{OddLotsCollison2026, OpenAIACP2025, StripeACP2025, StripeACS2025, StripeLinkAgents2026, StripeGuide2026}. More broadly, deployed web-agent systems show that agents can already navigate interfaces and execute user-directed actions \citep{openai2025operator,openai2025chatgptagent}.

Together, these AI technologies make delegated shopping feasible. A broad literature on online commerce has generated rich insights into search, recommendation, and consumer choice. Recent work on LLM-enabled operational decision support further shows how language models can structure decision inputs, invoke analytical tools, and communicate recommendations \citep{simchilevi2025llmsupplychaindecisions,baek2026ai}. These advances provide important functional ingredients for shopping agents: an agent may be able to browse product pages, monitor price adjustments, interpret a consumer's request, update its assessment of the shopping task, and complete checkout through the payment infrastructure. Yet they do not by themselves determine how the agent should act after observing the current price. We study this buying-policy problem for an automated buyer facing uncertainty about future price changes.


We focus on a basic delegated-purchase decision: when to buy a single item within a finite shopping window. This decision arises naturally in e-commerce. A consumer may need to make a purchase before a trip, a gift occasion, or a sale expiration date. During the shopping window, the posted price can fluctuate. Thus, the agent’s problem is to decide whether the current price is sufficiently attractive relative to the option value of waiting. Figure~\ref{fig:intro_speaker_example} illustrates the type of shopping instance that motivates our problem. In the example, the consumer asks for a speaker, states a budget and a deadline in natural language, and the agent observes the speaker's product page together with its price history.

\begin{figure}[!htbp]
\centering
\caption{Illustrative shopping instance with product information, price history, and consumer request.}
\label{fig:intro_speaker_example}
\includegraphics[width=0.85\linewidth]{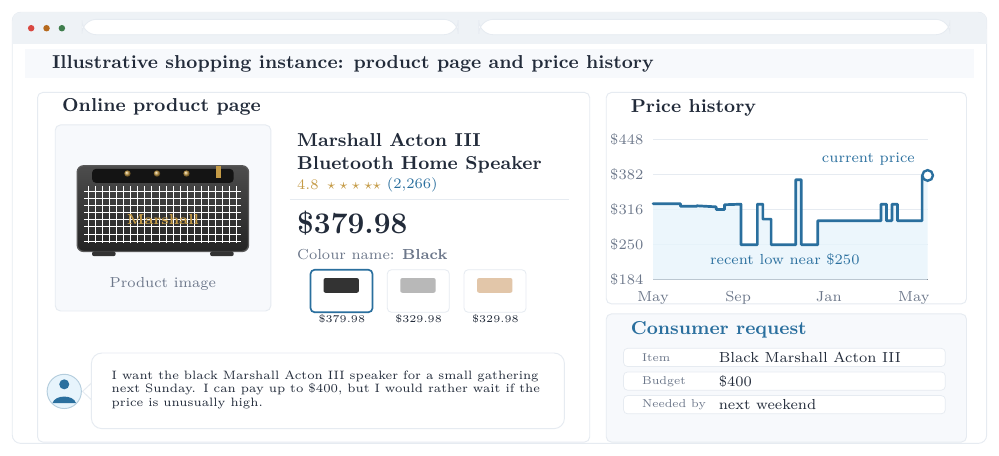}
\end{figure}

Seller-side theoretical models typically treat buyers' strategic behavior as given, with sellers responding under stylized informational assumptions. In many such models, for instance, the seller commits to a deterministic, preannounced price trajectory that buyers treat as known. A buyer-side agent operating in a real marketplace instead faces an online decision problem. It observes historical prices and incoming price adjustments during the shopping window, but does not know the future price path. The design problem is to construct a purchase policy. At each decision time, the agent maps the consumer's valuation, the current price, the remaining time, and its information or belief about future price changes into a buy-or-wait decision. Figure~\ref{fig:intro_agent_workflow} illustrates how we decompose this design problem into two layers: formulation and optimization. The formulation layer converts the shopping task and observed market information into a specific decision problem, while the optimization layer solves this problem to obtain an implementable buy-or-wait rule. 

\begin{figure}[!htbp]
\centering
\caption{Purchase-timing decision process for a buying agent. A shopping task provides the item, its value, and a finite window. The decision process maps price information, time remaining, and regime-specific information or belief into a buy-or-wait decision, with new observations updating the relevant state before subsequent decisions.}
\label{fig:intro_agent_workflow}
\includegraphics[width=0.75\linewidth,height=0.45\textheight,keepaspectratio]{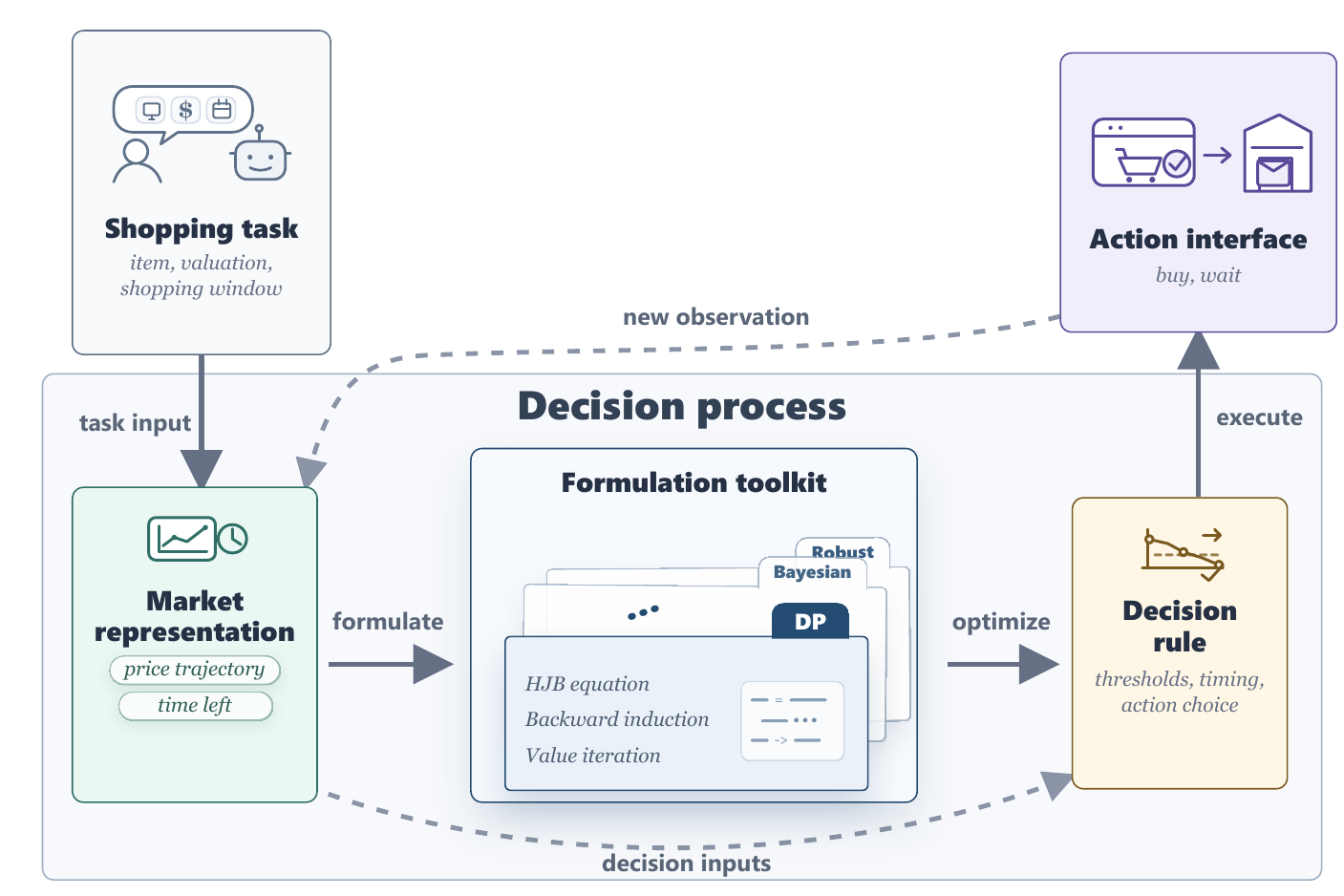}
\end{figure}

We construct the formulation layer of the agent-design problem based on the information the agent can rely on about future price changes. Rather than imposing a single model, we study three representative formulations under distinct informational assumptions. In implementation, we use a heuristic selector or LLM to decide which model to use given the available information. The first regime establishes a benchmark in which the agent believes the seller's pricing behavior is stable and can be calibrated using historical data. Specifically, the agent trusts that price adjustments arrive at a constant rate and that each post-adjustment price is drawn from a stationary distribution fully known in advance. In this environment, the optimal buying rule is a dynamic threshold: the agent purchases when the current price falls to or below a threshold that depends on the remaining time. We show that the purchase threshold is discontinuous at the deadline: immediately before the deadline, it can lie below the consumer's valuation, whereas at the deadline, the agent purchases whenever the posted price does not exceed that valuation. The threshold characterization also identifies the policy's operational levers: longer windows and more frequent adjustments make the agent more selective, while less attractive price-adjustment distributions and higher consumer valuations make earlier purchases more likely. A simple extension in \Cref{subsec:trend_stationary} allows the post-adjustment price distribution to shift along a known deterministic trend. The optimal policy remains a threshold rule, with the trend raising or lowering the purchase threshold through the expected future price adjustments.

The second regime considers settings in which the agent still anticipates random price adjustments at a constant arrival rate and believes the prices are drawn from a stable but unknown distribution. Facing this uncertainty, the agent begins with a prior using historical data and updates its belief as prices change during the shopping window. The optimal rule remains threshold-based, but the threshold now depends on the posterior belief and the time left until the deadline. Each price adjustment, therefore, plays two roles: it creates a new purchase opportunity and signals the seller's latent pricing regime. A high observed price reduces the current surplus and makes future price opportunities appear less favorable, thereby raising the subsequent purchase threshold through belief updating. Beyond the form of the rule, the Bayesian regime asks when distributional uncertainty materially changes the buying decision. Very short windows leave little opportunity for price adjustments and belief updating, and long windows give the agent enough time to learn the latent regime. 
The difficult cases are intermediate windows, where the shopping window is too short for the agent to identify the underlying price-adjustment distribution, yet long enough for stochastic price fluctuations to noticeably affect the optimal surplus.

The third regime considers settings in which the agent cannot rely on a stable probabilistic model of price changes. Instead, the agent knows only bounds on feasible prices, which can be observed from historical data, and seeks worst-case protection. The worst-case price path resembles a flash sale: the price can decline gradually, making delay attractive, and then jump back to a high level when the sale ends at an unknown time. This tension points to the use of randomized threshold policies as a robustly optimal response. The agent draws a purchase threshold at the beginning of the window and makes a purchase the first time the observed price falls to or below the threshold, with the threshold distribution calibrated to protect against adverse paths. We characterize optimal guarantees for both the competitive ratio and minimax regret. Minimax regret is useful when the initial price yields no positive surplus, because the competitive ratio collapses to zero in this case. 

Table~\ref{tab:model-comparison} summarizes the three information regimes. Throughout the paper, we refer to the stopping rules induced by our three formulations as operations research (OR) policies. This terminology separates these OR policies from purchasing policies that may also generate purchasing decisions but are not derived from solving an explicit OR model.

\begin{table}[!htbp]
\centering
\caption{Information Regimes and Purchase Policies}
\label{tab:model-comparison}
\small
\begin{tabularx}{0.96\textwidth}{@{}>{\raggedright\arraybackslash}p{0.29\textwidth}>{\raggedright\arraybackslash}X>{\raggedright\arraybackslash}p{0.22\textwidth}@{}}
\toprule
Model & Information regime & Purchase policy \\
\midrule
Stationary & \makecell[l]{Known adjustment intensity;\\ known price-adjustment distribution} & Time-dependent threshold \\
\midrule
Bayesian & \makecell[l]{Known adjustment intensity;\\ unknown price-adjustment distribution} & Belief-dependent threshold \\
\midrule
Robust & \makecell[l]{Arbitrary adjustments;\\ price bounds only} & Randomized threshold \\
\bottomrule
\end{tabularx}
\end{table}

Finally, we evaluate our stopping rules using tracked Amazon product price histories covering 367 items and 48,933 time-stamped posted-price observations. The evaluation compares several classes of methods. Fixed OR policies apply a given stopping rule based on our models with parameters estimated from pre-window price history. A heuristic selection method, referred to as the Heuristic Selector policy, chooses among these OR policies using diagnostic screens computed from the pre-window price history and then applies the selected OR policy to make the stopping decision. The LLM-OR agent uses the language model to select the policy and primitive inputs, while the selected OR policy makes the stopping decision. The remaining methods do not use OR policies. The LLM-baseline specifies whether to buy the item now or at a target purchase time, and this query is repeated after each price adjustment detected by an automated monitoring program. Other simple baselines include purchase policies that buy the item immediately or only at the deadline, and statistical-threshold rules that make a purchase before the deadline if the price falls to or below the pre-window minimum, mean, or median, with terminal purchases when the price is no more than the consumer's valuation. The evaluation shows that stationary and Bayesian policies perform favorably relative to simple baselines in terms of mean normalized consumer surplus,  while the robust policy performs best at the 10th percentile of the normalized surplus distribution. Moreover, language models are more effective at selecting, calibrating, and explaining our OR policies than at making purchase-timing decisions directly.

\section{Literature Review}

First, our paper relates to the literature on digital marketplaces and automated buying agents. Early work on agent-mediated e-commerce, such as \cite{maes1999agents} and \cite{he2003agent}, envisioned software agents that represent buyers and sellers, automate stages of the consumer buying process, and negotiate subject to user-specified constraints such as reservation prices and deadlines. Digital marketplaces and comparison shopping technologies reduce buyer search costs \citep{bakos1997reducing}, affect price competition, and reshape the information environment in which consumers compare products, sellers, and prices \citep{brynjolfsson2000frictionless,kephart2002shopbot}. Recent language models and autonomous agent systems extend the idea of delegated online shopping from information retrieval toward agents that can navigate interfaces \citep{yao2022webshop}, reason over instructions, and take actions on behalf of users \citep{yao2023react}. \citet{weber2020algorithms} analyze business models for highly autonomous consumer buying agents, emphasizing how delegation can reshape the roles of consumers, platforms, and intermediaries. \citet{allouah2025aiagentbuying} evaluate AI shopping agents in controlled e-commerce environments and document choice homogeneity and strong model dependence in product selection. Complementary work studies the preference-acquisition side of agentic shopping: \citet{cao2026solicit} study a problem of first soliciting consumer preferences through conversation and then recommending products based on learned preferences, while \citet{kumar2026conversational} examine how much a conversational recommender should converse when elicitation improves matching but imposes communication costs. We build on this stream by focusing on a consumer-side operational decision that arises after observing the price path: the buyer-side agent's intertemporal purchase policy.

Our paper also belongs to the research stream on forward-looking consumers and dynamic purchase behavior. In marketing and economics, consumers may time purchases in response to sales and inventory considerations \citep{hendel2006measuring}, as well as current and future prices and product availability \citep{nair2007intertemporal}. \citet{assuncao1993rational} derive a rational purchase and consumption policy under uncertain future promotions, showing how price expectations and inventory considerations shape intertemporal purchasing. In operations management models of strategic consumers, sellers anticipate that buyers may purchase the item immediately, wait for lower prices, or exit the market. \citet{su2007intertemporal} provides a formulation of seller pricing with strategic consumers, showing how heterogeneity in consumer valuations and patience shapes the seller's optimal price path. Subsequent research examines more specific seller-side levers such as announced discount commitments \citep{aviv2008optimal}, capacity rationing \citep{liu2008strategic}, and quick response \citep{cachon2009purchasing}. More recent work emphasizes uncertainty and learning in these dynamic pricing environments. \citet{moon2017randomized} study randomized markdowns in e-commerce environments with heterogeneous consumer costs of monitoring prices and availability. A learning-oriented counterpart is \citet{birge2025markdown}, who analyze markdown policies when the seller learns demand while facing forward-looking customers. Other seller-side studies examine markdown pricing with an unknown fraction of strategic consumers \citep{mersereau2012markdown} and randomized promotions as an intertemporal price-discrimination mechanism \citep{chen2023randomized}. We share with this literature the premise that consumers are forward-looking, but shift the focus from seller-side policies to consumer-side decision policies. 

Our modeling of posted prices builds on the literature on price stickiness and dynamic price adjustment. Macroeconomic models of sticky prices \citep{golosov2007menu} and empirical work using retail price microdata \citep{nakamura2008five, cavallo2018scraped} have documented that posted prices often remain fixed for a while before being adjusted. Online retailing also features algorithmic repricing and heterogeneous price-adjustment behavior among sellers \citep{chen2016empirical}. Related dynamic-pricing settings study pricing policies with limited opportunities for adjustment \citep{chen2016realtime} and the welfare effects of price adjustments in airline markets \citep{williams2022dynamic}. In the stationary and Bayesian information regimes, our modeling approach is closest in spirit to \citet{calvo1983staggered}: prices remain fixed until random adjustment opportunities arise, after which a new posted price is generated. We use this representation from the buyer's perspective as an exogenous price environment rather than as a structural model of seller behavior. 

Methodologically, our OR policies are close to the literature on optimal stopping. One classical branch studies timing decisions when the payoff-relevant state follows a Markov process; the decision maker stops when the process crosses an endogenous threshold. This branch underlies real-options and investment-timing models \citep{mcdonald1986value, peskir2006optimal}. A second branch examines sequential search using reservation rules to evaluate randomly arriving alternatives, commonly applied to job search \citep{mccall1970economics, lippman1976economicsii}, price search \citep{rothschild1974searching}, and asset selling \citep{rosenfield1983optimal}. Our OR policies adopt this threshold logic in a buyer-side posted-price setting. Our focus is on how the purchase threshold depends on the agent’s information, with its form varying according to the agent’s model of future price changes and the resulting decision problem.

Finally, the paper relates to learning and robust online decision-making. Our Bayesian policy relates to learning in dynamic pricing. 
Much of this literature studies how sellers set prices while learning an unknown demand function or demand model \citep{besbes2009dynamic, keskin2014dynamic}. Bayesian formulations model demand uncertainty through priors over demand environments and update beliefs as sales observations arrive \citep{harrison2012bayesian}. The literature also considers richer intertemporal features, including discounted objectives \citep{feng2024dynamic} and cyclic pricing with patient customers \citep{zhang2022online}. For a broader review of dynamic pricing and learning, see \citet{denboer2015dynamic}.
Our Bayesian model instead studies passive learning by a buying agent who observes market-generated price adjustments. Our robust policy relates to online algorithms and competitive analysis \citep{borodin1998online}, in which policies are evaluated on uncertain or adversarial input sequences \citep{karp1990optimal,mehta2007adwords}. In the context of online algorithms, \citet{elyaniv2001optimal} is methodologically close to this paper. They study a one-way trading problem in which a trader observes prices sequentially and must act without knowing future realizations. This framework is close to our robust setting, where the agent relies on price bounds rather than a full probabilistic model. 
\section{Purchase-Timing Environment}\label{sec:environment}
Before introducing particular models of price evolution, we specify the common operating environment for the buying agent. The purchase-timing problem takes the item, valuation, and deadline as exogenous task inputs. The models in \Cref{sec:stationary,sec:bayesian,sec:robust} differ in the information the agent can use about future price changes, but share the same operating interface.

The agent has monitoring and transaction capabilities. It observes the posted price process sequentially throughout the shopping window and, at any time before the deadline, can either purchase at the currently posted price or continue monitoring. We abstract from monitoring and checkout delays, so a purchase decision is implemented at the price observed when the decision is made. A purchase terminates the task, and the agent's action does not affect the future price process. Thus, an implementable buying policy must be adapted to the price history, elapsed time, and the information available under the relevant regime.

\subsubsection*{Decision primitives.} 
The shopping window has a length \(T\). We use elapsed time \(t\in[0,T]\), where \(t=0\) is the beginning of the shopping window and \(t=T\) is the deadline. The consumer has valuation \(v\) for the item. If the agent purchases at price \(p\), the consumer receives surplus $(v-p)^+$. If the agent exits the shopping window without making a purchase, the payoff is zero. This formulation isolates the timing decision. The agent's task is to decide when the current price is good enough to buy.

At any decision point, the relevant state consists of the current price, the time remaining until the deadline, and the regime-specific information or belief about future prices. The current price determines the payoff from buying immediately, the remaining time determines the opportunity to wait, and the information regime determines how the agent evaluates that waiting option. A buying policy maps this state into a buy-or-wait decision.

\subsubsection*{Information regimes.} 
We organize the design problem around three informational regimes. Each regime specifies a different information structure that the agent can use to model future price changes. First, in the stationary price-adjustment regime, the agent knows both the adjustment intensity and the price-adjustment distribution. This regime serves as the most structured benchmark for computing the expected continuation value of waiting. 

Second, in the Bayesian learning regime, the agent retains the assumption of the Poisson arrival for price adjustments but does not know the price-adjustment distribution. The agent begins with a prior over possible price environments and updates its belief as price changes are observed. 

Third, in the robust regime, the agent does not rely on a probabilistic model of price evolution. Instead, it knows only bounds on feasible prices and seeks protection against worst-case paths. This regime is appropriate when historical data are sparse, price dynamics are unstable, or model calibration is not credible. 

These regimes form a hierarchy of informational assumptions. The stationary regime represents the most structured case in which the agent can evaluate future opportunities through a specified stochastic model. The Bayesian model relaxes this structure by treating the price-adjustment distribution as unknown, so new price observations update the agent's belief about the price environment. The robust model imposes the least structure and replaces expected optimality with worst-case performance guarantees.


\section{Buying in a Stationary Price-Adjustment Environment}\label{sec:stationary}
 
This section studies a stationary stochastic environment for seller price adjustments. The agent observes the seller's current posted price, while future prices remain stochastic and outside the agent's control. The current price remains available until the seller revises it, and the agent must decide whether to purchase now or continue monitoring for a more favorable price before the deadline. This OR model reduces the monitoring problem to computing a threshold policy once the adjustment intensity, price-adjustment distribution, valuation, and deadline are specified.

\subsection{Stationary Price-Adjustment Model}
The central primitive is a stationary price process. The posted price remains fixed until an adjustment opportunity arrives. Adjustments arrive according to a homogeneous Poisson process with rate $\lambda$, and each adjustment redraws the posted price from a stationary distribution $H$. This follows the Calvo formulation of price stickiness, where price adjustment opportunities arise randomly and are memoryless \citep{calvo1983staggered}.

Let $\mathcal P\subseteq\mathbb R_+$ denote the feasible price space equipped with its Borel sigma-field $\mathcal B(\mathcal P)$. Write \(P_t\) for the posted price at elapsed time \(t\), so \(P_0\) is the window-start price and \(P_T\) is the deadline price.

\begin{assumption}\label{asm:model}
The price path is generated by the following primitives.
\begin{enumerate}[label=(\roman*)]
  \item\label{asm:poisson}
\textbf{Poisson adjustment.} Adjustment opportunities arrive over elapsed time according to a homogeneous Poisson process $\{N_r:r\ge0\}$ with rate $\lambda\in(0,\infty)$. 

  \item\label{asm:kernel}
\textbf{Stationary price-adjustment distribution.} At each adjustment time, the post-adjustment price is drawn independently from a time-invariant distribution $H$ on $\mathcal P$. 

  \item\label{asm:path}
\textbf{Price stickiness.} Given a window-start price $P_0$, elapsed adjustment times $\{r_n\}_{n\ge1}$, and post-adjustment prices $\{Y_n\}_{n\ge1}$, the posted price is piecewise constant:
    \[
        P_t =
        \begin{cases}
            P_0, & 0\le t<r_1,\\
            Y_n, & r_n\le t<r_{n+1},\quad n\ge1.
        \end{cases}
    \]
\end{enumerate}
\end{assumption}

The distribution $H$ is the agent's predictive distribution for the next post-adjustment price; throughout the paper, we refer to $H$ as the price-adjustment distribution. This framework allows the agent to aggregate heterogeneous adjustment triggers, such as competitor promotions and stochastic demand shocks, without requiring the agent to identify the specific drivers behind each price change. Formally, we represent each unique category of adjustment triggers as an event $e\in E$ and assume events arrive according to a memoryless point process with relative intensities $\mu(\dd e)$. Conditioned on the occurrence of event $e$, the seller's response is captured by an event-specific price-adjustment distribution $H_e$. The composite predictive distribution is $H(A)=\int_E H_e(A)\,\mu(\dd e)$, $A\in\mathcal B(\mathcal P)$. Since the agent's stopping problem depends on future adjustments only through the aggregate arrival rate and the predictive price-adjustment distribution, we adopt $(\lambda, H)$ as the environment's primitives. Throughout this section, we restrict our focus to nondegenerate buying environments in which the price-adjustment distribution can generate a price no higher than the consumer's valuation, i.e., $\inf\supp(H)\le v$. Otherwise, all future post-adjustment prices exceed the valuation, so waiting is never beneficial, and the problem is degenerate.

\subsection{Buying-Agent Problem}

At elapsed time $t\in[0,T]$, suppose the current posted price is $P_t=p$. Let $\mathcal{T}_t$ denote the set of admissible stopping times taking values in $[t,T]$. Since both the price-adjustment arrivals and the price-adjustment distribution are stationary, elapsed time affects the continuation problem only through the time left until the deadline, $T-t$. Thus, the agent's value function can be written as 
\begin{equation}\label{eq:value}
  V(T-t,p)
  :=
  \sup_{\tau\in\mathcal{T}_t}
  \mathbb{E}\left[(v-P_{\tau})^+\,\Big|\,P_t=p\right],
\end{equation}
where $P_{\tau}$ denotes the posted price at the purchase time. If the agent does not stop before the deadline, the terminal payoff is $(v-P_{T})^+$.

\subsubsection{Optimal Buying Policy.}\label{sec:optimal}

Denote the time-to-go $T-t$ by $s$ for convenience. The value function~\eqref{eq:value} satisfies the finite-horizon optimal stopping variational inequality
\begin{equation}\label{eq:VI}
  \max\left\{
    (v-p)^+ - V(s,p),\;\;
    -\partial_s V(s,p) + \lambda\,\mathbb{E}_H\left[V(s,P)\right] - \lambda\,V(s,p)
  \right\}
  = 0,
  \qquad s\in(0,T],
\end{equation}
with terminal condition $V(0,p)=(v-p)^+$. The stopping set is $\mathcal{S}:=\bigl\{(s,p): V(s,p)=(v-p)^+\bigr\}$ and the continuation set is $\mathcal{W}:=\bigl\{(s,p): V(s,p)>(v-p)^+\bigr\}$.

\begin{lemma}\label{lemma:mono_V}
Under \Cref{asm:model}, the value function $V(s,p)$ is non-increasing and 1-Lipschitz in $p$ and non-decreasing and continuous in $s$.
\end{lemma}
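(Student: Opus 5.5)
We argue directly from the stopping-time representation~\eqref{eq:value} rather than from the variational inequality~\eqref{eq:VI}, using a coupling of price paths that differ only in the window-start price. Fix time-to-go $s$ and two initial prices $p\le p'$. Under \Cref{asm:model}, construct both paths on one probability space with the same Poisson arrival times $\{r_n\}$ and the same post-adjustment draws $\{Y_n\}$. The two paths $P^{p}$ and $P^{p'}$ then coincide from the first adjustment time $r_1$ onward. On $[0,r_1)$ they differ by exactly $p'-p$. The filtration generated by each path is contained in the filtration generated by $(N,\{Y_n\})$, since the initial price is a known constant. We therefore take admissible stopping times with respect to this common filtration. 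Any stopping time $\tau$ is then admissible for both problems, and enlarging the filtration does not change the value, since the extra information is independent of nothing relevant.

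\textbf{Monotonicity and Lipschitz in $p$.} For any common $\tau$, pathwise $P^{p}_\tau\le P^{p'}_\tau$ and $P^{p'}_\tau-P^{p}_\tau\le p'-p$. The map $x\mapsto (v-x)^+$ is non-increasing and 1-Lipschitz. Hence
\[
0\le \E\bigl[(v-P^{p}_\tau)^+\bigr]-\E\bigl[(v-P^{p'}_\tau)^+\bigr]\le p'-p .
\]
Taking the supremum over $\tau$ on both sides gives $0\le V(s,p)-V(s,p')\le p'-p$. This uses the elementary fact that $|\sup f-\sup g|\le\sup|f-g|$, together with the one-sided version for monotonicity.

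\textbf{Monotonicity in $s$.} Let $s<s'$. Any stopping time for horizon $s$ is also feasible for horizon $s'$. Stationarity makes the law of the path on $[0,s]$ the same in both problems, so $V(s,p)\le V(s',p)$.

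\textbf{Continuity in $s$ (main obstacle).} We need an upper bound on $V(s',p)-V(s,p)$ as $s'\downarrow s$ or $s\uparrow s'$. Given an optimal (or $\epsilon$-optimal) $\tau'$ for horizon $s'$, define the truncated rule $\tau=\tau'\wedge s$, which is feasible for horizon $s$. The payoffs of $\tau$ and $\tau'$ differ only on the event $\{\tau'>s\}$. Moreover, whenever at least one adjustment occurs in $(s,s']$, the payoff difference is bounded by $v$, since payoffs lie in $[0,v]$ (prices are nonnegative). If no adjustment occurs in $(s,s']$, the price is constant on $[s,s']$, so $P_{\tau'}=P_s=P_\tau$ and the payoffs are identical. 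Therefore
\[
V(s',p)-V(s,p)\le \epsilon + v\,\Pbb\bigl(N_{s'}-N_s\ge1\bigr)=\epsilon+v\bigl(1-e^{-\lambda(s'-s)}\bigr),
\]
which tends to $0$ as $s'-s\to 0$. Letting $\epsilon\downarrow0$ gives continuity in $s$, in fact locally Lipschitz with constant $\lambda v$. This also holds at $s=0$ with $V(0,p)=(v-p)^+$.

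The delicate point is the measurability and admissibility bookkeeping: all stopping times must live on the common filtration, and the truncation $\tau'\wedge s$ must be admissible. Both are standard facts about stopping times. Using $\epsilon$-optimal rules avoids needing the existence of an optimal stopping time.
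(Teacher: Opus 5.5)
Your proof is correct and is essentially the paper's. The paper uses the same coupling through shared adjustment times and post-adjustment draws: its rule $\tau_1$, which mimics an $\varepsilon$-optimal $\tau_2$, amounts to your single common stopping time, which also makes explicit the 1-Lipschitz bound the paper only asserts. It obtains monotonicity and continuity in $s$ by the same deadline truncation with the bound $v(1-e^{-\lambda(s'-s)})$.

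The filtration issue you raise is glossed over in the paper as well. State that the common filtration reveals $Y_n$ only at $r_n$. Then replace your garbled justification with this: the only extra information is adjustments that redraw the current price, which are conditionally independent of the future price path given the current price, so they act as external randomization that cannot raise the value.
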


The price monotonicity and Lipschitz property imply that the relative option value of waiting, \(W(s,p):=V(s,p)-(v-p)\), is non-decreasing in \(p\) on \([0,v]\). Hence, for each fixed \(s\), there exists a threshold $b(s)$ such that the stopping set takes the form \([0,b(s)]\).

\begin{proposition}[{\sc Optimal Purchase-Threshold Policy}]\label{prop:threshold}
Under \Cref{asm:model}, the following hold.
  \begin{enumerate}[label=(\roman*)]
    \item\label{prop:threshold_form}
There exists an optimal threshold function $b:[0,T]\to[0,v]$ such that, at each elapsed time $t\leq T$, the agent makes a purchase whenever the current price satisfies $P_t\le b(T-t)$ and waits otherwise.
    \item\label{prop:ODE}
Let $m(s):=\mathbb{E}_{P\sim H}\left[V(s,P)\right]$. For every $s>0$, the threshold satisfies $b(s)=v-m(s)$. Consequently, $b$ solves the ODE
      \begin{equation}\label{eq:ODE}
        b'(s)
        = -\lambda\,\mathbb{E}_{P\sim H}\left[(b(s)-P)^+\right],
      \end{equation}
for almost every $s>0$, with initial condition $\displaystyle\lim_{s\downarrow 0}b(s)=\mathbb{E}_H\left[\min\{v,P\}\right]$.
    \item\label{prop:mono_convex}
The purchase threshold $b(s)$ is non-increasing and convex on $(0,T]$.
  \end{enumerate}
\end{proposition}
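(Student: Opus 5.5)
Since adjustments arrive as a Poisson process and are independent of the current price, I would prove \Cref{prop:threshold} through a one-step (first-adjustment) decomposition rather than through the variational inequality directly. Fix time-to-go $s$ and current price $p$. Before the next adjustment the posted price stays at $p$. So if the agent does not buy at once, it gains nothing by buying later at the same price $p$ before an adjustment, because the option value of waiting only grows as more adjustment chances remain ($V$ is non-decreasing in $s$ by \Cref{lemma:mono_V}). Hence at each state the agent compares only two choices. It can stop and collect $(v-p)^+$, or it can wait either until the first adjustment, at time $r\sim\mathrm{Exp}(\lambda)$ if $r<s$, or until the deadline. Waiting is worth
\[
C(s,p)=\int_0^s \lambda e^{-\lambda r} m(s-r)\,\dd r + e^{-\lambda s}(v-p)^+ .
\]
Here $m(u)=\mathbb E_H[V(u,P)]$, since the price after an adjustment is a fresh draw from $H$ and is independent of $p$. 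Thus $V(s,p)=\max\{(v-p)^+,\,C(s,p)\}$. The key structural fact is that $C$ depends on $p$ only through the term $e^{-\lambda s}(v-p)^+$, whose coefficient is strictly less than one. Formalizing this dominance argument, which says that continuing at an unchanged price is never strictly better than either stopping now or holding until the next adjustment, is the step I expect to be the most delicate. I would handle it by a standard Markov/dynamic-programming argument on the embedded jump chain, combined with the monotonicity in $s$ from \Cref{lemma:mono_V}.

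\emph{Part (i) and $b=v-m$.} For $p\le v$, compare $v-p$ with $C$, i.e. require $(1-e^{-\lambda s})(v-p)\ge \int_0^s\lambda e^{-\lambda r}m(s-r)\dd r$. The left side is decreasing in $p$, so the stopping set is an interval $[0,b(s)]$; for $p>v$ waiting is weakly better (ties are irrelevant). To identify $b(s)$, note that $m$ is non-decreasing in $u$ by \Cref{lemma:mono_V}, so $\int_0^s\lambda e^{-\lambda r}m(s-r)\dd r\le (1-e^{-\lambda s})m(s)$. Consequently, if $v-p\ge m(s)$ then stopping is optimal. Conversely, suppose $v-p<m(s)$. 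Then the strategy ``wait for exactly the next adjustment, then act optimally'' strictly beats stopping for small $s$-increments: locally the generator term $\lambda(m(s)-(v-p))>0$ is positive, which by \eqref{eq:VI} forces $(s,p)\in\mathcal W$. Together these give $b(s)=v-m(s)$ for $s>0$. I would also invoke the monotonicity of $W$ noted after \Cref{lemma:mono_V} as a consistency check.

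\emph{ODE.} In the continuation region, $V$ solves $\partial_sV=\lambda(m(s)-V)$. In the stopping region, $V(s,p)=v-p$ for $p\le b(s)$. Integrating against $H$ and differentiating $m(s)=\mathbb E_H[V(s,P)]$ (Lipschitz in $s$, hence a.e. differentiable) gives
\[
m'(s)=\lambda\,\mathbb E_H\big[(m(s)-V(s,P))\mathbf 1\{P>b(s)\}\big]=\lambda\,\mathbb E_H\big[(m(s)-(v-P))^+\big],
\]
using that $V(s,P)=\max\{v-P,\cdot\}$ on the boundary structure and that $m-(v-P)>0$ exactly when $P>b$. Substituting $b=v-m$ yields $b'(s)=-\lambda\mathbb E_H[(b(s)-P)^+]$. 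For the initial condition, $V(s,\cdot)\to(v-\cdot)^+$ as $s\downarrow0$ by continuity in \Cref{lemma:mono_V}, so $m(0^+)=\mathbb E_H[(v-P)^+]$. Then $b(0^+)=v-\mathbb E_H[(v-P)^+]=\mathbb E_H[\min\{v,P\}]$. Note that this differs from the terminal rule at $s=0$, which is the discontinuity at the deadline.

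\emph{Part (iii).} Monotonicity is immediate from $b'=-\lambda\mathbb E[(b-P)^+]\le0$. For convexity, $g(x)=\mathbb E_H[(x-P)^+]$ is non-decreasing and convex in $x$, and $b$ is non-increasing. Therefore $b'(s)=-\lambda g(b(s))$ is non-decreasing in $s$, because $g(b(s))$ decreases as $b$ decreases. Since $b$ is absolutely continuous with a non-decreasing a.e. derivative, $b$ is convex on $(0,T]$.
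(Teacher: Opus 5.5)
Your first-adjustment decomposition $V(s,p)=\max\{(v-p)^+,C(s,p)\}$ is false, and the reason you give for it runs the wrong way. Along a path with no adjustment, the time-to-go shrinks, so the continuation value $m(s-w)$ \emph{falls}, and a price rejected now can become acceptable later. To see this, take $p\le v$ and consider the rule ``wait $w$, buy at $p$ if nothing has changed, otherwise continue optimally.'' It is worth $J_w=e^{-\lambda w}(v-p)+\int_0^w\lambda e^{-\lambda z}m(s-z)\,dz$, and $\partial_w J_w=\lambda e^{-\lambda w}\{m(s-w)-(v-p)\}$ switches from positive to negative. So whenever $m(0)<v-p<m(s)$, i.e.\ $b(s)<p<b(0^+)$, the maximizer is interior and $V(s,p)=\sup_w J_w>\max\{J_0,J_s\}=\max\{v-p,C(s,p)\}$. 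That interior wait is exactly the time-dependent threshold you are trying to prove. Consequently, your criterion $(1-e^{-\lambda s})(v-p)\ge\int_0^s\lambda e^{-\lambda r}m(s-r)\,dr$ has a cutoff weakly, and typically strictly, above $v-m(s)$. For $H=\mathrm{Unif}[0,1]$, $v=\lambda=1$, $s=10$, it accepts every $p$ up to about $0.155$, while $b(10)=1/7$. Your ``conversely'' step therefore contradicts your own Bellman equation rather than following from it. At such prices, waiting until the next adjustment is worse than stopping; what beats stopping is a short wait followed by purchase at the unchanged price.

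The repair is to use $V(s,p)=\sup_{w\in[0,s]}J_w$, the representation the paper uses for \Cref{prop:bayes_threshold}. Then $J_w-(v-p)=\int_0^w\lambda e^{-\lambda z}\{m(s-z)-(v-p)\}\,dz$. This is $\le 0$ for every $w$ when $v-p\ge m(s)$, by monotonicity of $m$, and $>0$ for small $w$ when $v-p<m(s)$, by continuity of $m$. Together these give $b(s)=v-m(s)$ without appealing to the complementarity conditions of \eqref{eq:VI}, as the paper does. The lower-interval form then follows as an intersection over $w$, or directly from the monotonicity of $W(s,\cdot)$, which is the paper's proof of part (i).

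The ODE step also fails as written. The second equality in your display for $m'(s)$ replaces $V(s,P)$ by $v-P$ on $\{P>b(s)\}$, where in fact $V(s,P)>v-P$. Moreover, $\mathbb{E}_H[(m(s)-(v-P))^+]=\mathbb{E}_H[(P-b(s))^+]$, which would give $b'=-\lambda\,\mathbb{E}_H[(P-b)^+]$, not \eqref{eq:ODE}; the final ``substituting $b=v-m$'' does not follow. The missing step is to use $\mathbb{E}_H[V(s,P)]=m(s)$ to move the computation onto the stopping set: $\mathbb{E}_H[(m-V)\mathbf{1}\{P>b\}]=\mathbb{E}_H[(V-m)\mathbf{1}\{P\le b\}]=\mathbb{E}_H[(b-P)^+]$, because $V(s,P)=v-P=m(s)+b(s)-P$ on $\{P\le b(s)\}$. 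This is the paper's computation. With these two repairs, your continuation equation $\partial_s V=\lambda(m-V)$, the initial condition, and part (iii) go through as in the paper.
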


At the deadline, the agent purchases if and only if $P_T\le v$, so the terminal threshold is $v$. Immediately before the deadline, however, the dynamic threshold has right limit $b(0^+)=\mathbb{E}_H[\min\{v,P\}]$. Hence, the optimal purchasing threshold is discontinuous at the deadline whenever $H$ places positive mass below~$v$. The jump arises because waiting briefly does not immediately forfeit the current price: if no adjustment occurs, the agent can still purchase at the unchanged price; if an adjustment occurs, the agent observes a new post-adjustment price draw. Thus, even an arbitrarily short positive horizon gives the agent a one-sided opportunity to obtain a lower price, lowering the pre-deadline threshold from $v$ to $b(0^+)$.

The limiting thresholds also clarify how simple static rules can be interpreted. A mean-price rule, which purchases when the current price is no larger than the average post-adjustment price $\mathbb E_H[P]$, reflects a short remaining time logic: it coincides with the pre-deadline limit when the valuation is high enough, $v\ge \sup\supp(H)$, in which case $b(0^+)=\mathbb E_H[P]$. At the other extreme, the ODE implies $\lim_{s\to\infty}b(s)=\inf\supp(H)$, so the lower-support rule captures the long remaining time logic: with many future adjustment opportunities, the agent can afford to wait for unusually favorable prices.


\subsubsection{Comparative Statics and Illustration.}

Having characterized the optimal policy for a fixed buying environment, we next ask how it changes when the primitives vary. This comparative-statics question is important for both implementation and interpretation: the arrival rate, the price-adjustment distribution, and the consumer valuation each enter the stopping rule through a distinct economic channel. To compare price-adjustment distributions without imposing a parametric family, we order them by first-order stochastic dominance (FSD, also known as the usual stochastic order): we write $H_1\preceq_{\mathrm{FSD}}H_2$ if $H_2$ first-order stochastically dominates $H_1$ \citep[cf.][]{hadar1969rules,bawa1975optimal,shaked2007stochastic}.


This order provides a nonparametric way to compare future price opportunities. A stochastically larger price-adjustment distribution makes low future prices less likely, while changes in $\lambda$ and $v$ alter the frequency of opportunities and the payoff from purchase. The next result demonstrates how these three primitives impact the optimal purchase threshold.

\begin{proposition}[{\sc Stationary Threshold Monotonicity}]\label{prop:stationary_monotonicity}
For every $s>0$, the optimal threshold $b(s; \lambda, H, v)$ satisfies the following monotonicity properties.
\begin{enumerate}[label=(\roman*)]
    \item $b(s; \lambda, H, v)$ is non-increasing in $\lambda$: if $\lambda_1 \le \lambda_2$, then $b(s; \lambda_2, H, v) \le b(s; \lambda_1, H, v)$.
    \item $b(s; \lambda, H, v)$ is non-decreasing (in the sense of $\preceq_{\mathrm{FSD}}$) in $H$: if $H_1 \preceq_{\mathrm{FSD}} H_2$, then $b(s; \lambda, H_1, v) \le b(s; \lambda, H_2, v)$.
    \item $b(s; \lambda, H, v)$ is non-decreasing in $v$: if $v_1 \le v_2$, then $b(s; \lambda, H, v_1) \le b(s; \lambda, H, v_2)$.
\end{enumerate}
\end{proposition}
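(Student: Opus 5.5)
We will work entirely with the ODE characterization in \Cref{prop:threshold}\ref{prop:ODE} and compare solutions through a standard ODE comparison argument. Write $G_H(x):=\mathbb{E}_{P\sim H}[(x-P)^+]$. This function is non-decreasing and 1-Lipschitz in $x$. Hence the right-hand side $F(x;\lambda,H)=-\lambda G_H(x)$ of \eqref{eq:ODE} is globally Lipschitz in $x$ with constant $\lambda$. By \Cref{prop:threshold}, $b$ is absolutely continuous on compact subintervals of $(0,T]$, and it extends continuously to $s=0$ with initial value $b(0^+)=\mathbb{E}_H[\min\{v,P\}]$. So each threshold is the unique (Carathéodory) solution of
\[
b'(s)=-\lambda G_H(b(s)), \qquad b(0)=\mathbb{E}_H[\min\{v,P\}].
\]
The general comparison principle we will use is the following. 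Suppose $b_1(0)\le b_2(0)$ and $F_1(x)\le F_2(x)$ for all $x$, with $F_1$ Lipschitz. Then $b_1\le b_2$ on $[0,T]$. To prove it, set $u=(b_1-b_2)^+$. Then $u(0)=0$, and on $\{u>0\}$ we have $u'=F_1(b_1)-F_2(b_2)\le F_1(b_1)-F_1(b_2)\le \lambda u$. Grönwall then gives $u\equiv 0$. The three parts of \Cref{prop:stationary_monotonicity} reduce to checking the two orderings in this principle.

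\emph{(i) Arrival rate.} The initial condition $\mathbb{E}_H[\min\{v,P\}]$ does not involve $\lambda$, so $b_1(0)=b_2(0)$. The threshold is bounded below by $\inf\supp(H)$: the constant $\underline p=\inf\supp(H)$ solves the ODE, and the comparison principle keeps $b$ above it. Since $G_H\ge 0$, we get $-\lambda_2G_H(x)\le-\lambda_1G_H(x)$ for every $x$ when $\lambda_1\le\lambda_2$. Applying the comparison principle with $F_1=-\lambda_2G_H$ (Lipschitz) gives $b(\cdot;\lambda_2)\le b(\cdot;\lambda_1)$.

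\emph{(ii) Price-adjustment distribution.} Let $H_1\preceq_{\mathrm{FSD}}H_2$. The map $P\mapsto\min\{v,P\}$ is non-decreasing, so $b^{H_1}(0)\le b^{H_2}(0)$. The map $P\mapsto(x-P)^+$ is non-increasing, so $G_{H_1}(x)\ge G_{H_2}(x)$, that is, $F_1=-\lambda G_{H_1}\le -\lambda G_{H_2}=F_2$. Both orderings in the comparison principle point the same way, so $b^{H_1}\le b^{H_2}$.

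\emph{(iii) Valuation.} The right-hand side $-\lambda G_H$ does not depend on $v$, and $\mathbb{E}_H[\min\{v,P\}]$ is non-decreasing in $v$. The two thresholds therefore solve the same autonomous Lipschitz ODE from ordered initial values, and solution uniqueness (trajectories cannot cross) gives $b(\cdot;v_1)\le b(\cdot;v_2)$. One caveat applies here: \Cref{prop:threshold} restricts attention to $\inf\supp(H)\le v$. If $v_1<\inf\supp(H)\le v_2$, the $v_1$-problem is degenerate and its threshold should be interpreted as $v_1$. In that case $v_1\le\inf\supp(H)\le b(s;v_2)$ by the lower bound noted in (i), so the inequality still holds.

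\emph{Main obstacle.} The main difficulty is rigor rather than structure. \Cref{prop:threshold} gives the ODE only almost everywhere and the initial condition only as a right limit. The comparison argument therefore needs $b$ to be absolutely continuous up to $0$, so that $u(s)=\int_0^s u'$ holds. We plan to obtain this from the identity $b(s)=v-m(s)$. Since $V$ is 1-Lipschitz in $p$ and continuous in $s$ (\Cref{lemma:mono_V}), $m$ is continuous. Integrating the ODE then shows $b(s)=b(0^+)-\lambda\int_0^s G_H(b(r))\,dr$, because the integrand is bounded by $\lambda v$. This integral form is exactly what the Grönwall step requires.
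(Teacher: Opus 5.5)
Your proposal is correct and follows essentially the paper's route, namely comparison for the ODE $b'=-\lambda G_H(b)$ with ordered right-limits $\mathbb{E}_H[\min\{v,P\}]$ and ordered vector fields; the only differences are that the paper proves (i) by the time change $b(s)=\beta(\lambda s)$, and it argues by first crossing where you use Gr\"{o}nwall on $(b_1-b_2)^+$, which avoids needing $b_1\ge b_2$ on a right-neighbourhood of the crossing point. One small repair: continuity plus a bounded a.e.\ derivative does not by itself justify integrating the ODE (the Cantor function is a counterexample), so take absolute continuity from the bound $0\le V(s_2,p)-V(s_1,p)\le v(1-e^{-\lambda(s_2-s_1)})$ in the proof of \Cref{lemma:mono_V}, which makes $m$, and hence $b=v-m$, $\lambda v$-Lipschitz on $(0,T]$.
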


The comparative statics highlight three distinct effects. First, a larger $\lambda$ makes the next price adjustment arrive sooner, reducing the cost of waiting. The agent, therefore, becomes more selective and stops only at a lower posted price. Second, a stochastically higher price-adjustment distribution $H$ makes waiting less attractive. The stopping threshold, therefore, increases. Third, a higher consumer valuation $v$ raises the stopping threshold. A larger $v$ weakly increases the value of future opportunities, but this effect is realized only in future states where the price is low enough to generate a positive surplus. For a current purchase opportunity that already yields a nonnegative surplus, a higher $v$ directly raises the surplus from stopping one-for-one, so the agent is willing to accept a higher price.

\begin{figure}[!htbp]
  \centering
  \caption{Time-dependent purchase threshold $b(T-t)$ as a function of the time $t$, together with two illustrative sample paths of the posted price.}
  \label{fig:threshold}
  \includegraphics[width=0.7\linewidth]{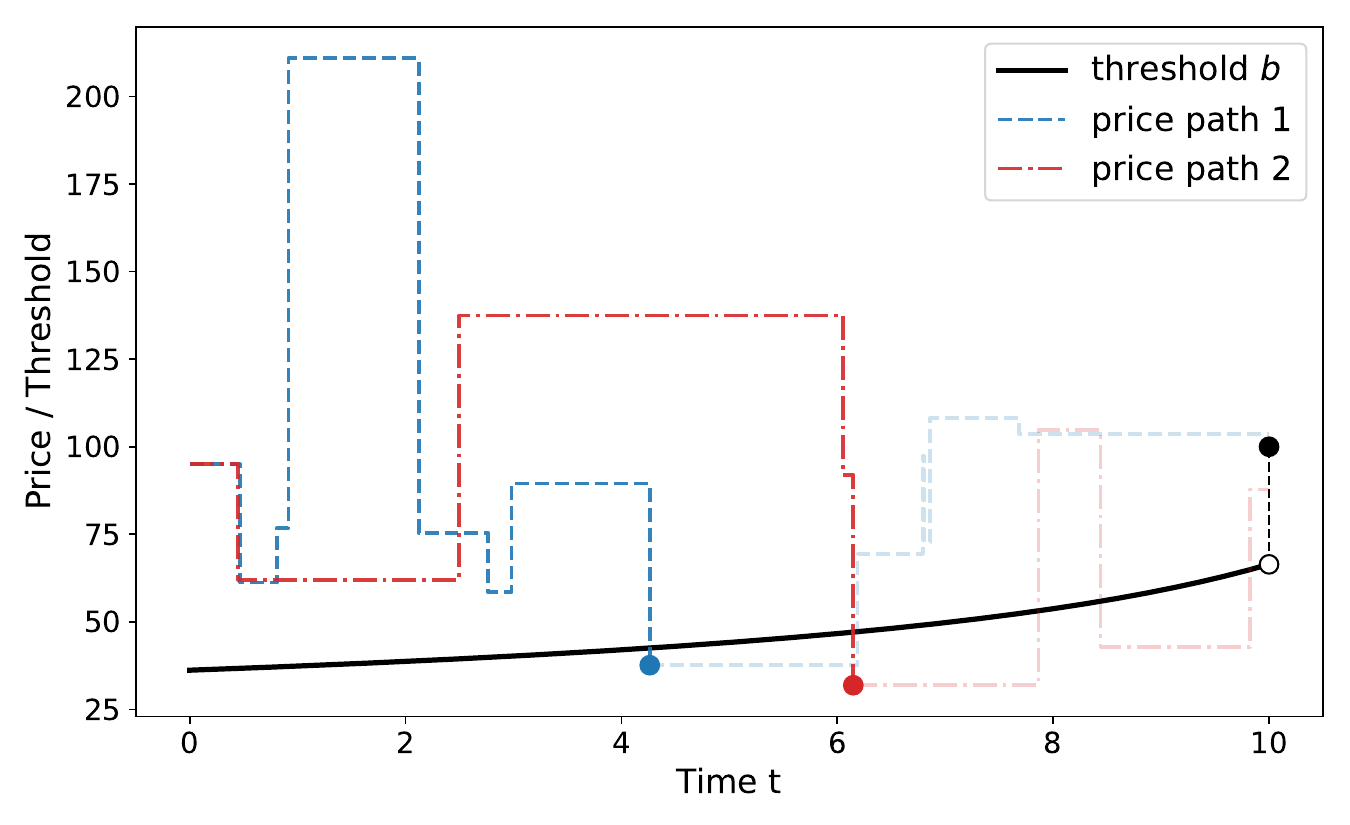}
\end{figure}

The following example illustrates the threshold dynamics in closed form.
\begin{example}[Uniform distribution]\label{example_1}
To illustrate the structure of the optimal policy, consider the case in which the effective price-adjustment distribution is uniform: $H=\mathrm{Unif}[p_-,p_+]$ with $0\le p_-<p_+$ and $p_-<v$. For a positive time-to-go $s$, on the region where $b(s)\in[p_-,p_+]$, the ODE~\eqref{eq:ODE} becomes
\[
  b'(s)
  = -\frac{\lambda}{p_+-p_-}\int_{p_-}^{b(s)}(b(s)-p)\,\dd p
  = -\frac{\lambda\,(b(s)-p_-)^2}{2(p_+-p_-)}.
\]
This is a separable ODE with a solution
\[
b(s)=p_-+\frac{2(p_+-p_-)\bigl(b(0^+)-p_-\bigr)}{2(p_+-p_-)+\lambda s\,\bigl(b(0^+)-p_-\bigr)},
\]
where $b(0^+)=\mathbb{E}_H[\min\{v,P\}]$. Since $b(0^+)\le p_+$ and $b$ is non-increasing with $\lim_{s\to\infty}b(s)=p_-$, the threshold remains in $[p_-,b(0^+)]\subseteq[p_-,p_+]$ for all $s>0$, so the closed form above applies globally on the positive-horizon branch. For $v\ge p_+$, this simplifies to $b(0^+)=(p_-+p_+)/2$ and
\[
  b(s)
  = p_- + \frac{2(p_+-p_-)}{4+\lambda s}.
\]
The right limit of the purchase threshold at short positive horizons is the midpoint of the support, and the threshold decays hyperbolically toward the lower endpoint~$p_-$ as the time-to-go $s$ grows. This closed-form solution illustrates the general properties established in \Cref{prop:threshold}: for $s>0$, $b$ is strictly decreasing and strictly convex, with $\lim_{s\to\infty}b(s)=p_-=\inf\operatorname{supp}(H)$.
\end{example}

\subsection{Extension: Time-Varying Price-Adjustment Distribution}
\label{subsec:trend_stationary}

The stationary benchmark above assumes that the price-adjustment distribution is time invariant. In some applications, however, part of the seller's price movement over the shopping window is predictable. Examples include life-cycle markdown schedules, pre-stockout price run-ups, and calendar-based promotional campaigns. A simple extension is therefore to let the price-adjustment distribution shift with elapsed time through a known deterministic trend.

In the trend-aware extension of the stationary benchmark, whenever a price adjustment occurs at elapsed time $t$, the post-adjustment price is \(g(t)+X\), $X\sim H_0$, where $g:[0,T]\to\mathbb{R}$ is a given deterministic trend and $H_0$ is a baseline price-adjustment distribution. Thus the effective price-adjustment distribution becomes time dependent: \(H_t(A)=H_0\bigl(\{x:g(t)+x\in A\}\bigr)\), $A\in\mathcal{B}(\mathcal{P})$. For the dynamic program, write $s:=T-t$. The buying problem remains an optimal stopping problem, but now future adjustment opportunities depend on the current elapsed time, as reflected in $H_{T-s}$.

The dynamic programming logic of \Cref{prop:threshold} still holds. Denoting by $V^{\mathrm{tr}}(s,p)$ the value function under the trend-aware dynamics, we have the variational inequality as follows:
\begin{equation}\label{eq:VI_trend}
\max\!\left\{
\begin{aligned}
& (v-p)^+ - V^{\mathrm{tr}}(s,p), \\
& -\partial_s V^{\mathrm{tr}}(s,p) + \lambda\,\mathbb{E}_{X\sim H_0}\!\bigl[V^{\mathrm{tr}}(s,\, g(T-s)+X)\bigr] - \lambda V^{\mathrm{tr}}(s,p)
\end{aligned}
\right\}=0,
\quad s\in(0,T],
\end{equation}
with terminal condition $V^{\mathrm{tr}}(0,p)=(v-p)^+$. Specializing to $g\equiv 0$ recovers~\eqref{eq:VI}. 

The deterministic trend changes the price-adjustment distribution over time but does not make those post-adjustment prices depend on the current posted price \(p\). Hence the monotonicity argument of \Cref{lemma:mono_V} carries over: for each fixed time-to-go \(s\), the value function \(V^{\mathrm{tr}}(s,p)\) is non-increasing in \(p\). Therefore, the optimal purchase policy is represented by a trend-aware threshold $b^{\mathrm{tr}}(s)$. At elapsed time $t\leq T$, the agent purchases whenever $P_t\le b^{\mathrm{tr}}(T-t)$.

Operationally, the trend component shifts the stopping threshold through the expected level of future price adjustments. A downward trend makes waiting more attractive by lowering future price draws; an upward trend has the opposite effect. \Cref{fig:trend_aware_multi} illustrates how the purchase threshold changes across several deterministic trend environments, together with sample price paths under each specification.

\begin{figure}[!htbp]
  \centering
  \caption{Trend-aware purchase thresholds under deterministic linear trend specifications with slope $\beta$, together with two illustrative sample price paths in each case.}
  \label{fig:trend_aware_multi}
  \includegraphics[width=\linewidth]{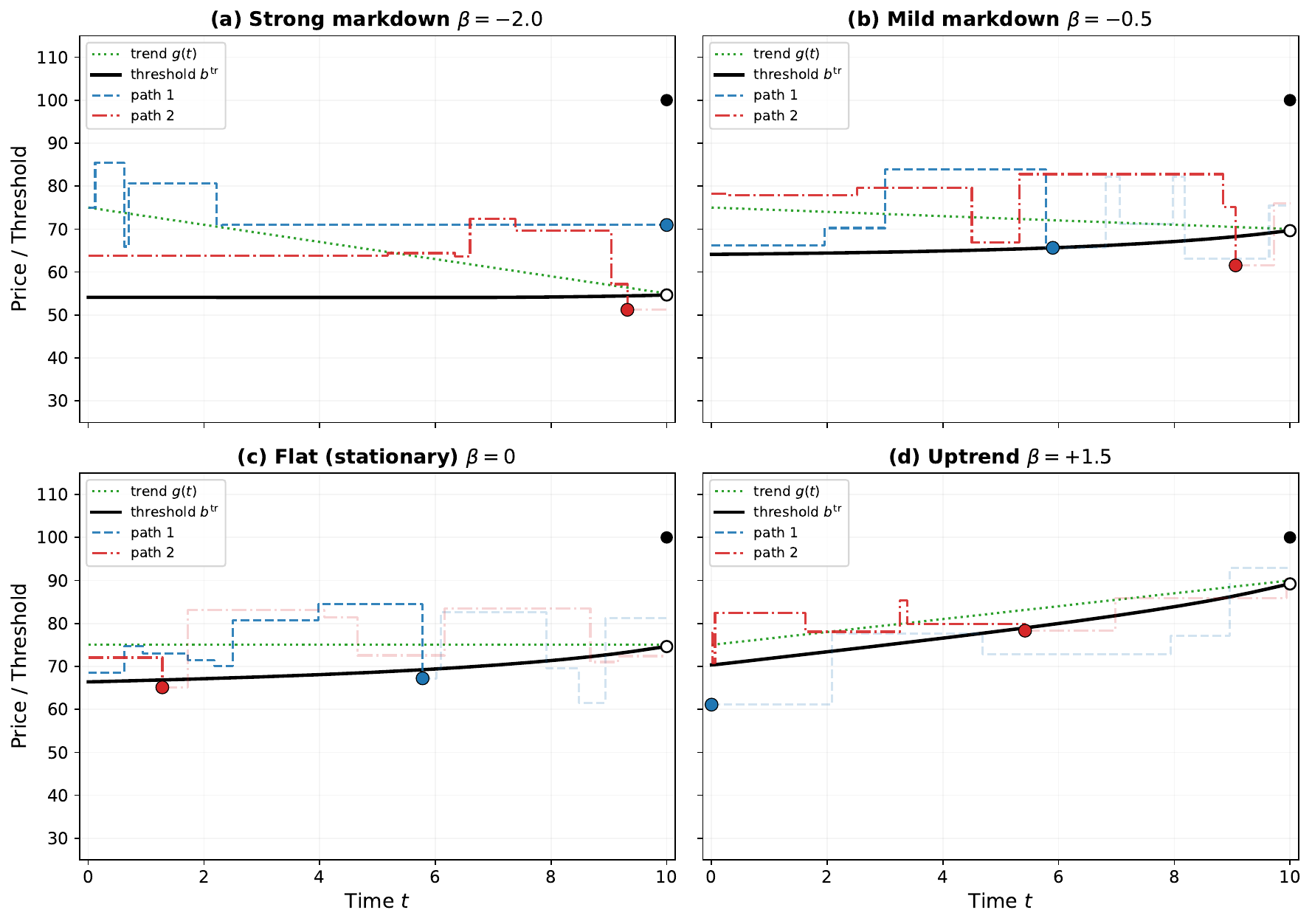}
\end{figure}

\section{Buying Under Bayesian Learning}\label{sec:bayesian}

The previous section treats both the adjustment intensity $\lambda$ and the price-adjustment distribution $H$ as known. We keep the Poisson adjustment process from \Cref{asm:model}, but replace the known distribution with a latent one. The true price-adjustment distribution is $H_{\tilde\theta}$, where $\tilde\theta\in\Theta$ is fixed over the shopping window and drawn from a prior belief $\hat\pi$. Conditional on $\tilde\theta$, the price process satisfies \Cref{asm:model} with $H=H_{\tilde\theta}$. The agent observes realized post-adjustment prices and updates its posterior belief. 

\subsection{Bayesian Learning of the Price-Adjustment Distribution}

We model uncertainty about future price changes through a family of candidate distributions $\{H_\theta:\theta\in\Theta\}$, where $(\Theta,\mathcal B(\Theta))$ is a measurable parameter space. A latent pricing environment $\tilde{\theta}\in\Theta$ is drawn once from a prior belief $\hat\pi\in\mathcal P(\Theta)$ and remains fixed throughout the shopping window. Conditional on $\tilde{\theta}=\theta$, adjustment times still arrive according to the Poisson process with rate $\lambda$, and each post-adjustment price is drawn i.i.d. from $H_\theta$. The latent environment can encode one or several persistent features of the post-adjustment price distribution, such as markdown aggressiveness or dispersion in price changes. The buying agent does not observe $\tilde{\theta}$ directly. Instead, it updates beliefs from the realized sequence of posted prices observed at adjustment times.

In empirical implementations, the prior $\hat\pi$ can be built from contextual predictors, including comparable-item prices, seller characteristics, and platform-side covariates. We treat the window-start price $P_0$ as the first in-horizon Bayesian signal rather than as part of prior learning data. This convention yields a unified formulation for products with and without historical price records. Let 
\[
  \pi_t(\cdot):=\mathbb P(\theta\in\cdot\mid \text{price history observed up to elapsed time } t)
\]
be the posterior belief at elapsed time $t$. Because the adjustment intensity is treated as known, learning in this section comes entirely from price observations. Hence, the posterior remains constant between consecutive adjustment times and changes only when a new posted price is observed.

When each distribution $H_\theta$ admits a density $h_\theta$ with respect to a reference measure, Bayes' rule implies that, upon observing a post-adjustment price $p$, the posterior updates according to
\begin{equation*}
\Phi(\pi,p)(d\vartheta)
:=
\frac{h_\vartheta(p)\pi(d\vartheta)}
{\int_\Theta h_u(p)\pi(du)}.
\end{equation*}
We apply Bayes' rule only to price histories that can arise under the model; for those histories, the denominator is positive and finite. Prices outside the predictive support are irrelevant for the dynamic program. Given a belief $\pi$, the agent's predictive distribution of the next post-adjustment price is
\begin{equation*}
  \bar{H}_\pi(\cdot)
  :=
  \int_\Theta H_\theta(\cdot)\,\pi(d\theta).
\end{equation*}
Accordingly, the agent's state is summarized by the pair $(p,\pi)$: the current posted price and the current belief about the distribution of the next post-adjustment price.

\subsection{Bayesian Buying-Agent Problem}

At time $t$, the buying agent observes the current posted price $P_t=p$ and holds a posterior belief $\pi_t=\pi$. Let $\mathcal T_t$ denote the set of admissible stopping rules taking values in $[t,T]$. The Bayesian value function can be written as
\[
  V^{B}(T-t,p,\pi)
  :=
  \sup_{\tau\in\mathcal T_t}
  \mathbb E\left[(v-P_{\tau})^+ \mid P_t=p,\ \pi_t=\pi\right].
\]
The terminal condition is \(V^B(0,p,\pi)=(v-p)^+\). Because the posterior changes only at adjustment times, the Bayesian continuation problem is Markovian in the state $(s,p,\pi)$, where $s=T-t$ denotes the time-to-go. The corresponding variational inequality is
\begin{equation*}
  \max\left\{
    \begin{aligned}
      &(v-p)^+-V^B(s,p,\pi),\\
      &-\partial_s V^B(s,p,\pi)
      +\lambda \E_{P\sim \bar{H}_\pi}\left[
        V^B\bigl(s,P,\Phi(\pi,P)\bigr)
      \right]
      -\lambda V^B(s,p,\pi)
  \end{aligned}\right\}
  =0,
  \qquad s\in(0,T].
\end{equation*}

\subsubsection{Belief-Dependent Threshold Policy.}

The next proposition shows that the optimal policy retains a threshold structure, but the threshold now depends on both the time-to-go and posterior belief.
\begin{proposition}[{\sc Belief-Dependent Purchase Threshold}]\label{prop:bayes_threshold}
Under the Bayesian learning model, for each fixed $(s,\pi)$, the mapping $p\mapsto V^B(s,p,\pi)$ is non-increasing. Moreover, there exists a belief-dependent purchase threshold $b(s,\pi)\in[0,v]$ such that an agent with the current state $(p,\pi)$ stops whenever $p\le b(s,\pi)$ and continues whenever $p>b(s,\pi)$. Equivalently, at elapsed time $t\leq T$, the optimal policy purchases whenever $P_t\leq b(T-t,\pi_t)$.
\end{proposition}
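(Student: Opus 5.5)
The plan is a pathwise coupling argument on the stopping-time formulation, rather than working through the variational inequality. The key observation is that, in the Bayesian model, the current posted price $p$ enters the future only through the immediate payoff if the agent stops while the price is still $p$. The belief $\pi$ changes only at adjustment times, using the post-adjustment prices $Y_n$. Conditional on $(s,\pi)$, the adjustment times $\{r_n\}$ and the post-adjustment prices $\{Y_n\}$ (drawn from $H_{\tilde\theta}$ with $\tilde\theta\sim\pi$) therefore have a joint law that does not depend on $p$. The posterior path $\pi_t$ is a deterministic function of $\pi$ and the $Y_n$, so it does not depend on $p$ either. We treat the current price as already incorporated into $\pi$, consistent with the convention that $P_0$ is the first in-horizon signal.

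\textbf{Step 1: monotonicity in $p$.} Fix $(s,\pi)$ and $p_1\le p_2$, and build both price processes on one probability space with the same $\tilde\theta$, the same adjustment times and the same post-adjustment prices. The two paths then agree from $r_1$ onward, and before $r_1$ we have $P^{(1)}_t=p_1\le p_2=P^{(2)}_t$. Both processes generate the same filtration, namely the one generated by the $r_n$ and $Y_n$. Hence any admissible stopping rule $\tau$ for the $p_2$-problem is admissible for the $p_1$-problem, and pathwise $(v-P^{(1)}_\tau)^+\ge (v-P^{(2)}_\tau)^+$. Taking expectations and then the supremum gives $V^B(s,p_1,\pi)\ge V^B(s,p_2,\pi)$. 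The same argument shows the value is $1$-Lipschitz in $p$: the payoffs differ by at most $p_2-p_1$, and only on the event $\{\tau<r_1\}$. This mirrors \Cref{lemma:mono_V}.

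\textbf{Step 2: threshold structure.} Define $W(s,p,\pi):=V^B(s,p,\pi)-(v-p)$ for $p\in[0,v]$. By the $1$-Lipschitz and non-increasing properties, $W$ is non-decreasing in $p$. The stopping set at $(s,\pi)$ is $\{p:V^B=(v-p)^+\}$. For $p\le v$ this is $\{W=0\}$, and since $W\ge0$ and $W$ is monotone, this set is an interval $[0,b]$ or $[0,b)$. For $p>v$ the stopping payoff is $0$. Such prices lie in the stopping set only if $V^B=0$, and in that case every price $\le v$ is also a stopping point. If $V^B(s,v,\pi)=0$, we set $b=v$: buying at $p\le v$ is optimal, and waiting is optimal (with value $0$) at $p>v$. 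Otherwise, $b(s,\pi):=\sup\{p\in[0,v]:W(s,p,\pi)=0\}$, with $b:=0$ if the set is empty. Continuity of $W$ in $p$, which follows from the Lipschitz property, makes the set closed, so it equals $[0,b]$. The terminal case $s=0$ gives $b=v$ directly.

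\textbf{Step 3: optimality of the threshold rule.} We invoke the standard optimal-stopping result: the first entrance time into the stopping set is optimal. This holds because the state $(s,P_t,\pi_t)$ is a piecewise-deterministic Markov process with finitely many jumps on $[0,T]$ almost surely, and the payoff is bounded by $v$. Concretely, between jumps the state is constant except for $s$. So one can solve the problem recursively over adjustment epochs: conditioned on the next adjustment, the problem reduces to choosing a deterministic stopping time before $r_1$. By monotonicity, stopping at price $p$ is optimal exactly when $p\le b(s,\pi)$, and this yields the stated policy $P_t\le b(T-t,\pi_t)$.

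The main obstacle is the measurability and regularity of $V^B$ in the belief argument on a general parameter space $\Theta$, which Step 3 needs. We would handle it by never differentiating in $\pi$: we use only the pathwise coupling in $p$ for fixed $\pi$, and verify Step 3 by this epoch-by-epoch dynamic programming argument instead of the variational inequality.
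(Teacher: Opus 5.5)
Your argument is correct, but it reaches the threshold structure by a different route than the paper. The paper uses no coupling. It notes that no information arrives before the first adjustment, so every admissible rule reduces there to a deterministic waiting time $w\in[0,s]$. This gives the explicit representation $V^B(s,p,\pi)=\sup_{w\in[0,s]}\{e^{-\lambda w}(v-p)^+ + A_{s,w}(\pi)\}$, with $A_{s,w}(\pi)=\int_0^w\lambda e^{-\lambda z}M(s-z,\pi)\,dz$ independent of $p$. Monotonicity in $p$ is then immediate. The buy set is the intersection over $w\in(0,s]$ of the closed lower intervals $\{p\in[0,v]:(1-e^{-\lambda w})(v-p)\ge A_{s,w}(\pi)\}$, so no Lipschitz or continuity property in $p$ is needed.

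You instead transplant the stationary argument of \Cref{lemma:mono_V} and part~(i) of \Cref{prop:threshold}. Coupling the $p_1$- and $p_2$-systems through the same $\tilde\theta$, adjustment times and draws makes the posterior path and the filtration identical. That yields monotonicity and $1$-Lipschitz continuity at the level of stopping times, from which the option value $W$ is non-decreasing and continuous.

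Your route gives a slightly stronger conclusion (a Lipschitz bound the paper does not state for $V^B$) and keeps the Bayesian and stationary analyses parallel. The paper's route is more self-contained for the structural claim, and it produces exactly the one-step operator reused in the finite-adjustment induction behind \Cref{prop:bayes_belief_monotonicity}. Your Step~3 still relies on the epoch-by-epoch reduction to a deterministic waiting time, which is the paper's key representation, so the two proofs meet there. The paper likewise takes for granted the final verification that following the buy set along the path is optimal.

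Two small points. First, $\{W=0\}$ is never empty, since $V^B\le v$ forces $W(s,0,\pi)=0$, so your fallback $b:=0$ is vacuous. Second, you assert without proof that $V^B(s,p,\pi)=0$ for some $p>v$ makes every $p'\le v$ a stopping point. This is true: it forces $M(\cdot,\pi)=0$ a.e., hence $V^B(s,p',\pi)=v-p'$. But it is not needed, because waiting at $p>v$ is weakly optimal in any case.
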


\Cref{prop:bayes_threshold} is the Bayesian analog of \Cref{prop:threshold}. Learning enlarges the state from the current price to the pair $(p,\pi)$, but conditional on the belief, the current price affects the pre-adjustment decision only through the surplus from buying now. An adjustment both changes the price and updates the posterior through $\Phi$, so an observed price adjustment can shift the threshold used for subsequent decisions. \Cref{fig:bayes_threshold_paths} illustrates this belief-dependent updating: the purchase threshold is re-evaluated as the posterior state changes, rather than following a fixed threshold path.

\begin{figure}[!htbp]
  \centering
  \caption{Two belief-dependent dynamic purchase thresholds as functions of time, together with corresponding illustrative sample paths of the posted price.}
  \label{fig:bayes_threshold_paths}
  \includegraphics[width=0.7\linewidth]{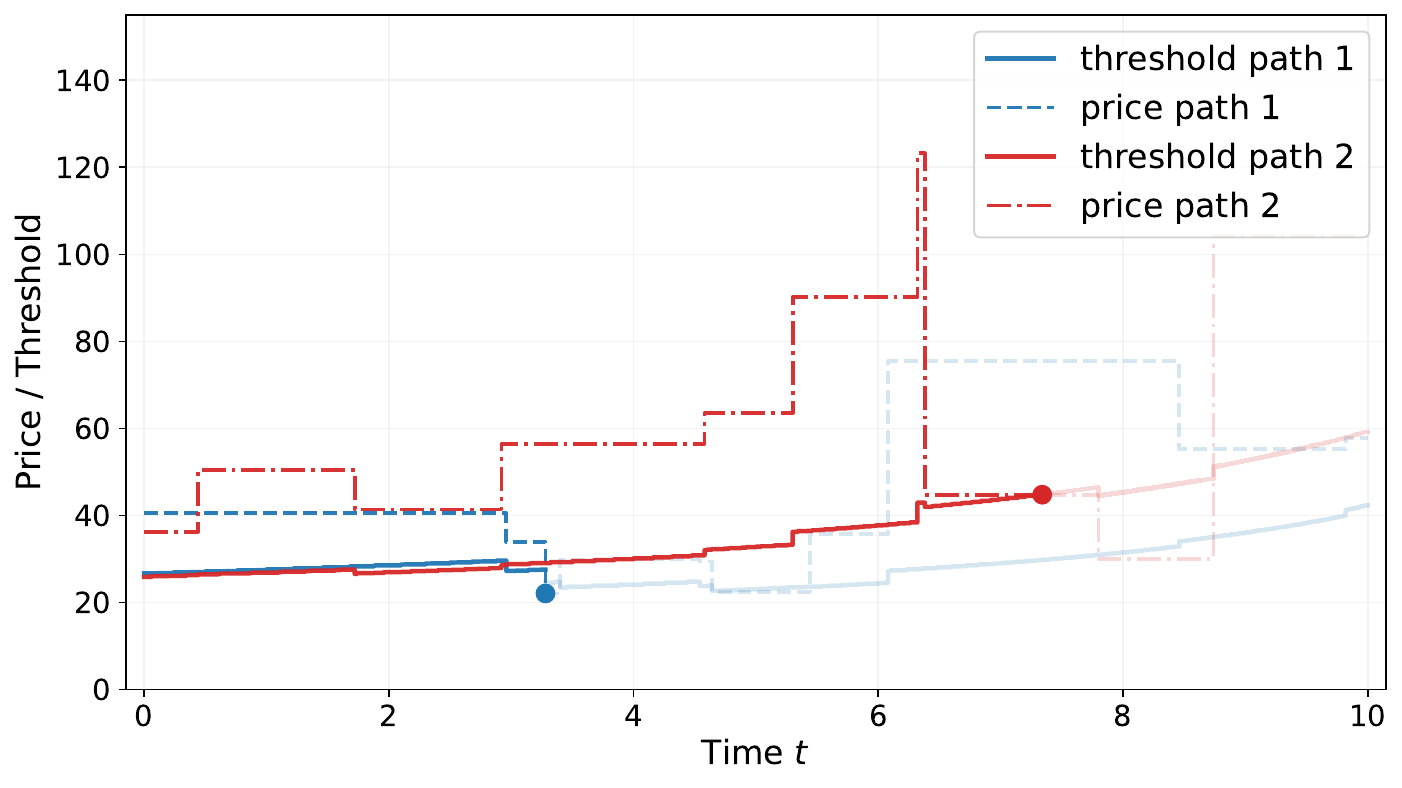}
\end{figure}

\subsubsection{Comparative Statics.}

The threshold structure also makes the comparative statics possible. Holding the current belief fixed, the arrival rate still governs how often the agent expects new opportunities to arrive, while the valuation scales the payoff from purchasing. The next result demonstrates the corresponding monotonicity of the Bayesian threshold.

\begin{proposition}[{\sc Bayesian Threshold Monotonicity}]\label{prop:bayes_param_monotonicity}
The optimal belief-dependent threshold $b(s,\pi; \lambda, v)$ satisfies the following monotonicity properties.
\begin{enumerate}[label=(\roman*)]
    \item $b(s,\pi; \lambda, v)$ is non-increasing in $\lambda$:
if $\lambda_1 \le \lambda_2$, then $b(s,\pi; \lambda_2, v) \le b(s,\pi; \lambda_1, v)$.
    \item  $b(s,\pi; \lambda, v)$ is non-decreasing in $v$: if
$v_1 \le v_2$, then $b(s,\pi; \lambda, v_1) \le b(s,\pi; \lambda, v_2)$.
\end{enumerate}
\end{proposition}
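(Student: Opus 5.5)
Because the threshold is defined by where the option value of waiting vanishes, it suffices to compare value functions across parameters. By \Cref{prop:bayes_threshold}, for fixed $(s,\pi)$ the stopping set is $\{p: V^B(s,p,\pi)=(v-p)^+\}=[0,b(s,\pi)]$. Since $V^B\ge (v-p)^+$ always, the threshold can be read off as the largest $p\le v$ with $V^B(s,p,\pi)\le v-p$. Both parts therefore reduce to pointwise comparisons between value functions.

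\textbf{Part (i).} I will show that $V^B(s,p,\pi;\lambda_2,v)\ge V^B(s,p,\pi;\lambda_1,v)$ whenever $\lambda_1\le\lambda_2$. Given this, any $p$ in the $\lambda_2$ stopping set satisfies
\[
v-p\ge V^B(\cdot;\lambda_2)\ge V^B(\cdot;\lambda_1)\ge v-p,
\]
so $p$ also lies in the $\lambda_1$ stopping set. Hence $b(s,\pi;\lambda_2,v)\le b(s,\pi;\lambda_1,v)$.

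The value comparison follows from a coupling via Poisson thinning. Build the rate-$\lambda_2$ process and mark each arrival independently as "real" with probability $\lambda_1/\lambda_2$, so that the real arrivals form a rate-$\lambda_1$ process. Price draws are i.i.d.\ from $H_{\tilde\theta}$ given the same latent $\tilde\theta$. A $\lambda_2$-agent who ignores unmarked adjustments would lose the old price, so it cannot imitate the $\lambda_1$-agent pathwise. The pathwise coupling therefore fails, and I will argue through the dynamic program instead.

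Time-discretize the problem, or equivalently run the iteration on the uniformized chain. The continuation operator is
\[
(\mathcal{L}_\lambda V)(s,p,\pi)=\int_0^s \lambda e^{-\lambda u}\,\E_{\bar H_\pi}\!\left[V^B\bigl(s-u,P,\Phi(\pi,P)\bigr)\right]du + e^{-\lambda s}(v-p)^+ .
\]
This needs more care, because the value taken at the next adjustment is itself a maximum over stopping and continuing. I will use the fact that $V^B(s,\cdot,\pi)$ is non-increasing in $p$ (\Cref{prop:bayes_threshold}) together with the fact that the agent can always hold a price it likes until the next adjustment.

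The cleanest route is a two-state comparison. On the continuation region, the variational inequality gives
\[
\partial_s V=\lambda\left(\E_{\bar H_\pi}\!\left[V\bigl(s,P,\Phi(\pi,P)\bigr)\right]-V(s,p,\pi)\right).
\]
For $p$ above the threshold the bracket is nonnegative, since continuing beats stopping and stopping is worth $(v-p)^+$. A larger $\lambda$ therefore amplifies a nonnegative drift. I will make this precise by induction on the number of remaining adjustments in a truncated, uniformized chain with step size $h$. In each step the $\lambda$-agent faces either a redraw with probability $\lambda h$ or stays at the same price. Raising $\lambda$ to $\lambda_2$ shifts probability toward redraws, which weakly helps exactly when the expected post-redraw value $\E[V_{k-1}(P,\Phi)]$ is at least the stay value $V_{k-1}(p,\pi)$. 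Where this fails, the agent stops, and stopping gives the same payoff $(v-p)^+$ under both rates.

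I expect this step to be the main obstacle: verifying that, in the maximization, the stay branch is never strictly preferable to the redraw branch whenever the agent continues. I will handle it by writing the one-step value as
\[
\max\Bigl\{(v-p)^+,\ (1-\lambda h)V_{k-1}(p,\pi)+\lambda h\,\E\bigl[V_{k-1}\bigr]\Bigr\}.
\]
I will then note that $V_{k-1}(p,\pi)\ge\E[V_{k-1}]$ implies the continuation value is at most $V_{k-1}(p,\pi)$. The case split on which term dominates then completes the induction, and passing $h\to0$ gives the result.

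\textbf{Part (ii).} Set $W(s,p,\pi;v):=V^B(s,p,\pi;v)-(v-p)$, the relative option value of waiting. I will show that $W$ is non-increasing in $v$ for $p\le v_1\le v_2$. Then stopping under $v_1$ means $W\le 0$, which forces $W\le0$ under $v_2$, so the $v_1$ stopping set is contained in the $v_2$ stopping set. The monotonicity of $W$ is equivalent to $V^B(\cdot;v_2)-V^B(\cdot;v_1)\le v_2-v_1$. This holds because, for any fixed stopping rule, the payoff $(v-P_\tau)^+$ is 1-Lipschitz in $v$ pathwise. Taking suprema preserves the bound, since the posterior dynamics $\Phi$ and the predictive distribution $\bar H_\pi$ do not depend on $v$. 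Finally, for prices $p\in(v_1,v_2]$, note that $b(s,\pi;v_1)\le v_1<p$, so the inequality $b(\cdot;v_1)\le b(\cdot;v_2)$ is unaffected by these prices.
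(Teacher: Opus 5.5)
Part (ii) matches the paper's proof. The pathwise $1$-Lipschitz bound in $v$ gives $V^B(\cdot;v_2)\le V^B(\cdot;v_1)+(v_2-v_1)$, and stopping-set containment follows.

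For part (i) you take a different and much heavier route than the paper, which never compares values across rates by dynamic programming. Since $\lambda$ is known and common to all $\theta$, adjustment times carry no information. Hence $\Phi$ and $\bar H_\pi$ are unaffected by the substitution $\tilde s=\lambda s$. The rate-$\lambda$ problem is therefore a unit-rate problem with effective time-to-go $\lambda s$, so $b(s,\pi;\lambda,v)=\tilde b(\lambda s,\pi;v)$. Part (i) then reduces to the threshold being non-increasing in time-to-go: a longer horizon enlarges the set of waiting policies while keeping immediate purchase feasible.

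This time change is also the pathwise coupling you were looking for after correctly discarding thinning. A rate-$\lambda_2$ agent with horizon $s$ faces the same law as a rate-$\lambda_1$ agent with horizon $(\lambda_2/\lambda_1)s\ge s$. So $V^B(\cdot;\lambda_2)\ge V^B(\cdot;\lambda_1)$ is immediate, and your stopping-set deduction from there is fine.

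Your discretized induction can be completed, but it quietly relies on an unstated fact. Consider the case $\E[V^{\lambda_1}_{k-1}(P,\Phi(\pi,P))]<V^{\lambda_1}_{k-1}(p,\pi)$. There you only get $V^{\lambda_1}_k\le V^{\lambda_1}_{k-1}(p,\pi)\le V^{\lambda_2}_{k-1}(p,\pi)$. Closing the step needs $V^{\lambda_2}_{k-1}\le V^{\lambda_2}_{k}$, i.e., the value is monotone in the number of remaining steps. Your assertion that the agent stops whenever the bracket is negative also rests on this. It follows easily from monotonicity of the Bellman operator and $V_1\ge V_0$. However, it is exactly the horizon-monotonicity fact behind the paper's two-line proof, so once you state it the discretization is unnecessary.

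Two smaller points:
\begin{itemize}
\item The nonnegativity of the bracket in $\partial_s V=\lambda(\cdots)$ on the continuation region does not follow from ``continuing beats stopping.'' It follows from $\partial_s V\ge 0$, which is again monotonicity in time-to-go.
\item Your route still owes an argument that the Bernoulli-discretized, grid-stopping values converge to $V^B$ as $h\to0$. The time-change proof avoids this entirely.
\end{itemize}
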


We now consider how the threshold changes as the belief becomes more pessimistic. To formalize this comparison, we need an order on posterior beliefs that is compatible with both predictive price distributions and Bayesian updating. The following monotone-information condition provides such an order for one-dimensional distributions.

\begin{definition}[Monotone Likelihood-Ratio Order]\label{def:mlr}
Let $F_1$ and $F_2$ be distributions on an ordered set $\mathcal X\subseteq\mathbb R$, and suppose they admit densities $f_1$ and $f_2$ with respect to a common dominating measure. We write $F_1\preceq_{\mathrm{MLR}}F_2$ if, for every $x_1<x_2$, \( f_{2}(x_2) f_{1}(x_1) \geq f_{2}(x_1) f_{1}(x_2) \). We use the same notation for price distributions on $\mathcal P$ and for beliefs on the ordered parameter space~$\Theta$.
\end{definition}
\begin{assumption}[{\sc Monotone Likelihood-Ratio}]\label{asm:mlrp}
The parameter space is an ordered subset $\Theta\subseteq\mathbb R$. Each $H_\theta$ admits a density $h_\theta$ with respect to a common dominating measure on $\mathcal P$. For every $\theta_1<\theta_2$, $H_{\theta_1}\preceq_{\mathrm{MLR}}H_{\theta_2}$.
\end{assumption}

\Cref{asm:mlrp} is a natural monotone-information condition. It states that higher values of the latent parameter $\theta$ shift the price distribution toward higher prices in a likelihood-ratio sense. Since MLR order is stronger than first-order stochastic dominance \citep{milgrom1981good}, a higher $\theta$ makes higher price realizations more likely. In economic terms, $\theta$ indexes the ``costliness'' of the seller's pricing regime. A higher $\theta$ corresponds to a seller that tends to post higher prices, perhaps due to stronger demand conditions, lower competitive pressure, or more conservative markdown policies.
 
The MLR structure also ensures that observed prices are informative about $\theta$ in an order-preserving way: a high observed price is a signal that $\theta$ is likely high, and vice versa. This is the informational regularity that makes the posterior behave monotonically, as the next lemma establishes.

\begin{lemma}\label{lem:belief_order}
Under \Cref{asm:mlrp}, the following properties hold:
  \begin{enumerate}[label=(\roman*)]
    \item\label{lem:bo_predictive}
If $\pi_1\preceq_{\mathrm{FSD}}\pi_2$, then $\bar{H}_{\pi_1}\preceq_{\mathrm{FSD}} \bar{H}_{\pi_2}$.
    \item\label{lem:bo_update}
For every observed price~$p$ for which both posteriors are defined and $\pi_1\preceq_{\mathrm{MLR}}\pi_2$, $\Phi(\pi_1,p)\preceq_{\mathrm{MLR}}\Phi(\pi_2,p)$.
    \item\label{lem:bo_signal}
More generally, if $\pi_1\preceq_{\mathrm{MLR}}\pi_2$ and $p_1\le p_2$, then
\(
  \Phi(\pi_1,p_1)\preceq_{\mathrm{MLR}}\Phi(\pi_2,p_2)
\)
whenever both posteriors are defined. In particular, for every prior~$\pi$ and prices $p_1\le p_2$, $\Phi(\pi,p_1)\preceq_{\mathrm{MLR}}\Phi(\pi,p_2)$ whenever both posteriors are defined.
  \end{enumerate}
\end{lemma}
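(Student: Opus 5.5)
The plan is to prove the three parts in turn, using only \Cref{asm:mlrp} and two standard facts. First, MLR order implies FSD order. Second, FSD order is characterized by expectations of non-decreasing bounded functions.

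\emph{Part (i).} Fix a price level $x$. The predictive tail is
\[
\bar H_\pi\bigl((x,\infty)\cap\mathcal P\bigr)=\int_\Theta H_\theta\bigl((x,\infty)\cap\mathcal P\bigr)\,\pi(d\theta),
\]
so it suffices to show that the integrand $\theta\mapsto H_\theta((x,\infty))$ is non-decreasing in $\theta$. This holds because $H_{\theta_1}\preceq_{\mathrm{MLR}}H_{\theta_2}$ implies $H_{\theta_1}\preceq_{\mathrm{FSD}}H_{\theta_2}$ \citep{milgrom1981good}. The integrand is then bounded and non-decreasing in $\theta$, so $\pi_1\preceq_{\mathrm{FSD}}\pi_2$ gives the inequality of the integrals. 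Since this holds for every $x$, we get $\bar H_{\pi_1}\preceq_{\mathrm{FSD}}\bar H_{\pi_2}$. Measurability of $\theta\mapsto H_\theta(A)$ is implicit in the definition of $\bar H_\pi$.

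\emph{Parts (ii) and (iii).} Part (ii) is the special case $p_1=p_2$ of (iii), so I will prove (iii) directly. First choose a common dominating measure $\nu$ on $\Theta$, for example $\nu=\pi_1+\pi_2$. Write $\pi_i(d\vartheta)=g_i(\vartheta)\nu(d\vartheta)$. The MLR hypothesis is then the statement
\[
g_2(\vartheta_2)g_1(\vartheta_1)\ge g_2(\vartheta_1)g_1(\vartheta_2)\quad\text{for }\vartheta_1<\vartheta_2,
\]
with the densities chosen so that this holds pointwise, or $\nu$-a.e. In the latter case the conclusion is also stated a.e., which is all the MLR order requires. By the update formula $\Phi$, the posterior densities with respect to $\nu$ are
\[
\tilde g_i(\vartheta)=\frac{h_\vartheta(p_i)\,g_i(\vartheta)}{c_i},\qquad c_i=\int_\Theta h_u(p_i)\,\pi_i(du)\in(0,\infty),
\]
and $c_i\in(0,\infty)$ because the posteriors are assumed to be defined. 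For $\vartheta_1<\vartheta_2$ the positive constants $c_1c_2$ cancel, so we need
\[
h_{\vartheta_2}(p_2)g_2(\vartheta_2)\,h_{\vartheta_1}(p_1)g_1(\vartheta_1)\;\ge\;h_{\vartheta_1}(p_2)g_2(\vartheta_1)\,h_{\vartheta_2}(p_1)g_1(\vartheta_2).
\]
This factors into the product of two inequalities between nonnegative numbers:
\[
g_2(\vartheta_2)g_1(\vartheta_1)\ge g_2(\vartheta_1)g_1(\vartheta_2)
\]
from the prior MLR order, and
\[
h_{\vartheta_2}(p_2)h_{\vartheta_1}(p_1)\ge h_{\vartheta_1}(p_2)h_{\vartheta_2}(p_1),
\]
which is \Cref{asm:mlrp} applied at $p_1\le p_2$. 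The case $p_1=p_2$ of the second inequality is trivial, with equality. Multiplying the two inequalities gives the claim. The ``in particular'' statement follows by taking $\pi_1=\pi_2=\pi$, which is MLR-ordered with itself.

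The main obstacle is bookkeeping rather than substance. The beliefs need not have Lebesgue densities, so I will work with the common dominating measure $\nu$. The MLR inequalities may hold only almost everywhere, and I will handle this by stating the conclusion $\nu$-a.e., which suffices for the MLR order. No other difficulty is expected: the multiplication step only uses nonnegativity, so zero densities cause no sign issues.
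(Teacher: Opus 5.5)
Your proposal is correct and follows essentially the same route as the paper. For part (i), both arguments use MLR $\Rightarrow$ FSD for $H_\theta$ and then integrate a non-decreasing function of $\theta$ against $\pi_1\preceq_{\mathrm{FSD}}\pi_2$; you use tail indicators where the paper uses a general increasing $\phi$. For parts (ii)--(iii), both multiply the prior cross-product inequality by the likelihood cross-product inequality from \Cref{asm:mlrp} and cancel the positive normalizing constants. Deriving (ii) as the case $p_1=p_2$ of (iii), and working with the explicit dominating measure $\nu=\pi_1+\pi_2$, are only cosmetic refinements of the paper's proof.
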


\Cref{lem:belief_order} establishes the monotonicity properties used in the Bayesian comparative statics. Part~\ref{lem:bo_predictive} links beliefs to predictive price distributions: if $\pi_2$ places relatively more weight on high-$\theta$ regimes than $\pi_1$, then the predictive distribution under $\pi_2$ first-order stochastically dominates that under $\pi_1$. Part~\ref{lem:bo_update} shows that this belief ordering is preserved under Bayesian updating, so a more pessimistic prior leads to a more pessimistic posterior after the same price observation. Part~\ref{lem:bo_signal} adds a cross-monotonicity property: a more pessimistic prior combined with a higher observed price leads to a more pessimistic posterior. Together, these properties ensure that the ordering on $\mathcal{P}(\Theta)$ is compatible with the dynamic program, which enables the comparative statics in \Cref{prop:bayes_belief_monotonicity}.

\begin{proposition}[{\sc Belief Monotonicity of the Bayesian Threshold}]
\label{prop:bayes_belief_monotonicity}
Under the Bayesian learning model and \Cref{asm:mlrp}, if $\pi_1\preceq_{\mathrm{MLR}}\pi_2$, then
  \begin{enumerate}[label=(\roman*)]
    \item\label{prop:bayes_belief_value}
The Bayesian value satisfies $V^B(s,p,\pi_1)\ge V^B(s,p,\pi_2)$ for all $(s,p)$.
    \item\label{prop:bayes_belief_threshold}
The belief-dependent threshold satisfies $b(s,\pi_1)\le b(s,\pi_2)$ for all $s$.
  \end{enumerate}
\end{proposition}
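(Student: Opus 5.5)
We argue by an induction on the number of remaining price adjustments, equivalently through a monotone iteration of the Bayesian dynamic-programming operator. The claim to propagate is \emph{belief antitonicity}: $V^B(s,p,\cdot)$ is non-increasing in the MLR order for every $(s,p)$. Part~\ref{prop:bayes_belief_threshold} then follows from part~\ref{prop:bayes_belief_value}, as shown at the end.

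\textbf{Operator.} Conditioning on the first adjustment time (exponential with rate $\lambda$) gives the fixed-point representation
\[
V^B(s,p,\pi)=\sup_{u\in[0,s]}\Bigl\{e^{-\lambda u}(v-p)^+ \cdot\mathbb 1\{\text{stop at }u\}+\int_0^{u}\lambda e^{-\lambda r}\,\mathbb E_{P\sim\bar H_\pi}\bigl[V^B(s-r,P,\Phi(\pi,P))\bigr]\,dr\Bigr\}.
\]
More simply, we will use the equivalent form: before the next adjustment the price is frozen, so the agent either buys now or waits until the next adjustment or the deadline. Let $\mathcal L$ denote this operator acting on functions $V(s,p,\pi)$. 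Start from $V_0(s,p,\pi)=(v-p)^+$, which is belief-independent, and set $V_{n+1}=\mathcal L V_n$. Then $V_n$ is the value when at most $n$ further adjustments can be exploited. Since $N_T$ is Poisson and the payoffs are bounded by $v$, $V_n\uparrow V^B$. It therefore suffices to show that $\mathcal L$ preserves the class $\mathcal C$ of functions that are non-increasing in $p$ and non-increasing in $\pi$ under $\preceq_{\mathrm{MLR}}$. Limits preserve both monotonicities.

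\textbf{Key step.} Fix $V\in\mathcal C$ and $\pi_1\preceq_{\mathrm{MLR}}\pi_2$. The only belief-dependent term in $\mathcal L V$ is the continuation expectation $\mathbb E_{P\sim\bar H_{\pi_i}}[V(r,P,\Phi(\pi_i,P))]$. We chain two inequalities:
\[
\mathbb E_{\bar H_{\pi_2}}\bigl[V(r,P,\Phi(\pi_2,P))\bigr]\le \mathbb E_{\bar H_{\pi_2}}\bigl[V(r,P,\Phi(\pi_1,P))\bigr]\le \mathbb E_{\bar H_{\pi_1}}\bigl[V(r,P,\Phi(\pi_1,P))\bigr].
\]
The first inequality holds pointwise in $P$: by Lemma~\ref{lem:belief_order}\ref{lem:bo_update}, $\Phi(\pi_1,P)\preceq_{\mathrm{MLR}}\Phi(\pi_2,P)$, and $V$ is antitone in belief.

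The second inequality requires the map $P\mapsto V(r,P,\Phi(\pi_1,P))$ to be non-increasing. It is, for two reasons: $V$ is non-increasing in $p$, and by Lemma~\ref{lem:belief_order}\ref{lem:bo_signal} a higher $P$ yields a more pessimistic posterior, which lowers $V$. Lemma~\ref{lem:belief_order}\ref{lem:bo_predictive} gives $\bar H_{\pi_1}\preceq_{\mathrm{FSD}}\bar H_{\pi_2}$, since MLR implies FSD. A non-increasing function has a smaller expectation under the FSD-larger distribution, which is the second inequality.

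With the continuation term ordered, the stop-now payoff $(v-p)^+$ is belief-free and the supremum is monotone in its integrand, so $\mathcal LV(\cdot,\cdot,\pi_1)\ge \mathcal LV(\cdot,\cdot,\pi_2)$. Monotonicity in $p$ is preserved exactly as in Proposition~\ref{prop:bayes_threshold}. This step is the main obstacle: it is where both effects of a price observation, the predictive shift and the posterior shift, must point in the same direction. We must also make sure the MLR order, rather than mere FSD, is what is preserved along the iteration, since Lemma~\ref{lem:belief_order}\ref{lem:bo_update}–\ref{lem:bo_signal} need MLR inputs. Posteriors of MLR-ordered priors stay MLR-ordered, so the induction closes. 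Measure-zero issues with undefined posteriors are harmless because those prices have zero predictive probability.

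\textbf{Threshold.} By Proposition~\ref{prop:bayes_threshold}, $b(s,\pi)=\sup\{p\le v:V^B(s,p,\pi)=v-p\}$. Take $p\le b(s,\pi_1)$. Then $v-p\le V^B(s,p,\pi_2)\le V^B(s,p,\pi_1)=v-p$, so $p$ is also in the stopping set under $\pi_2$. Hence $b(s,\pi_1)\le b(s,\pi_2)$.
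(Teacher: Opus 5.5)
Your architecture matches the paper's: finite-adjustment values $V_n\uparrow V^B$, an induction preserving price monotonicity and MLR-belief monotonicity, a two-step comparison through a mixed intermediate term, and the same value-to-threshold argument. The key step, however, has a gap at exactly the point where the paper does extra work.

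Your intermediate term $\mathbb E_{P\sim\bar H_{\pi_2}}[V(r,P,\Phi(\pi_1,P))]$ evaluates the $\pi_1$-posterior at prices drawn from the $\pi_2$-predictive distribution. MLR-ordered beliefs need not be mutually absolutely continuous, so this posterior can be undefined on a set of positive probability. For example, take $\Theta=\{\theta_1<\theta_2\}$, $H_{\theta_1}=\mathrm{Unif}[0,1]$, $H_{\theta_2}=\mathrm{Unif}[0,2]$ (these are MLR-ordered), $\pi_1=\delta_{\theta_1}$, and $\pi_2$ uniform on $\{\theta_1,\theta_2\}$. Then $\pi_1\preceq_{\mathrm{MLR}}\pi_2$, yet $\Phi(\pi_1,P)$ is undefined for $P\in(1,2]$, a set of $\bar H_{\pi_2}$-probability $1/4$. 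Such support mismatches also arise inside the recursion, because updating shrinks the belief's support. On that set your first, pointwise inequality cannot invoke \Cref{lem:belief_order}(ii). Your remark that undefined posteriors carry zero predictive probability holds under $\bar H_{\pi_1}$, not under $\bar H_{\pi_2}$. An ad hoc convention for $\Phi(\pi_1,\cdot)$ off its support does not obviously help either. It would have to make the first inequality hold and, at the same time, keep $P\mapsto V(r,P,\Phi(\pi_1,P))$ non-increasing on the union of both supports, as your FSD step requires.

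The missing idea is to extend at the level of functions, using the \emph{cross} form of \Cref{lem:belief_order}(iii) rather than only part (ii). Let $D_i:=\{p:\int_\Theta h_\theta(p)\,\pi_i(d\theta)>0\}$ and $g_i(p):=V(r,p,\Phi(\pi_i,p))$ on $D_i$. The paper extends $g_2$ to a non-increasing function $\tilde g_2$, using $\sup\{g_2(q):q\in D_2,\,q\ge p\}$ off $D_2$ and $0$ above $D_2$. It shows $g_1\ge\tilde g_2$ $\bar H_{\pi_1}$-a.s. via $\Phi(\pi_1,p)\preceq_{\mathrm{MLR}}\Phi(\pi_2,q)$ for $p\le q$ together with price monotonicity. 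It then chains $\mathbb E_{\bar H_{\pi_1}}[g_1]\ge\mathbb E_{\bar H_{\pi_1}}[\tilde g_2]\ge\mathbb E_{\bar H_{\pi_2}}[\tilde g_2]=\mathbb E_{\bar H_{\pi_2}}[g_2]$.

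Your ordering of the chain can be repaired symmetrically. Set $\tilde g_1(p):=\inf\{g_1(q):q\in D_1,\,q\le p\}$ for $p\notin D_1$, and $\tilde g_1(p):=v$ if no such $q$ exists. This function is non-increasing and equals $g_1$ $\bar H_{\pi_1}$-a.s. It dominates $g_2$ on $D_2$: for $q\in D_1$ and $q\le p\in D_2$, one has $g_1(q)\ge V(r,q,\Phi(\pi_2,p))\ge V(r,p,\Phi(\pi_2,p))=g_2(p)$. Hence $\mathbb E_{\bar H_{\pi_2}}[g_2]\le\mathbb E_{\bar H_{\pi_2}}[\tilde g_1]\le\mathbb E_{\bar H_{\pi_1}}[\tilde g_1]=\mathbb E_{\bar H_{\pi_1}}[g_1]$, and the induction closes.
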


\Cref{prop:bayes_belief_monotonicity} identifies how posterior pessimism changes stopping incentives. In this buying problem, a pessimistic belief lowers the option value of future adjustments, making buying at the current price relatively more attractive and raising the purchase threshold. A realized price adjustment affects the buy-or-wait decision in two ways. First, the newly observed price determines the immediate surplus, which is then compared with the threshold. Second, as a signal, the same price updates the posterior belief that determines the threshold. Under the MLR condition, a higher observed price shifts the posterior belief toward higher values of \(\theta\), induces a stochastically higher predictive distribution for future prices, and thereby raises the threshold.

\subsubsection{Information Gap.}\label{sec:info_gap}

We now quantify the payoff loss due to uncertainty about which price-adjustment distribution governs future price adjustments. Let $V^\theta(s,p)$ denote the oracle value function when the true price-adjustment distribution~$H_\theta$ is known, and let $V^{\mathrm B}(s,p,\pi)$ denote the Bayesian value function under posterior belief~$\pi$. The \emph{information gap}
\begin{equation*}
  \mathcal G(T,\hat\pi)
  :=
  \E_{\tilde{\theta}\sim\hat\pi}\left[
    \E_{P\sim H_{\tilde{\theta}}}\left[
      V^{\tilde{\theta}}(T,P)-V^{\mathrm B}(T,P,\Phi(\hat\pi,P))
    \right]
  \right]
\end{equation*}
measures the expected loss from acting under posterior uncertainty rather than full knowledge.

\begin{lemma}[{\sc Short-Window Information Gap}]\label{lem:short_window_gap}
Under the Bayesian learning model, for every prior $\hat\pi$ and every $T\ge0$,
\(
  0\le \mathcal G(T,\hat\pi)
  \le
  v(1-e^{-\lambda T})
  \le
  v\lambda T
\).
\end{lemma}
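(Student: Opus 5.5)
The plan has three steps: the lower bound from an information argument, and the upper bound by coupling the oracle and Bayesian agents on the event of no price adjustment, followed by $1-e^{-x}\le x$.

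\textbf{Lower bound.} Fix $\theta$ and a starting price $p$. Any stopping rule admissible for the Bayesian agent is adapted to the price history. It is therefore also admissible for the oracle agent, who sees the same history and additionally knows $\theta$. Let $\tau^B$ be the Bayesian-optimal rule started from state $(T,P,\Phi(\hat\pi,P))$. Then $V^{\theta}(T,P)\ge \E[(v-P_{\tau^B})^+\mid \tilde\theta=\theta,P_0=P]$. Averaging over $\tilde\theta\sim\Phi(\hat\pi,P)$ and then over $P$ under the joint law of $(\tilde\theta,P_0)$ gives
\[
\E\bigl[V^{\tilde\theta}(T,P)\bigr]\ge \E\bigl[V^B(T,P,\Phi(\hat\pi,P))\bigr].
\]
This step uses the tower property and the fact that $V^B$ is the posterior-weighted payoff of $\tau^B$. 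The conclusion is $\mathcal G\ge0$.

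\textbf{Upper bound via coupling.} Let $A$ be the event that no adjustment occurs in $[0,T]$, which has probability $e^{-\lambda T}$ regardless of $\theta$ because $\lambda$ is known. On $A$ the price stays at $P_0=P$ throughout the window.
\begin{itemize}
\item For the oracle: $V^\theta(T,P)$ is the expected payoff of the oracle-optimal rule $\tau^\theta$. On $A$, that payoff is at most $(v-P)^+$, since only price $P$ is ever available. On $A^c$, the payoff is at most $v$, since surplus never exceeds $v$ (prices are nonnegative). Hence $V^\theta(T,P)\le e^{-\lambda T}(v-P)^+ + (1-e^{-\lambda T})v$.
\item For the Bayesian agent: the admissible rule ``buy at time $0$'' guarantees $V^B(T,P,\pi)\ge (v-P)^+$.
\end{itemize}
Subtracting,
\[
V^\theta(T,P)-V^B\le (1-e^{-\lambda T})\bigl(v-(v-P)^+\bigr)\le v(1-e^{-\lambda T}).
\]
This holds pointwise in $(\theta,P)$, so taking expectations gives $\mathcal G\le v(1-e^{-\lambda T})$.

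\textbf{Final inequality.} Apply $1-e^{-x}\le x$ with $x=\lambda T$ to get $v(1-e^{-\lambda T})\le v\lambda T$.

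\textbf{Main obstacle.} The delicate step is the lower bound. It requires checking that the Bayesian value with posterior $\Phi(\hat\pi,P)$ equals the conditional expectation of the payoff under the true $\tilde\theta$, given $P_0=P$. This is where treating $P_0$ as the first Bayesian signal, drawn from $H_{\tilde\theta}$, matters: it makes the posterior the correct conditional law. I would verify this by writing the joint law of $(\tilde\theta,P_0,\text{future path})$ explicitly and applying the tower property.
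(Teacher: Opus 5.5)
Your proof is correct and follows essentially the same route as the paper. The lower bound is the value-of-information argument. The upper bound compares the oracle with an information-free feasible policy that matches it on the no-adjustment event $\{N_T=0\}$ (probability $e^{-\lambda T}$) and trails it by at most $v$ otherwise, and the last step is $1-e^{-x}\le x$. The only difference is the comparator: the paper uses the buy-at-deadline policy, whereas your buy-immediately policy yields the marginally sharper bound $(1-e^{-\lambda T})\,\E[\min\{v,P_0\}]$.
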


The intuition behind Lemma \ref{lem:short_window_gap} is that when the shopping window is short, a price adjustment is unlikely to occur, so even an oracle with full knowledge of the price-adjustment distribution has little opportunity to improve on immediate purchase when the valuation is no less than the current price.

For long shopping windows, we impose the following regularity assumption.

\begin{assumption}\label{ass:regular_family}
The parameter space $\Theta= [\underline\theta,\bar\theta]\subset\mathbb R$ is a compact interval. Each $H_\theta$ admits a density $h_\theta$, and the following conditions hold.
\begin{enumerate}[label=(\roman*)]
\item \emph{Lower-tail support.}
There exist common constants $c>0$, $\alpha>0$, and $\varepsilon>0$, and a lower endpoint $p_L^{\theta}\geq 0$ for each $\theta$, such that $v>p_L^{\theta}$, $H_\theta([0,p_L^{\theta}))=0$, and $H_\theta([p_L^{\theta},p_L^{\theta}{+}x])\ge c\,x^\alpha$ for all $x\in[0,\varepsilon]$.

    \item \emph{Distributional Lipschitz continuity.}
There exists $L<\infty$ such that for all $\theta,\theta'\in\Theta$, \( W_\infty(H_\theta,H_{\theta'}) \le L |\theta-\theta'|. \) For one-dimensional price distributions, we use $W_\infty$ to denote the infinity-Wasserstein distance. 

\item \emph{Posterior regularity.}
There exist constants $a_0,a_1>0$ satisfying the following conditions. First, the prior $\hat\pi$ has a strictly positive $C^2$ density on the relative interior of $\Theta$ and satisfies $-\partial_\theta^2\log\hat\pi(\theta)\ge a_0$ there. Second, for every price $x$, the effective parameter set $\Theta(x):=\{\theta\in\Theta:h_\theta(x)>0\}$ is an interval. On the relative interior of $\Theta(x)$, the map $\theta\mapsto\log h_\theta(x)$ is twice differentiable and satisfies $-\partial_\theta^2\log h_\theta(x)\ge a_1$.
  \end{enumerate}
\end{assumption}

\Cref{ass:regular_family} supports the long-window bound in three ways. First, the lower-tail condition ensures that, with many adjustment opportunities left, the oracle threshold is close to the lowest attainable price, so setting aside time for learning is not too costly. Second, the \(W_\infty\)-Lipschitz condition makes the oracle threshold stable under small changes in \(\theta\). Third, the likelihood and prior are sufficiently curved in the parameter, so observed price adjustments concentrate the posterior at the usual parametric rate. To gain intuition, in a Gaussian location family \(Y_i\mid\theta\sim N(\theta,\sigma^2)\), each post-adjustment price adds \(1/\sigma^2\) units of precision, so the posterior standard deviation shrinks at rate \(n^{-1/2}\). Assumption~\ref{ass:regular_family}(iii) imposes the analogous curvature requirement for the general family.

Next, we present a long-window bound under a \emph{learn-then-act} policy. Because this policy is feasible, its payoff gives a lower bound on the Bayesian value $V^B$. The policy reserves a portion of the window for passive learning, during which it observes post-adjustment prices, updates the posterior, and forms a confidence set for the unknown price-adjustment distribution. It then switches to an acting phase and follows an upper-envelope threshold: the largest oracle threshold over parameters that remain statistically plausible. With high probability, the true parameter lies in the confidence set, and the upper-envelope threshold is weakly above the true oracle threshold at every remaining time, so the policy stops no later than the oracle along every realized continuation path. The analysis then separates the information gap into two terms: the opportunity cost of reserving time for learning and the acting-phase loss from using a confidence-envelope threshold rather than the true oracle threshold. The regularity assumptions control these terms through lower-tail mass, Lipschitz continuity of the oracle thresholds, and posterior concentration. 

\begin{proposition}[{\sc Long-Window Information Gap}]\label{prop:info_gap_confidence}
Suppose \Cref{ass:regular_family} holds. Then there exists a constant $C<\infty$ such that, for every $T\ge1$,
\[
  \mathcal G(T,\hat\pi)
  \le
  C\left(
    T^{-1/\alpha}
    +
    \sqrt{\frac{\log(eT)}{T}}
  \right),
\]
and hence $\mathcal G(T,\hat\pi)\to 0$ as $T\to\infty$.
\end{proposition}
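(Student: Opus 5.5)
We bound the information gap by comparing the oracle value with the payoff of a feasible learn-then-act policy. For every realization this payoff is at most $V^{\mathrm B}(T,P,\Phi(\hat\pi,P))$, and it is measurable with respect to the agent's information. So $\mathcal G(T,\hat\pi)\le \E[V^{\tilde\theta}(T,P_0)-\text{payoff of the policy}]$.

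Fix a learning length $T_\ell=\kappa\,T$ for a small constant $\kappa\in(0,1/2)$. During $[0,T_\ell]$ the policy never buys. It observes the $N_{T_\ell}\sim\mathrm{Poisson}(\lambda T_\ell)$ post-adjustment prices, together with $P_0$. At time $T_\ell$ it forms a credible set $\Theta_\delta=\{\theta:|\theta-\hat\theta_n|\le r_n\}$, where $\hat\theta_n$ is the posterior mode and $r_n=\sqrt{2\log(1/\delta)/(a_0+n a_1)}$. On $[T_\ell,T]$ it buys as soon as $P_t\le \bar b(T-t):=\sup_{\theta\in\Theta_\delta} b^\theta(T-t)$, and at the deadline it buys if $P_T\le v$.

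\textbf{Posterior concentration.} Assumption~\ref{ass:regular_family}(iii) makes the posterior log-concave with curvature at least $a_0+na_1$ on the effective interval. A standard log-concave (sub-Gaussian) concentration bound then gives $\Pbb(\tilde\theta\notin\Theta_\delta)\le\delta$, averaged over $\tilde\theta\sim\hat\pi$, which is all $\mathcal G$ needs. Choosing $\delta=1/T$ and using that $N_{T_\ell}\ge\lambda T_\ell/2$ except with probability $e^{-cT}$, we get $r_n\lesssim\sqrt{\log(eT)/T}$. The failure events cost at most $v(\delta+e^{-cT})$.

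\textbf{Loss decomposition on the good event.} The oracle value is the supremum over all stopping times. We split it into three pieces. (a) Oracle stopping in $[0,T_\ell)$: we need the oracle's value to be close to that of a continuation started later. Since $V^\theta$ is nondecreasing in $s$ (Lemma~\ref{lemma:mono_V}), it suffices to show that $v-p_L^\theta-V^\theta(s,\cdot)$, i.e. $b^\theta(s)-p_L^\theta$, is small for large $s$. From the ODE~\eqref{eq:ODE} and the lower-tail condition, $b'\le -\lambda c\,(b-p_L)^{1+\alpha}/(1+\alpha)$ near $p_L$, which integrates to $b^\theta(s)-p_L^\theta\le C s^{-1/\alpha}$. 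Hence $V^\theta(T,p)\le (v-p_L^\theta)+\text{const}$, while $V^\theta(T-T_\ell,\cdot)\ge v-b^\theta(T-T_\ell)$ after an adjustment. The learning phase therefore costs $O(T^{-1/\alpha})$, plus an $e^{-\lambda T}$ term for never seeing an adjustment. (b) Acting with $\bar b\ge b^{\tilde\theta}$: the policy stops no later than the oracle on each path. We bound its loss at stopping by $\bar b(s)-b^{\tilde\theta}(s)$ plus the tail of the oracle continuation. The continuation gain at any state is at most $b^{\tilde\theta}(s)-p_L$-type quantities, again $O(T^{-1/\alpha})$ when $s\ge T/2$. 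For the last stretch we instead use the coupling argument. (c) Threshold stability: $W_\infty(H_\theta,H_{\theta'})\le L|\theta-\theta'|$ lets us couple draws with shift at most $L|\theta-\theta'|$. A Gronwall argument on \eqref{eq:ODE}, with the right side $1$-Lipschitz in $b$ and in the shift, gives $|b^\theta(s)-b^{\theta'}(s)|\le L|\theta-\theta'|$ uniformly in $s$; the exponential factor cancels because shifting prices by $\Delta$ shifts the value by at most $\Delta$. So $\bar b-b^{\tilde\theta}\le 2Lr_n=O(\sqrt{\log(eT)/T})$.

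\textbf{Main obstacle.} The hardest step is (b): converting the uniform threshold error into a value loss without it compounding over many adjustments. We would try to prove a one-step exchange inequality. Because the policy stops weakly earlier, at its stopping state $p\le\bar b(s)$ we have
$$V^{\tilde\theta}(s,p)-(v-p)=\bigl(m^{\tilde\theta}(s)-(v-p)\bigr)^+\le \bigl(\bar b(s)-b^{\tilde\theta}(s)\bigr)^+,$$
using $b=v-m$ from Proposition~\ref{prop:threshold}. Taking expectations then yields a single additive error with no accumulation. Summing (a), (b), (c) and the failure terms gives the stated bound.
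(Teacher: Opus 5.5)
Your proposal is essentially the paper's proof. It uses a learn-then-act policy that learns over a constant fraction of the window and then follows the upper envelope of oracle thresholds over a posterior credible set. The gap splits into a learning-delay term, controlled by the lower-tail decay $b^\theta(s)-p_L^\theta=O(s^{-1/\alpha})$, and an acting-phase term, controlled by an exchange inequality at the policy's stopping state, $W_\infty$-Lipschitz thresholds, and strong log-concavity of the posterior. Your deviations are cosmetic: a mode-centred credible set, $T_\ell=\kappa T$, bounding $V^\theta(T,P_0)\le v-p_L^\theta$ instead of using that $P_0$ and $P_\ell$ both have marginal $H_\theta$, and a value-coupling proof of threshold stability. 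Three statements do need repair.

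\emph{The exchange identity.} $V^{\tilde\theta}(s,p)-(v-p)=(m^{\tilde\theta}(s)-(v-p))^+$ is false for $p\in(b(s),v]$ and $s>0$. There $V(s,p)<m(s)$ strictly, because the agent must wait for an adjustment that may never arrive. Only the inequality holds, and it is all you need. Every waiting rule earns $e^{-\lambda w}(v-p)^++\int_0^w\lambda e^{-\lambda z}m(s-z)\,\mathrm{d}z\le\max\{(v-p)^+,m(s)\}$, since $m$ is nondecreasing. The paper instead uses $V(s,p)\le V(s,b(s))=v-b(s)$. Your ``no accumulation'' claim also rests on optional stopping: $V^{\tilde\theta}(T-t,P_t)$ is a martingale up to the oracle's stopping time, which dominates the policy's stopping time on the good event. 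You should state this.

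\emph{Step (a).} The upper bound must be $V^\theta(T,P_0)\le v-p_L^\theta$ exactly, since every price on the path is at least $p_L^\theta$. Any additive constant there destroys the rate. The matching lower bound $v-b^\theta(T-T_\ell)$ holds only in expectation over the fresh post-adjustment draw, not pointwise. Your ODE bound is stated only near $p_L^\theta$. To make $C$ uniform in $\theta$, use that the drift is at least $\lambda c\varepsilon^{\alpha+1}/(\alpha+1)$ outside the $\varepsilon$-neighbourhood, so the entry time into it is uniformly bounded.

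\emph{Step (c).} Gronwall applied to $b'=-\lambda G_H(b)$ gives a factor $e^{\lambda s}$ and cannot by itself yield a uniform-in-$s$ bound. Two arguments do work. One is the paper's one-sided comparison, which exploits that $G_H$ is nondecreasing. The other is your value-shift argument $|m^\theta(s)-m^{\theta'}(s)|\le W_\infty(H_\theta,H_{\theta'})$, combined with $b=v-m$. It is valid because $H_\theta$ has a density. With $F_\theta$ the CDF of $H_\theta$, the quantile map $F_{\theta'}^{-1}\circ F_\theta$ applied to each posted price turns the observed path into a $\theta'$-path within $W_\infty$ of it pointwise, so the $\theta'$-optimal rule stays adapted to the observed filtration.
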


The bound reflects two forces. A longer learning phase provides the agent with price adjustments and reduces posterior uncertainty, thereby reducing the loss from using an estimated threshold during the acting phase. At the same time, delaying action is costly because the agent may miss buying opportunities. The learn-then-act policy balances these effects: the lower-tail condition controls the cost of postponing purchase, while posterior concentration controls the error in the acting phase. Thus, the information gap is bounded by a learning-delay term $T^{-1/\alpha}$ and a posterior-uncertainty term $\sqrt{\log(eT)/T}$.

The short- and long-window bounds identifies the horizons in which model uncertainty can matter most. In short windows, adjustments are unlikely to occur, so the value of knowing the price-adjustment distribution is limited. In long windows, adjustments make the latent regime learnable while leaving enough time to use the learned policy. Between these two extremes, model uncertainty can be more consequential: the agent may observe enough price adjustments for beliefs to affect the stopping rule, but not enough to resolve the latent state before the opportunity expires.


\subsection{Extension: Learning the Trend}
\label{subsec:trend_bayes}

The same learning logic can also accommodate deterministic trends in price adjustments. Concretely, suppose that when a price adjustment occurs at elapsed time \(t\), the post-adjustment price has the form \(g_\eta(t)+X_\theta\), where \(g_\eta\) is a parametric trend function indexed by \(\eta\in\mathcal H\) and \(X_\theta\sim H_\theta\) is a residual price component indexed by \(\theta\in\Theta\). The joint latent state is therefore \(\xi:=(\eta,\theta)\in\Xi:=\mathcal H\times\Theta\), with prior belief \(\pi\in\mathcal P(\Xi)\). Price adjustment times still follow the Poisson process with arrival rate $\lambda$, while post-adjustment prices are drawn from the shifted distribution \(g_\eta(t)+H_\theta\).

Posterior updating is likewise joint. Let \(h_\theta\) denote the density associated with \(H_\theta\). If the agent observes a post-adjustment price $p$ at elapsed time~$t$, then each candidate pair $(\eta,\theta)$ assigns likelihood $h_\theta\bigl(p-g_\eta(t)\bigr)$, so Bayes' rule updates the joint belief over trend and residual regime simultaneously. Intuitively, the agent uses the timing pattern of observed prices to learn whether prices are drifting upward or downward over the window, and uses detrended residuals to learn whether the seller's pricing regime is systematically aggressive or conservative.

The threshold structure carries over by the same monotonicity argument as in \Cref{prop:bayes_threshold}. For each time-to-go and joint posterior, the buy region in the current price remains a lower interval. Trend learning changes the predictive price-adjustment distribution and the belief state, but not the threshold form of the stopping rule.


This joint-learning formulation remains conceptually close to the baseline Bayesian model because, after writing $s=T-t$, the dynamic-programming state remains $(s,p,\pi)$, with elapsed time entering the distribution of the post-adjustment price through $T-s$. The belief's domain expands from $\Theta$ to $\Xi$. The main cost is computational rather than conceptual: the posterior now evolves over a higher-dimensional parameter space, and the purchase threshold depends on beliefs about both the trend and the residual noise. For this reason, the extension is most attractive when the trend family is low-dimensional. A fully nonparametric trend specification would substantially enlarge the state space and would likely require a separate approximation method.
 

\section{Robust Buying Under Minimal Information}\label{sec:robust}

The previous two sections rely on a probabilistic adjustment model of future prices. The Bayesian model relaxes assumptions about the price-adjustment distribution, but it still assumes that the timing of price changes follows a Poisson process. In some applications, however, price histories may be short, market conditions may shift abruptly, or seller behavior may be too unstable to support such a structure. These considerations motivate a robust OR model that drops the Poisson, independence, and stationarity assumptions from the adjustment models above and asks what guarantees remain when the agent trusts only coarse price information.

In this section, we work directly with realized price paths in elapsed time \(t\in[0,T]\). The analysis is model-free: no stochastic model is imposed on the price path's evolution. Prices may change irregularly or even in an adversarial manner. The only assumption we impose is that the realized price path \(p:[0,T]\to[p_L,p_U]\) is bounded, with \(0\le p_L\le p_0\le p_U\). 

The buying agent observes the path sequentially and must decide online whether and when to purchase. As in the earlier sections, the agent has valuation $v$, but now the only trusted market information consists of the observed price history, the known bounds $(p_L,p_U)$, and the initial price $p_0$. Throughout this section, we focus on the nondegenerate case \(v>p_L\). If \(v\le p_L\), every feasible price is weakly above the valuation, so \(\mathrm{OPT}(p)=0\) for all paths and the competitive-ratio problem is trivial.

\subsection{Online Policy and Benchmark}

An online policy $\mathcal A$ chooses a purchase time \(\tau^{\mathcal A}\in[0,T]\cup\{\infty\}\), expressed in elapsed-time units, based only on the observed portion of the price path. If the agent purchases by the deadline, its payoff under path $p(\cdot)$ is
\[
  U^{\mathcal A}(p)
  =
  \bigl(v-p(\tau^{\mathcal A})\bigr)^+.
\]
If no purchase occurs by the deadline, the payoff is zero.

To evaluate the quality of an online buying policy, we compare it with an offline benchmark that observes the entire realized price path in advance. The offline value is
\[
  \mathrm{OPT}(p)
  =
  \sup_{t\in[0,T]} \bigl(v-p(t)\bigr)^+.
\]
This benchmark represents the best possible purchase timing decision. The performance criterion is the competitive ratio
\[
  \mathrm{CR}(\mathcal A)
  =
  \inf_{p(\cdot)}
  \frac{\mathbb E[U^{\mathcal A}(p)]}{\mathrm{OPT}(p)},
\]
with the usual convention that the ratio equals $1$ when $\mathrm{OPT}(p)=0$. The expectation is over the internal randomization of the policy, if randomization is allowed. This multiplicative criterion asks what fraction of the offline surplus can be guaranteed uniformly over all feasible price paths. We later complement it with minimax regret, which evaluates the absolute surplus loss relative to the same offline benchmark.

\subsection{Worst-Case Price Trajectories}
To derive performance limits, it is enough to consider a simple subclass of feasible paths. The adversary can let the price decline gradually, making waiting attractive, while retaining the ability to end the markdown and revert to a high price at an unknown time. This creates the central online tension: buying early sacrifices possible future savings, but waiting long risks missing a sale.

\begin{definition}[Flash-Sale Process]\label{def:flash-sale}
A \emph{flash-sale process} is characterized by a pair $(\hat{p},\sigma)$ consisting of:
\begin{enumerate}
\item[(i)] A \emph{skimming trajectory} $\hat{p}:[t_1,t_2]\to[p_L,p_U]$, which is a continuous linear\footnote{A linear trajectory is not essential; rather, any continuous decreasing trajectory is admissible. We adopt a linear function for analytical convenience.} function with $\hat{p}(t_1)=p_0$ and $\hat{p}(t_2)=p_L$, representing the posted price while the sale is active;
\item[(ii)] An \emph{end-sale time} $\sigma$, which is a random variable supported on $[t_1,t_2]$, representing the random instant at which the sale terminates;
\end{enumerate}

Conditional on a realization of $\sigma$, the realized price path is
\begin{equation*}
  p(t)
  \;=\;
  \begin{cases}
  p_0, & t\in[0,t_1),\\
  \hat{p}(t)=p_0-\frac{p_0-p_L}{t_2-t_1}(t-t_1), & t\in[t_1,\sigma),\\
  p_U,  & t\in[\sigma,T].
  \end{cases}
\end{equation*}
The agent observes $p(t)$ sequentially. At each instant $t<\sigma$, the observed price $p(t)=\hat p(t)$ is consistent with the sale still being active, but the agent does not know the realization of $\sigma$ before the sale ends and cannot distinguish ``the sale continues'' from ``the sale will end soon.'' Once the price jumps to the \emph{reversion price} $p_U$ at time $\sigma$, the agent learns that the sale has ended.
\end{definition}

For each realized end-sale time, the offline optimum under the corresponding price path is
\[
  \mathrm{OPT}(p) = \bigl(v - \hat p(\sigma^-)\bigr)^+ = \bigl(v - \hat p(\sigma)\bigr)^+,
\]
since $\hat p$ is non-increasing and the best achievable price is the last sale price $\hat p(\sigma^-)$ just before expiry. An online policy that has not made a purchase by time $\sigma$ faces the reversion price $p_U$ for the remainder of the horizon.

\subsection{Randomized Threshold Policies}
\begin{definition}[Random Purchase Threshold Policy]\label{def:random-threshold}
A \emph{random purchase threshold policy} is an online policy that draws a random threshold $b$ from a distribution on $[p_L,p_U]$ at time $0$, before observing the future price path, and then keeps this threshold fixed until the deadline. Given the realized threshold $b$, the policy uses the static threshold before the deadline and defines
\[
\tau_b=\inf\{\,t\in[0,T): p(t)\le b\,\},
\]
with the convention that \(\tau_b=\infty\) if the set is empty. If \(\tau_b<\infty\), the policy purchases at \(\tau_b\). If no such crossing occurs before the deadline, the policy applies the terminal valuation rule at time $T$: it purchases if and only if \(p(T)\le v\).
\end{definition}

Equivalently, the policy is fully characterized by a distribution over thresholds together with the terminal action. Conditional on the draw of $b$, the agent follows the corresponding deterministic static threshold rule on $[0,T)$.

\begin{figure}[!htbp]
  \centering
  \caption{Two sampled randomized thresholds and price trajectories.}
  \label{fig:flash_sale_paths}
  \includegraphics[width=0.7\linewidth]{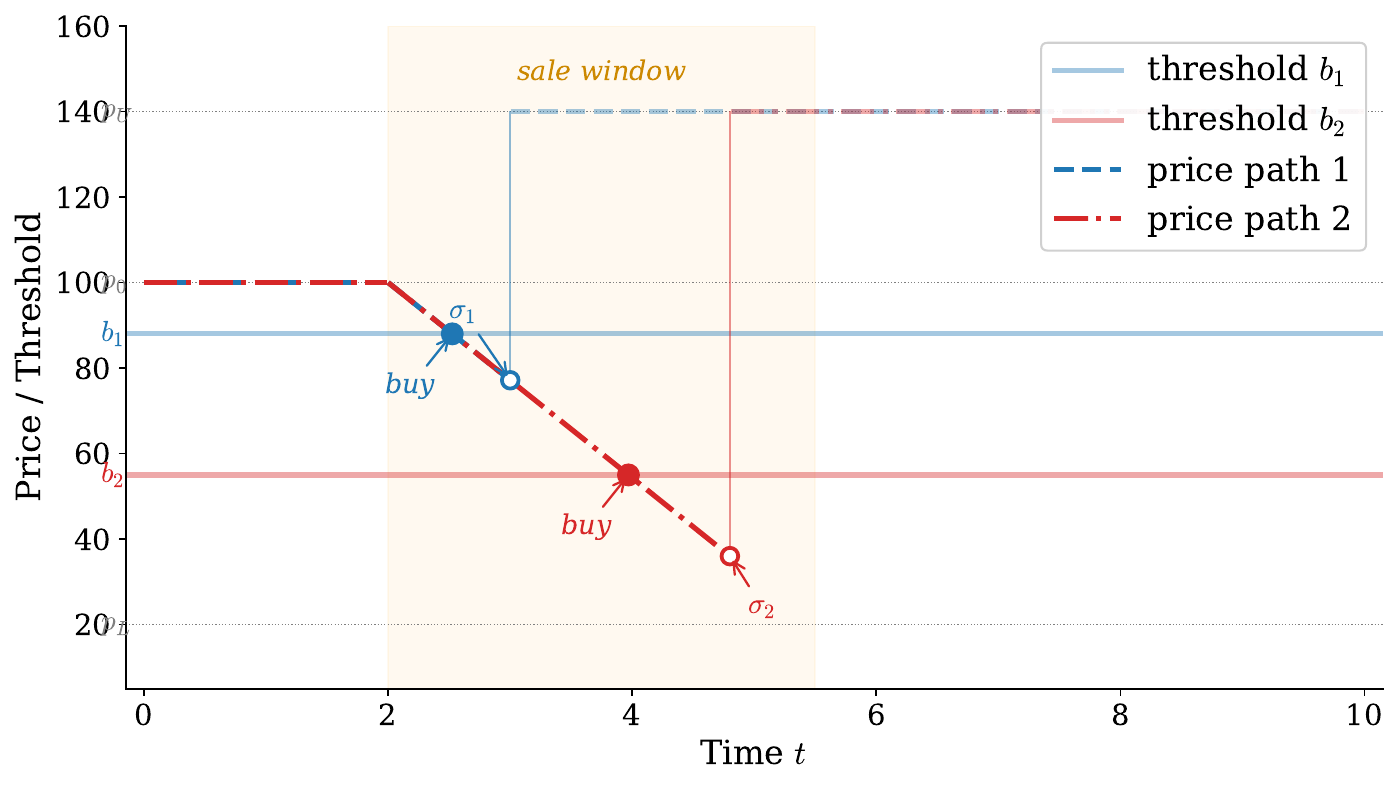}
\end{figure}

\subsection{Competitive Ratio}

We first characterize the optimal competitive ratio \(\mathrm{CR}^*\). The expression depends on how the consumer's valuation compares with the initial price and the upper price bound, because these comparisons determine both the value of buying immediately and the consequence of missing a temporary markdown. To characterize \(\mathrm{CR}^*\), we define \(\rho(v)\) as follows. 

If \(p_L<v\le p_0\), set \(\rho(v)=0\). If \(p_0<v\le p_U\), set
\[
\rho(v)
:=
\frac{1}{1+\log\left(\frac{v-p_L}{\,v-p_0\,}\right)}.
\]
If \(v>p_U\), let \(x_0\) denote the unique solution to \( \frac{x}{v-p_U}+\log x=\log(p_U-p_L)-1 \) and set
\[
\rho(v)
:=
\begin{cases}
\displaystyle
\frac{v-p_U}{x_0+v-p_U},
& x_0\ge p_U-p_0,\\[1.2em]
\displaystyle
\frac{v-p_0}{\,v-p_0 +(p_U-p_0)\log\left(\frac{p_U-p_L}{\,p_U-p_0\,}\right)},
& x_0<p_U-p_0.
\end{cases}
\]

\begin{proposition}[{\sc Optimal Competitive Ratio}]\label{prop:optimal_cr}
For \(v>p_L\), the optimal competitive ratio is \(\mathrm{CR}^*=\rho(v)\). Moreover, for \(v>p_0\), the ratio is achieved by a randomized threshold policy.
\end{proposition}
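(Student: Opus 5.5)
Each case calls for a matching upper bound (an adversarial price family that no online policy can beat) and a lower bound (a randomized threshold policy that attains $\rho(v)$). Both will be built on the flash-sale process of \Cref{def:flash-sale}.

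\emph{Case $p_L<v\le p_0$.} The ratio equals zero. Take the skimming trajectory from $p_0$ down to $p_L$, with $\sigma$ placed adversarially. If a deterministic policy first buys at some price $q<v$, set $\sigma$ just after that moment, so the price reverts to $p_U$. On that path the policy earns $v-q$ while $\mathrm{OPT}$ can be pushed arbitrarily close to $v-p_L$. Nested flash sales, each dropping toward $p_L$ after partial skims, give any finite ratio-ratio bound going to zero. Equivalently, for a randomized policy, take $\sigma$ with a distribution whose density makes the expected ratio at most $\epsilon$. That density is $\propto 1/(v-\hat p)$ truncated near $\hat p=v$, where the surplus ratio diverges logarithmically; because $v-p_0\le 0$, the normalizer $\log\frac{v-p_L}{(v-p_0)^+}$ is infinite. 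So the formula $\rho(v)=1/(1+\log\frac{v-p_L}{v-p_0})$ degenerates to $0$ as $v\downarrow p_0$, and this case is the limit of the next one.

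\emph{Case $p_0<v\le p_U$.} Parametrize price by surplus $u=v-p\in[u_0,u_L]$, where $u_0=v-p_0$ and $u_L=v-p_L$. Reversion to $p_U\ge v$ gives zero surplus.

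For the upper bound, restrict to flash sales. A randomized online policy is then a distribution $F$ over the surplus level $u$ at which it buys along the skim. Against a sale ending at surplus level $x$, the policy earns $\int_{[u_0,x]}u\,dF(u)$, and we need this to be at least $c\,x$ for all $x$.

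For the lower bound, propose the threshold distribution with an atom $u_0/u_L\cdot$(normalizer) at $u_0$ (buy immediately) and density $\propto 1/u$ on $(u_0,u_L]$, scaled so that $\int_{u_0}^x u\,dF+u_0F(\{u_0\})=c\,x$ with equality. Solving gives $F(\{u_0\})=c$, $dF=c\,du/u$, and total mass $c(1+\log(u_L/u_0))=1$, hence $c=\rho(v)$.

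To show this policy is $\rho$-competitive against \emph{all} paths, argue that any path's $\mathrm{OPT}$ is the running-minimum surplus $x^*$. A threshold at $b$ is crossed whenever the running minimum falls below $b$, and a crossing at the first time yields price at most $b$. So realized surplus is at least the threshold surplus for every drawn threshold below $x^*$, and the equality computation applies. Deadline terminal purchases only help.

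For optimality, apply Yao's principle. Choose the adversary's distribution of the ending level $x$ so that every deterministic stopping level earns ratio $\le\rho$: a point mass at $u_L$ plus density making $\mathbb E[u\mathbf 1\{x\ge u\}/x]$ constant in $u$. That constant works out to $\rho$ by the same integrals.

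\emph{Case $v>p_U$.} Now reversion yields positive surplus $v-p_U$, so the constraint becomes $\int_{[u_0,x]}u\,dF+(1-F(x))(v-p_U)\ge c\,x$. Differentiating the equality version gives the ODE $(x-(v-p_U))f(x)=c$ on the support. Integrating from the lower support point $x_\ell$ to $u_L$ under total mass one produces the transcendental equation defining $x_0$, with $x_\ell$ expressed through $x_0$ via $x_\ell=x_0+v-p_U$ in surplus coordinates.

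\begin{itemize}
\item If $x_0\ge p_U-p_0$, the optimal support starts strictly inside the skim and there is no atom; the ratio is $(v-p_U)/(x_0+v-p_U)$, the value forced at $x=x_\ell$.
\item Otherwise the lower endpoint would fall below $u_0$, so we truncate at $u_0$ and place an atom there, yielding the second formula.
\end{itemize}

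The matching adversary again comes from Yao's principle with the dual distribution on the end-sale level. The main obstacle will be verifying this case split and the complementary-slackness pairing between primal and dual, especially monotonicity in $x$ of the residual and uniqueness of $x_0$. The left side of the equation defining $x_0$ is strictly increasing in $x$, which settles uniqueness.
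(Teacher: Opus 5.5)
Your proposal follows essentially the paper's route---Yao's principle on flash-sale processes with an equalizing end-sale law for the upper bound, and for the lower bound the log-type randomized threshold with density $\rho(v)/(\min\{v,p_U\}-z)$ (with an atom at $p_0$, or continuous truncation at $p_U-x_0$ when $x_0\ge p_U-p_0$), verified pathwise through the running minimum. The differences are cosmetic: you equalize $\mathbb E[\ALG/\OPT]$, so your adversary is the paper's law $\Pbb(X\ge x)=A/(A+x)$ reweighted by $\OPT$ (the paper equalizes $\mathbb E[\ALG]/\mathbb E[\OPT]$), and you derive the threshold law from the equalization ODE rather than positing and checking it. The steps you defer are exactly the paper's case computations: the explicit $v>p_U$ adversary, and nonnegativity of the atom $1-\rho L$, which follows from $x_0<p_U-p_0$ and monotonicity of $x\mapsto x/(v-p_U)+\log x$. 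In the case $v\le p_0$, drop the self-contradictory deterministic sentence and keep only the truncated-density construction, which is the $\OPT$-reweighted form of the paper's choice $q=v-\varepsilon$, $A=\varepsilon$.
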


The proof proceeds by establishing matching upper and lower bounds. The upper bound constructs flash-sale processes under which no online policy can earn more than a \(\rho(v)\) fraction of the offline payoff. The lower bound constructs randomized threshold policies that guarantee \(\rho(v)\) against every feasible path. The upper-bound construction is related to the equalizing distributions used in online booking \citep{ball2009toward}, but the mathematical primitives differ. In the online-booking model, the adversary distribution is governed by a single fare bound. In our buying model, three price primitives matter: the initial price \(p_0\), the lower price bound \(p_L\), and the upper price bound \(p_U\). Moreover, missing the markdown has different consequences depending on whether \(v\le p_U\) or \(v>p_U\), which is why \(\rho(v)\) has valuation-dependent regimes rather than reducing directly to the competitive ratio in \cite{ball2009toward}.

\begin{lemma}[{\sc Flash-Sale Upper Bound}]\label{lem:flash_sale_upper_main}
For every online policy \(\mathcal A\), \(\mathrm{CR}(\mathcal A)\le \rho(v)\). Specifically, the upper bound is witnessed by the following flash-sale process. For the nondegenerate case \(p_0>p_L\), fix a reference time \(\tilde t\) on the sale trajectory and write \(q:=\hat p(\tilde t)\). The sale remains active until \(\tilde t\), and the random end-sale time after \(\tilde t\) has distribution
\begin{equation}\label{eq:flash_sale_end}
\Pbb \left[\sigma\le t\right] =
\begin{cases}
0, & t_1\le t<\tilde t,\\[0.8ex]
\dfrac{(p_0-p_L)(t-\tilde t)}
{A(t_2-t_1)+(p_0-p_L)(t-\tilde t)}, & \tilde t\le t<t_2,\\[2ex]
1, & t=t_2.
\end{cases}
\end{equation}
The regime-specific choices are:
\begin{enumerate}[label=\textnormal{(\roman*)}]
    \item If \(p_L<v\le p_0\), take \(q=v-\varepsilon\) and \(A=\varepsilon\); the resulting upper bound tends to zero as \(\varepsilon\downarrow0\).
    \item If \(p_0<v\le p_U\), take \(q=p_0\) and \(A=v-p_0\).
    \item If \(v>p_U\) and \(x_0<p_U-p_0\), take \(q=p_0\) and \(A=p_U-p_0\).
    \item If \(v>p_U\) and \(x_0\ge p_U-p_0\), take \(q=p_U-x_0\) and \(A=x_0\).
\end{enumerate}
\end{lemma}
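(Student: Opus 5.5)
The plan is to prove the upper bound with a Yao-type minimax argument. For any randomized online policy \(\mathcal A\), the competitive ratio satisfies
\[
\mathrm{CR}(\mathcal A)\le \mathbb E_\sigma\!\left[\mathbb E_{\mathcal A}[U^{\mathcal A}(p_\sigma)]/\mathrm{OPT}(p_\sigma)\right]=\mathbb E_{\mathcal A}\!\left[\mathbb E_\sigma[U/\mathrm{OPT}]\right],
\]
where \(p_\sigma\) is the flash-sale path with end-sale time \(\sigma\) distributed as in \eqref{eq:flash_sale_end}. It therefore suffices to show that every \emph{deterministic} online policy earns \(\mathbb E_\sigma[U/\mathrm{OPT}]\le\rho(v)\) against this random path.

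\textbf{Reduction to one parameter.} Against the flash-sale path, all deterministic policies look alike until the sale ends: the observed history is the common prefix \(p_0\) on \([0,t_1)\) followed by \(\hat p\). A deterministic policy is therefore described by a single planned purchase time \(t^\star\), together with what it does after the reversion is observed. After reversion, the best it can do is buy at \(p_U\), which earns \((v-p_U)^+\).

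Reparametrize by the price \(x=\hat p(t^\star)\). Linearity of \(\hat p\) gives \((p_0-p_L)(t-\tilde t)/(t_2-t_1)=q-\hat p(t)\), so the end-sale price \(Y:=\hat p(\sigma)\) has
\[
\Pbb[Y\ge y]=\frac{q-y}{A+q-y}.
\]
Hence \(Y\) has density \(A/(A+q-y)^2\) on \((p_L,q)\), an atom of mass \(A/(A+q-p_L)\) at \(p_L\), and no mass above \(q\). On this path \(\mathrm{OPT}=(v-Y)^+\). There are three candidate actions:
\begin{itemize}
\item buying at time \(0\) (or anywhere before \(\tilde t\)), which earns \(v-\hat p(t^\star)\) with \(\hat p(t^\star)\ge q\) and never more than buying at \(q\);
\item planning to buy at some \(x\in[p_L,q]\);
\item never buying before reversion.
\end{itemize}

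\textbf{Equalizing computation.} For a plan \(x\), I will compute
\[
R(x)=\int_{[p_L,x]}\frac{v-x}{v-y}\,dF_Y(y)+\int_{(x,q]}\frac{(v-p_U)^+}{v-y}\,dF_Y(y),
\]
where the first term covers sales that survive to \(x\) and the second covers sales that end earlier. The choice of \((q,A)\) in each regime is designed to make \(R\) constant, or maximized at an endpoint.
\begin{itemize}
\item In regime (ii), \(A+q-y=v-y\), so the density is \((v-p_0)/(v-y)^2\), and the second integral vanishes because \(v\le p_U\). The ratio integrates to \((v-x)(v-p_0)\bigl[\tfrac{1}{2}\bigl((v-y)^{-2}\bigr)\bigr]\)-type terms plus the atom. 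I will verify that \(R(x)\) is independent of \(x\) and equals \(1/\bigl(1+\log\frac{v-p_L}{v-p_0}\bigr)\) by evaluating at \(x=p_0\) and \(x=p_L\), and by differentiating in \(x\).
\item In regimes (iii) and (iv), \(A+q-y=p_U-y\), because \(q+A=p_U\) in both. Here the constant \((v-p_U)\) post-reversion payoff enters the second integral. Setting \(dR/dx=0\) should reproduce the defining equation \(\frac{x_0}{v-p_U}+\log x_0=\log(p_U-p_L)-1\). The case split \(x_0\gtrless p_U-p_0\) decides whether the unconstrained choice \(q=p_U-x_0\) is feasible (\(q\le p_0\)) or is clipped at \(q=p_0\); in the clipped case, buying immediately at \(p_0\) becomes the binding deviation.
\item In regime (i), \(q=v-\varepsilon\) and \(A=\varepsilon\) put almost all mass near \(p_L\) with \(\mathrm{OPT}\approx v-p_L\). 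Any purchase at a price \(\ge q\) earns at most \(\varepsilon\), and a planned purchase at \(x<q\) succeeds with probability \(\varepsilon/(\varepsilon+q-x)\). Both give \(R=O(\varepsilon\log(1/\varepsilon))\to0\).
\end{itemize}

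\textbf{Main obstacle.} The hardest step is regime (iv) and (iii), the case \(v>p_U\). There \(R(x)\) is not identically constant, because the post-reversion payoff \(v-p_U\) adds a term \((v-p_U)\int_{(x,q]}\frac{A}{(p_U-y)^2(v-y)}\,dy\). I need to show that its maximum over \(x\in[p_L,q]\), together with the atom at \(p_L\) and the option of buying at \(q\) or \(p_0\), equals \(\rho(v)\). I would first compute \(R(p_L)\) and \(R(q)\) in closed form and check that they coincide exactly when \(A=x_0\). I would then show that \(R\) is quasi-convex in \(x\), via the sign of \(R'\), so that the endpoints dominate. The monotonicity of \(x\mapsto \frac{x}{v-p_U}+\log x\), which gives uniqueness of \(x_0\), is the tool that will settle both that endpoint comparison and the case split.
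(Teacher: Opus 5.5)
Your reduction uses the expectation-of-ratio form of Yao's principle, $\mathrm{CR}(\mathcal A)\le\mathbb E_\sigma\bigl[\mathbb E_{\mathcal A}[U]/\mathrm{OPT}\bigr]$. The end-sale law \eqref{eq:flash_sale_end}, however, is calibrated for the ratio-of-expectations form, and under your criterion it does not yield $\rho(v)$. The step where you ``verify that $R(x)$ is independent of $x$'' is false. In regime (ii), set $r:=(v-p_0)/(v-p_L)$. Your own density $(v-p_0)/(v-y)^2$ and atom $r$ at $p_L$ give
\[
R(x)=\frac{v-p_0}{2}\left[\frac{v-x}{(v-p_L)^2}+\frac{1}{v-x}\right],
\]
so $R(p_L)=r$ while $R(p_0)=\tfrac12(1+r^2)$. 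Since $(1+r^2)\bigl(1+\log(1/r)\bigr)>2$ for every $r\in(0,1)$, buying at $p_0$ beats $\rho(v)$, and your argument only yields $\mathrm{CR}\le\tfrac12(1+r^2)$.

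Regime (i) fails the same way. The law is not concentrated near $p_L$; the atom there has mass only $\varepsilon/(v-p_L)$. Instead $\Pbb(Y\ge q-\varepsilon)=1/2$, so buying at $q$ earns $\varepsilon$ while $\mathrm{OPT}\le 2\varepsilon$ half the time. Indeed $\mathbb E[\varepsilon/(v-Y)]=\tfrac12+\varepsilon^2/\bigl(2(v-p_L)^2\bigr)$, not $O(\varepsilon\log(1/\varepsilon))$.

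The missing idea is to average $\mathbb E_{\mathcal A}[U(p)]\ge\mathrm{CR}(\mathcal A)\,\mathrm{OPT}(p)$ over the random path. This gives $\mathrm{CR}(\mathcal A)\le\mathbb E[\mathrm{ALG}]/\mathbb E[\mathrm{OPT}]$; equivalently, you could run your expectation-of-ratio argument against the given law reweighted by $\mathrm{OPT}$. This ratio-of-expectations form is what the law is built for. The markdown depth $X:=q-\hat p(\sigma)$ has $\Pbb(X\ge y)=A/(A+y)$. Hence every target $y$ has expected payoff $\frac{A}{A+y}(A+y)=A$ in regimes (i)--(ii), and at most $(v-p_U)+A$ in regimes (iii)--(iv), where a missed target is credited $v-p_U$. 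Meanwhile $\mathbb E[\mathrm{OPT}]$ is the corresponding constant ($A$, resp.\ $v-p_U+A$) plus $\mathbb E[X]=A\log\frac{A+M}{A}$, with $M=q-p_L$. The four ratios then follow by direct substitution, with no quasi-convexity or endpoint comparison needed.

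In particular, $x_0$ does not arise from a first-order condition in the policy's target $x$. It is the adversary's optimal $A$, minimizing $(D+A)/\bigl(D+A+A\log\frac{p_U-p_L}{A}\bigr)$ with $D=v-p_U$. Its defining equation is used only to simplify $\mathbb E[\mathrm{OPT}]$ to $(D+x_0)^2/D$ in regime (iv).
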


Under these distributions, any deterministic online rule can be viewed as choosing a target markdown. The induced markdown-depth distribution equalizes the online payoff across target markdowns, while the offline oracle captures the final sale price just before expiry. The proof applies Yao's principle and computes the resulting expected payoff ratios.

The matching lower bound is constructive: a randomized static threshold, with distribution chosen by valuation regime, balances the potential gain from waiting for lower prices against the risk of missing the eventual best price.

\begin{lemma}[{\sc Robustly Optimal Randomized Threshold Policies}]
\label{lem:matching_threshold_policy}
There exists a randomized threshold policy whose competitive ratio is at least \(\rho(v)\). The policy draws a random threshold $b$ before the price path is revealed, and purchases at the first pre-deadline time when the observed price falls to or below $b$. If the threshold is not reached before the deadline, the policy applies the terminal valuation rule at \(T\), purchasing if and only if \(p(T)\le v\).

For \(p_L<v\le p_0\), the claim is immediate because \(\rho(v)=0\). For \(v>p_0\), write \(a:=\min\{v,p_U\}\) and define
\[
c(v):=
\begin{cases}
p_0, & p_0<v\le p_U,\\
p_0, & v>p_U\ \text{and}\ x_0<p_U-p_0,\\
p_U-x_0, & v>p_U\ \text{and}\ x_0\ge p_U-p_0.
\end{cases}
\]
Draw \(b\) from the distribution
\[
F_b(z):=
\begin{cases}
0, & z<p_L,\\[0.6ex]
\rho(v)\log\!\left(\dfrac{a-p_L}{a-z}\right), & p_L\le z<c(v),\\[1.6ex]
1, & z\ge c(v).
\end{cases}
\]
This policy achieves a competitive ratio of \(\rho(v)\).
\end{lemma}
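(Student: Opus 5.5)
The proof will show that for every feasible path $p:[0,T]\to[p_L,p_U]$ the randomized threshold policy of \Cref{lem:matching_threshold_policy} satisfies $\mathbb E[U(p)]\ge \rho(v)\,\mathrm{OPT}(p)$. The key reduction is to the running minimum $m(t):=\inf_{u\le t}p(u)$ with $m(0)=p_0$. For a fixed draw $b$:
\begin{itemize}
\item If $b\ge p_0$, the policy buys at time $0$ at price $p_0$.
\item If $b<p_0$, it buys at the first crossing of level $b$, at a price at most $b$. With lower semicontinuity or a right-continuous infimum convention this price is exactly $b$ when the path crosses continuously; any downward jump only helps.
\item If $b$ is below the global minimum $m^*:=\inf_{t<T}p(t)$, the policy buys at the deadline, earning $(v-p(T))^+\ge 0$.
\end{itemize}
Since $\mathrm{OPT}(p)=(v-m^*)^+$ up to the deadline point (handled identically), the guarantee reduces to showing
\[
\int_{[m^*,\,c(v)]}(v-z)\,dF_b(z)\;+\;(1-F_b(c(v)^-))(v-p_0)\mathbbm 1\{c(v)=p_0\}\;+\;\ldots\;\ge\;\rho(v)(v-m^*),
\]
where the terminal contribution is dropped as a lower bound. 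This is a one-parameter inequality in $m^*\in[p_L,c(v)]$. If $m^*>c(v)$, every draw buys at a price at most $\min\{p_0,\ldots\}$, and the check is direct.

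With the reduction in hand, I will compute the payoff. The atom $1-\rho(v)\log\frac{a-p_L}{a-c}$ sits at $c(v)$, and on $[p_L,c)$ the density is $\rho/(a-z)$. Set $a=v$ in the regime $p_0<v\le p_U$. Then $(v-z)\cdot\rho/(v-z)=\rho$ is constant, so the integral over $[m^*,p_0)$ equals $\rho(p_0-m^*)$, and the atom contributes $(v-p_0)\bigl(1-\rho\log\frac{v-p_L}{v-p_0}\bigr)=\rho(v-p_0)$, using the definition of $\rho$. The total is exactly $\rho(v-m^*)$, with equality for every $m^*$. This equalization is the design principle, and verifying it is routine.

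For $v>p_U$ we have $a=p_U$, so the integrand becomes $(v-z)\rho/(p_U-z)\ge\rho$ and the same argument gives at least $\rho(v-m^*)$ on the continuous part. The delicate issue is the atom at $c(v)$, which must cover the surplus $v-c$. This is also where the reversion risk enters: with $v>p_U$, even after missing the sale the agent can still get $v-p_U>0$ at the deadline, and that term must be retained rather than dropped. The adversary's worst path pushes the price back to $p_U$.

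I expect the main obstacle to be the $v>p_U$ case. Here I will keep the terminal/post-reversion payoff, which is at least $v-p_U$ for draws $b<m^*$, and minimize over $m^*$ the function
\[
g(m^*)=\int_{m^*}^{c}(v-z)\frac{\rho}{p_U-z}\,dz+\text{atom}\cdot(v-c)+F_b(m^*)(v-p_U)-\rho(v-m^*).
\]
Differentiating gives $g'(m^*)\propto \rho\bigl[1-\frac{v-m^*}{p_U-m^*}+\frac{v-p_U}{p_U-m^*}\bigr]=0$, so $g$ is constant in $m^*$. It therefore suffices to check $g\ge0$ at a single point, $m^*=c$. There the condition reduces to the atom inequality $(1-\rho\log\frac{p_U-p_L}{p_U-c})(v-c)\ge\rho(v-c)$, or rather an equality involving $x_0$ when $c=p_U-x_0$. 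This is exactly where the defining equation $\frac{x_0}{v-p_U}+\log x_0=\log(p_U-p_L)-1$ and the case split $x_0\gtrless p_U-p_0$ enter, and I will verify each case by substitution. Finally, I will note that $F_b$ is a valid CDF, i.e.\ $\rho\log\frac{a-p_L}{a-c}\le1$, which follows from the same identities. Together with \Cref{lem:flash_sale_upper_main}, this yields \Cref{prop:optimal_cr}.
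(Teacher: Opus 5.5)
Your route is the paper's: reduce to the path minimum, use the equalizing density $\rho/(a-z)$ with an atom at $c(v)$, and for $v>p_U$ credit unreached draws with the terminal payoff $v-p_U$. Indeed $g(m^*)+\rho(v-m^*)$ equals the paper's lower bound $D+\mathbb E[(p_U-b)\mathbf 1\{b\ge m^*\}]$ with $D:=v-p_U$, and your computation $g'\equiv 0$ is correct; it is the differentiated form of the paper's closed-form integral.

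\textbf{The endpoint check fails as written.} At $m^*=c$, the draws $b<c$ have mass $F_b(c^-)=\rho L_c$, where $L_c:=\log\frac{p_U-p_L}{p_U-c}$. These draws are unreached but still earn $D$, so
\[
g(c^-)=(1-\rho L_c)(v-c)+\rho L_c\,D-\rho(v-c).
\]
This is not the atom inequality $(1-\rho L_c)(v-c)\ge\rho(v-c)$ that you state. That inequality is false in both $v>p_U$ subcases:
\begin{itemize}
\item If $c=p_0$, write $C:=p_U-p_0$ and $L:=L_{p_0}$, so $\rho=(D+C)/(D+C+CL)$. Your inequality demands $\rho(1+L)\le 1$, but $\rho(1+L)=1+DL/(D+C+CL)>1$ whenever $p_0>p_L$.
\item If $c=p_U-x_0$, the $x_0$ equation gives $\rho L_c=1$. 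So there is no atom, and your left side is $0<\rho(v-c)$.
\end{itemize}
With the terminal term restored, both subcases close with equality. In the first, $g(c^-)=(D+C)-\rho(D+C+CL)=0$. In the second, $g(c^-)=D-\rho(D+x_0)=0$. Dropping this term breaks the verification exactly where you said it must be retained.

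\textbf{The case $m^*>c(v)$ is described backwards.} It arises only when $c=p_U-x_0<p_0$. Then no draw in the support is reached before the deadline, so nothing buys at a price at most $\min\{p_0,\ldots\}$. The entire payoff is the terminal one, at least $D$. The needed check is $D>\rho(v-m^*)$, which holds because $v-m^*<D+x_0$ and $\rho=D/(D+x_0)$.

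\textbf{Validity of $F_b$ when $c=p_0$ and $v>p_U$ is not an identity.} The condition $\rho L<1$ is equivalent to $L<1+C/D$. This follows from the case hypothesis $x_0<p_U-p_0$ together with the monotonicity of $x\mapsto x/D+\log x$, comparing its values at $x=C$ and $x=x_0$. This is where the case split is actually used in that subcase.
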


Together, \Cref{lem:flash_sale_upper_main,lem:matching_threshold_policy} imply \(\mathrm{CR}^*\le\rho(v)\) and \(\mathrm{CR}^*\ge\rho(v)\), respectively, and therefore prove \Cref{prop:optimal_cr}. The zero guarantee for \(p_L<v\le p_0\) reflects the multiplicative nature of competitive analysis: the initial price yields no positive surplus, and an adversary can make any attractive markdown arbitrarily small. The payoff ratio can therefore collapse even when the absolute loss remains meaningful. This motivates the minimax regret criterion below.

\subsection{Minimax Regret}
In this subsection, we introduce and characterize minimax regret under the same primitives. This additive criterion compares the online payoff to the same offline benchmark in terms of a surplus value, rather than as a payoff ratio. 

For a price path $p(\cdot)$ and an online policy $\mathcal A$ with purchase time $\tau^{\mathcal A}$, the regret of policy $\mathcal A$ against path $p$ is
\[
\Reg(\mathcal A,p)=\OPT(p)-\mathbb E\left[U^{\mathcal A}(p)\right].
\]
The associated minimax regret is
\[
\Reg^*=\inf_{\mathcal A}\sup_{p(\cdot)} \Reg(\mathcal A,p),
\]
where the infimum is over all possibly randomized online policies and the supremum is over all feasible price paths in $[p_L,p_U]$.

For \(v>p_L\), write \(a:=\min\{v,p_U\}\), \(\gamma:=\min\{p_0,\; a-(a-p_L)/e\}\), and \(\Delta:=a-\gamma\).
We will show that the minimax regret is
\[
\bar R(v)
:=
\Delta\log\frac{a-p_L}{\Delta}.
\]

\begin{proposition}[{\sc Minimax Regret}]
\label{prop:minimax-regret}
For $v>p_L$, the minimax regret is $\Reg^*=\bar R(v)$. Moreover, the robustly optimal policy is a randomized threshold policy.
\end{proposition}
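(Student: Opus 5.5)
The plan is to prove matching upper and lower bounds, as in the competitive-ratio analysis of \Cref{lem:flash_sale_upper_main,lem:matching_threshold_policy}. A preliminary reduction explains why $a=\min\{v,p_U\}$ is the relevant scale. Let $m:=\inf_t p(t)$ be the lowest price on the path. A threshold policy with realized threshold $b$ either reaches a price $\le b$ before the deadline or it does not.
\begin{itemize}
\item If it reaches such a price, its loss relative to $\OPT(p)\le (v-m)^+$ is at most $b-m$.
\item If it does not, two cases arise. When $v\le p_U$, its payoff is at least $0$, so its loss is at most $(v-m)^+\le a-m$. When $v>p_U$, the terminal valuation rule buys at $p(T)\le p_U$, so its loss is at most $(v-m)-(v-p_U)=a-m$.
\end{itemize}
Hence, for every feasible path with minimum $q:=m$, and with $F$ the distribution of $b$,
\[
\Reg(\mathcal A,p)\le R(q):=\int_{[q,\,p_U]}(b-q)\,\dd F(b)+(a-q)\,F(q^-).
\]
This holds up to the measure-zero subtlety of whether the infimum is attained, which I would handle by approximation. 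If $q\ge a$, then $\OPT=0$, or the loss is bounded trivially; those cases need separate but easy checks.

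\emph{Achievability.} I will choose $F$ so that $R$ is constant on $[p_L,\gamma]$. Formally, $R'(q)=-1+(a-q)f(q)$, so I set $f(z)=1/(a-z)$, which gives
\[
F(z)=\log\frac{a-p_L}{a-z},\qquad z\in[p_L,\gamma).
\]
The remaining mass $1-\log\frac{a-p_L}{\Delta}\ge 0$ is placed as an atom at $\gamma$. This mass is nonnegative precisely because $\gamma\le a-(a-p_L)/e$. When $\gamma=p_0$, the atom means buying at time $0$. Evaluating at $q=\gamma$ gives
\[
R\equiv \Delta\log\frac{a-p_L}{\Delta}=\bar R(v)\quad\text{on }[p_L,\gamma].
\]
For $q>\gamma$, every drawn threshold is $\le\gamma<q$, so $F(q^-)=1$ and $R(q)=a-q$. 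If $\gamma<p_0$, the log term equals $1$ and $a-q<\Delta=\bar R(v)$. If $\gamma=p_0$, then $q>p_0$ cannot be the path minimum at all. So $\sup_p\Reg\le\bar R(v)$, and this policy is a randomized threshold policy in the sense of \Cref{def:random-threshold}.

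\emph{Converse.} I will apply Yao's principle with a flash-sale process (\Cref{def:flash-sale}) that skims linearly from $p_0$ to $p_L$ and then reverts to $p_U$. For $v>p_U$, the reversion still leaves the terminal surplus $v-p_U$, which is exactly what makes the missed-sale loss $a-q$. Against such a process, any deterministic online rule amounts to choosing a target price $c\in[p_L,p_0]$ at which it buys, or choosing never to buy early. I then pick the law of the last sale price $Q=\hat p(\sigma^-)$ to equalize the regret across all targets:
\[
r(c)=\E\big[(c-Q)\mathbbm 1\{Q\le c\}\big]+\E\big[(a-Q)\mathbbm 1\{Q>c\}\big].
\]
Differentiating in $c$ gives $\Pbb(Q\le c)=(a-c)\,g(c)$, where $g$ is the density of $Q$. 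This is solved by a law of the form
\[
\Pbb(Q\le c)=\frac{\kappa}{a-c}\ \text{ on }(p_L,\gamma],\qquad\text{plus an atom at } p_L.
\]
Equivalently, it is the end-sale law \eqref{eq:flash_sale_end} reparametrized, with $q=\gamma$ and $A=\Delta$, together with an atom of the sale ending at $\gamma$, i.e.\ at time $0$ when $\gamma=p_0$. The constant $\kappa$ is fixed by the total mass. I then compute the common value of $r(c)$ and check that it equals $\Delta\log\frac{a-p_L}{\Delta}$.

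The main obstacle is getting this adversarial distribution exactly right in both subcases of $\gamma$, and verifying that targets $c>\gamma$ (possible only when $\gamma<p_0$) do no better. For such targets the adversary's choice is to end the sale at or just below $\gamma$, and I expect the tangency condition defining $a-(a-p_L)/e$ to be exactly what makes the regret curve flat at $\gamma$. I would first carry out the computation for $\gamma=p_0$, and then handle $\gamma<p_0$ by letting the price fall instantly from $p_0$ to $\gamma$ and running the same construction from $\gamma$.
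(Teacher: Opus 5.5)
Your proposal is correct and follows the paper's proof essentially step by step. The achievability distribution (density $1/(a-b)$ on $[p_L,\gamma)$ plus an atom $1-\log\frac{a-p_L}{\Delta}$ at $\gamma$) is exactly the policy of \Cref{lem:regret_achievable}, written as one formula via $a$. Your converse is the Yao/flash-sale argument of \Cref{lem:regret_lower_bound} with reference price $\gamma$ and $A=\Delta$.

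Two corrections to the converse.

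First, drop the ``atom of the sale ending at $\gamma$''. With $\kappa=\Delta$ forced by $\Pbb(Q\le\gamma)=1$, the law $\Pbb(Q\le c)=\Delta/(a-c)$ is already continuous at $\gamma$. A positive atom $\beta$ there would force $\kappa=(1-\beta)\Delta$. A purchase at, or just above, $\gamma$ would then have expected regret $(1-\beta)\bar R(v)<\bar R(v)$, so the lower bound would fail.

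Second, targets $c\ge\gamma$ need no tangency argument. Since the sale surely survives down to $\gamma$, such a rule buys at a price of at least $\gamma$ on every path. Its regret is therefore at least $\E[\gamma-Q]=\int_{p_L}^{\gamma}\frac{\Delta}{a-x}\,dx=\bar R(v)$. In fact, the same construction with any reference price $\gamma'\in[p_L,\min\{p_0,a\})$ gives the bound $(a-\gamma')\log\frac{a-p_L}{a-\gamma'}$, and $\gamma$ is simply its maximizer. The condition $\gamma\le a-(a-p_L)/e$ matters only on the achievability side, where it makes the atom nonnegative and gives $a-q<\bar R(v)$ for $q>\gamma$.

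Your instant-drop variant for $\gamma<p_0$ also works, since every target in $[\gamma,p_0)$ then buys at exactly $\gamma$.
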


The proof again proceeds by establishing matching upper and lower bounds. The lower bound constructs flash-sale distributions under which every online policy incurs regret at least \(\bar R(v)\). The upper bound constructs randomized threshold policies whose worst-case regret is at most \(\bar R(v)\).

\begin{lemma}[{\sc Adversarial Regret Lower Bound}]
\label{lem:regret_lower_bound}
For every online policy $\mathcal A$,
\[
\sup_{p(\cdot)}\Reg(\mathcal A,p)\ge \bar R(v).
\]
The lower bound is witnessed by the following flash-sale process. With \(a\), \(\gamma\), and \(\Delta\) as defined above, fix a skimming trajectory $\hat p$ and a reference time $\tilde t_R$ such that $\hat p(\tilde t_R)=\gamma$. The sale remains active until $\tilde t_R$, and the end-sale time after $\tilde t_R$ is chosen according to
\[
\Pbb \left[\sigma\le t\right] =
\begin{cases}
0, & t_1\le t<\tilde t_R,\\[0.8ex]
\dfrac{(p_0-p_L)(t-\tilde t_R)}
{\Delta(t_2-t_1)+(p_0-p_L)(t-\tilde t_R)}, & \tilde t_R\le t<t_2,\\[2ex]
1, & t=t_2.
\end{cases}
\]
The realized path follows $\hat p$ until $\sigma$ and then jumps to $p_U$.
\end{lemma}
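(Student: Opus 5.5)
The plan is to apply Yao's principle to the flash-sale process in the statement. A randomized online policy is a mixture of deterministic policies, so its worst-case regret is at least its expected regret under any distribution over paths. It therefore suffices to show that every deterministic online policy has expected regret at least \(\bar R(v)\) under the stated distribution of \(\sigma\). The degenerate case \(p_0=p_L\) is trivial: then \(\gamma=p_L\), \(\Delta=a-p_L\) and \(\bar R(v)=0\). So assume \(p_0>p_L\).

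\textbf{Step 1: reparametrize by the final sale price.} Let \(X:=\hat p(\sigma)\), the last sale price before expiry. By linearity of \(\hat p\), for \(t\ge\tilde t_R\) we have \((p_0-p_L)(t-\tilde t_R)/(t_2-t_1)=\gamma-\hat p(t)\). Writing \(x=\hat p(t)\), the stated law becomes
\[
\Pbb[X\ge x]=\frac{\gamma-x}{a-x},\qquad x\in[p_L,\gamma].
\]
Hence \(\Pbb[X> q]=(\gamma-q)/(a-q)\) and \(\Pbb[X\le q]=\Delta/(a-q)\), with an atom of mass \(\Delta/(a-p_L)\) at \(p_L\). Since \(X\le\gamma\le a\le v\), the offline value is \(\OPT=v-X\).

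\textbf{Step 2: reduce deterministic policies to a target price.} Before \(\sigma\) every path coincides with \(\hat p\), and the policy learns nothing until the jump. A deterministic policy is therefore described by the price \(q\) on \(\hat p\) at which it buys, or by the option of never buying during the sale. If the sale ends first, the policy can at best buy at \(p_U\), which yields \((v-p_U)^+=v-a\). Call this \(v-a\) the fallback payoff.

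\textbf{Step 3: compute the regret by case.}
\begin{enumerate}[label=(\roman*)]
\item For a target \(q\in[p_L,\gamma]\), the expected payoff is \(\Pbb[X\le q](v-q)+\Pbb[X>q](v-a)\). The regret is then
\[
\E[a-X]-\Pbb[X\le q](a-q)=\E[a-X]-\Delta .
\]
This is the equalizing property: the regret does not depend on \(q\).
\item The expectation is
\[
\E[a-X]=\Delta+\int_{p_L}^{\gamma}\Pbb[X<x]\,dx=\Delta+\int_{p_L}^{\gamma}\frac{\Delta}{a-x}\,dx=\Delta+\Delta\log\frac{a-p_L}{\Delta}.
\]
So the regret in case (i) equals \(\bar R(v)\).
\item A policy that buys at some price \(q'\in(\gamma,p_0]\) before \(\tilde t_R\) earns \(v-q'\le v-\gamma\). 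Its regret is at least \(\E[a-X]-\Delta=\bar R(v)\).
\item A policy that never buys during the sale earns \(v-a\). Its regret is \(\E[a-X]\ge\bar R(v)\).
\end{enumerate}

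The main point to check is that the construction is admissible. The choice \(\gamma=\min\{p_0,\,a-(a-p_L)/e\}\) must keep the reference price on the trajectory, and the CDF must be a valid distribution function. Both hold because \(p_L\le\gamma\le p_0\). The role of the \(e\) in \(\gamma\) is only to make this equalizer optimal for the matching upper bound; it is not needed for the lower bound itself. Averaging the deterministic bound over the policy's internal randomization gives \(\sup_p\Reg(\mathcal A,p)\ge\bar R(v)\).
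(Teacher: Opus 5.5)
Your proof is correct and follows the paper's argument: Yao's principle applied to the stated flash-sale law, reparametrization by the terminal sale price $X=\hat p(\sigma)$, reduction of deterministic policies to a target price, and the equalizing property. The one difference is that you get constancy directly from $\Pbb[X\le q](a-q)=\Delta$, whereas the paper differentiates $r(y)$ and takes $y\uparrow\gamma$; there is also one cosmetic slip, since your displayed formula for $\Pbb[X\ge x]$ holds only on $(p_L,\gamma]$ (at $x=p_L$ it gives $\Pbb[X>p_L]$ and omits the atom), which is harmless because you only use the correct forms $\Pbb[X>q]$ and $\Pbb[X\le q]$.
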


\begin{lemma}[{\sc Achievable Regret Bound}]
\label{lem:regret_achievable}
There exists a randomized static threshold policy $\mathcal A$ such that
\[
\sup_{p(\cdot)}\Reg(\mathcal A,p)\le \bar R(v).
\]
The policy is constructed as follows. If $p_L<v\le p_0$, set $\bar p:=v-(v-p_L)/e$ and draw $b$ with density $1/(v-b)$ on $[p_L,\bar p]$.

If $v>p_0$, let $L_a:=\log((a-p_L)/(a-p_0))$, with the convention that \(L_a=+\infty\) when \(a=p_0\). When $L_a\le1$, draw $b$ with density $1/(a-b)$ on $[p_L,p_0)$ and put an atom $1-L_a$ at $p_0$. When $L_a>1$, set $\bar p_a:=a-(a-p_L)/e$ and draw $b$ with density $1/(a-b)$ on $[p_L,\bar p_a]$. In all regimes, the drawn threshold governs purchases only on $[0,T)$; if it is not reached before the deadline, the policy buys at \(T\) if and only if \(p(T)\le v\).
\end{lemma}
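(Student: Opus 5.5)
The plan is to reduce the worst case over arbitrary bounded paths to a one-parameter family indexed by the lowest pre-deadline price, and then compute the regret of the stated threshold distribution explicitly in each regime. Write $a=\min\{v,p_U\}$ and fix a path $p(\cdot)$. Let $q:=\inf_{t\in[0,T)}p(t)\in[p_L,p_0]$; note $q\le p_0$ since $p(0)=p_0$. Since $\OPT(p)=(v-\min\{q,p(T)\})^+$ (up to the usual attainment issue, handled by taking limits), it suffices to bound the regret uniformly in $q$.

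\textbf{Step 1: pathwise comparison.} Fix a realized threshold $b$. There are three cases.
\begin{itemize}
\item If $b\ge p_0$, the policy buys at time $0$ at price $p_0$.
\item If $q\le b<p_0$ and the threshold is crossed, the purchase price satisfies $p(\tau_b)\le b$, because the first crossing of a right-continuous piecewise path lands at or below $b$; a downward jump only helps. So the payoff is at least $(v-b)^+$. The boundary case where the infimum $q=b$ is not attained is absorbed by monotonicity in $b$ and a limiting argument.
\item If $b<q$, the policy reaches the deadline and earns $(v-p(T))^+\ge (v-p_U)^+$.
\end{itemize}

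If $p(T)<q$, the offline optimum is the deadline price. In that case the policy either buys at $T$ as well, or it bought earlier at some price $\le b$. Its loss is then at most what it would be if $p(T)$ were replaced by $q$ and the terminal payoff by $(v-p_U)^+$. So this case is dominated by the case $p(T)\ge q$, and we may assume $\OPT(p)=(v-q)^+$.

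Using the identity $(v-x)^+-(v-p_U)^+=(a-x)^+$ for $x\le p_U$, the regret is bounded by
\[
\Reg(\mathcal A,p)\le (a-q)^+-\mathbb E\bigl[(a-\min\{b,p_0\})^+\mathbbm 1\{b\ge q\}\bigr].
\]
This holds with $a$ replaced by $v$ in the regime $v\le p_0$, where $(v-p_U)^+=0$. The bound is attained by flash-sale paths that decrease continuously to $q$ and then revert to $p_U$, so the reduction is tight.

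\textbf{Step 2: evaluate the bound regime by regime.} With density $1/(a-b)$ on $[p_L,c]$, the expectation for $q\le c$ is $\int_q^c 1\,db=c-q$. The regret bound therefore becomes $a-c$ minus any atom contribution, and it is independent of $q$. For $q>c$, the bound is $a-q\le a-c$.

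\begin{itemize}
\item \emph{Case $L_a>1$.} Here $c=\bar p_a$, and the total mass is $\log\frac{a-p_L}{a-\bar p_a}=1$, so the distribution is proper. The regret is $a-\bar p_a=(a-p_L)/e$. This equals $\bar R(v)$ because $\gamma=\bar p_a$ and $\log\frac{a-p_L}{\Delta}=1$.
\item \emph{Case $L_a\le 1$.} Here $c=p_0$ and the continuous part has mass $L_a$, so the atom $1-L_a$ at $p_0$ makes the distribution proper. The atom buys at $p_0\ge q$ and contributes $(1-L_a)(a-p_0)$. The regret is therefore $(a-p_0)L_a=\Delta\log\frac{a-p_L}{\Delta}$ with $\Delta=a-p_0$, since $\gamma=p_0$ in this regime.
\item \emph{Case $v\le p_0$.} The same computation with $a=v$ and $c=\bar p$ gives regret $(v-p_L)/e=\bar R(v)$. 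For $q\ge v$ the offline surplus is $0$, so the regret is trivially nonpositive.
\end{itemize}

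\textbf{Main obstacle.} I expect the difficulty to lie in Step 1, not in the calculus of Step 2. Three points need care:
\begin{itemize}
\item justifying that the adversary's best response has the flash-sale shape, i.e.\ that paths with intermediate rebounds or non-attained infima cannot do worse;
\item treating a deadline price below $q$;
\item handling thresholds above $p_0$ that trigger immediate purchase.
\end{itemize}
I would handle all three through the single pathwise inequality above. It only uses $p(\tau_b)\le b$ on crossing and $(v-p(T))^+\ge(v-p_U)^+$ otherwise, together with monotonicity of the regret bound in $q$.
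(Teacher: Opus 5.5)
Your Step~2 integrals, and the per-threshold accounting in Step~1 (loss at most $b$ minus the minimum price when the threshold is met, at most $a$ minus it when it is missed), are the same as in the paper's proof. The problem is how you dispose of a deadline price below $q=\inf_{t\in[0,T)}p(t)$.

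Your displayed bound is derived with $\mathrm{OPT}(p)=(v-q)^+$. To justify this, you assert that when $p(T)<q$ the loss is at most what it would be with $p(T)$ replaced by $q$ and the terminal payoff replaced by $(v-p_U)^+$. The reason you give does not support this, and threshold by threshold it goes the wrong way. A threshold $b\ge q$ still buys before $T$ at a price as high as $b$ (exactly $b$ on a continuously declining path). But the benchmark is now $(v-p(T))^+$, so its loss can be $b-p(T)$, strictly more than the $b-q$ you charge it. Only the thresholds $b<q$ improve: their loss is $0$ instead of $a-q$.

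So your claim is really the aggregate inequality $(q-p(T))\,\mathbb{P}(b\ge q)\le (a-q)\,\mathbb{P}(b<q)$. This is false for general threshold laws. Take a point mass at $b_0$ and a path that declines to $b_0$ before $T$ and ends at $p_L\le p(T)<b_0\le\min\{v,p_0\}$: it loses $b_0-p(T)>0$, while your comparison bound is $0$. For the laws in the lemma the inequality does hold. At the worst case $p(T)=p_L$ it reduces to $u\le-\log(1-u)$ with $u=(q-p_L)/(a-p_L)$. But that is a computation you have not made. Appealing to ``monotonicity of the bound in $q$'' points the other way, since the bound is non-increasing and $p(T)<q$.

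The repair is the paper's choice of index. Use $p^\ast:=\inf_{t\in[0,T]}p(t)=\min\{q,p(T)\}\in[p_L,p_0]$, so that $\mathrm{OPT}(p)=(v-p^\ast)^+$ on every path. The observation you are missing concerns the event $\{p(T)\le b<q\}$. There the threshold is not crossed before $T$, yet the terminal rule buys at $p(T)\le b$ and earns $(v-p(T))^+\ge(v-b)^+$, not merely $(v-p_U)^+$. Your other two cases remain: for $b\ge q$, purchase occurs at a price at most $\min\{b,p_0\}$; for $b<p^\ast$, the payoff is at least $(v-p_U)^+$. Together these give your displayed inequality verbatim with $q$ replaced by $p^\ast$, with no separate case for the deadline price. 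Your Step~2 computation, which is uniform over the index in $[p_L,p_0]$, then finishes the proof.
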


Together, \Cref{lem:regret_lower_bound,lem:regret_achievable} imply \(\Reg^*\ge\bar R(v)\) and \(\Reg^*\le\bar R(v)\), respectively, and therefore prove \Cref{prop:minimax-regret}.

\subsection{Extension: Decreasing Price Trend}
\label{subsec:robust_trend_envelopes}

The robust construction also applies when the trusted price information takes the form of decreasing price envelopes rather than constant bounds. Suppose the agent knows two non-increasing functions \(p_L(\cdot)\) and \(p_U(\cdot)\) such that every feasible price path satisfies \(p_L(t)\le p(t)\le p_U(t),  t\in[0,T]\), with \(p(0) = p_0\). This specification captures settings in which prices are expected to move downward over the shopping window, but the realized path may still fluctuate within a time-varying band. The terminal lower envelope \(p_L(T)\) is the lowest price that the adversary can make available, while the terminal upper envelope \(p_U(T)\) is the worst price the agent must face if it waits until the deadline.

Thus, the decreasing-trend problem reduces to the constant-bound problem with the effective primitives
$
p_L^{\mathrm{eff}}:=p_L(T),p_U^{\mathrm{eff}}:=p_U(T),p_0^{\mathrm{eff}}:=\min\{p_0,p_U(T)\}.
$
The effective initial price reflects the fact that the agent can always wait until the deadline and face a price no higher than \(p_U(T)\). The randomized threshold policies and the competitive-ratio and minimax-regret expressions above therefore carry over after replacing \((p_L,p_U,p_0)\) by \((p_L^{\mathrm{eff}},p_U^{\mathrm{eff}},p_0^{\mathrm{eff}})\). Operationally, the agent draws a static price threshold from the same distribution as in the constant-bound model, calibrated to the terminal band, and makes a purchase the first time the realized price falls to or below that threshold before the deadline. If the threshold is not reached before the deadline, the agent applies the terminal valuation rule at \(T\), buying if and only if \(p(T)\le v\). The guarantee uses \(p_U^{\mathrm{eff}}\) to bound the missed-threshold continuation payoff. \Cref{ec:robust_trend_envelopes} explains how the adversarial and achievability arguments map to this envelope formulation.

\section{Empirical Evaluation of Buying-Agent Policies}\label{sec:numerical}

This section evaluates a set of buying policies, including our OR policies and several comparison policies, using real price histories. We construct 1,000 shopping instances from posted-price histories of 367 items on Amazon. We use prior observations to estimate parameters and choose among candidate models, and reserve the subsequent price path for sequential evaluation. In the remainder of this section, we first describe the construction of the evaluation instances, then introduce the evaluated policies and selection mechanisms, report their surplus performance, and finally illustrate the agent workflow on a common shopping instance.

\subsection{Instance Construction}\label{subsec:numerical-design}

We construct the evaluation instances from time-stamped posted-price histories from Keepa, an Amazon price-tracking service. The final evaluation set contains 1,000 shopping instances spanning 367 items, with underlying records that include 48,933 time-stamped price observations. To form the instance pool, we randomly sample 3 shopping windows per item and 3 valuations per window, discarding windows with no in-window price change, and then randomly select 1,000 instances.

For each instance, we split the price path into a pre-window history and a held-out 30-day shopping window. The pre-window history is the only price information available for calibration, model selection, and language-model prompts at the beginning of the shopping window. During evaluation, each policy observes shopping-window prices only sequentially as they become available. The full held-out path is reserved for ex-post computation of the offline benchmark.

We require each selected instance to have at least five pre-window price observations for parameter calibration. We also require at least two shopping-window observations, counting the window-start price, to ensure a non-trivial sequential decision problem. When the price is not observed exactly at the window start, the window-start price is taken to be the last posted price before the window begins. The consumer valuation is generated as a multiple of the window-start price, with the multiplier drawn uniformly from $[1.01,1.50]$.

\subsection{Candidate Policies and Selection}\label{subsec:deployment-mechanisms}

This subsection defines the decision mechanisms evaluated in the numerical study. 

\paragraph{Fixed OR policies.}
The Fixed OR policies apply a common stopping-rule specification to all instances, but calibrate the model inputs separately using each instance's pre-window price history. We evaluate four such policies: \polStationary, \polBayesian, \polRobust, and \polTrendAware. \polStationary is an implementation of the stationary model in \Cref{sec:stationary}: it estimates the adjustment intensity and price-adjustment distribution from the pre-window history and solves the model's threshold equation. \polBayesian implements the Bayesian learning model in \Cref{sec:bayesian}, using the same adjustment-intensity estimate and a discretized Bayesian predictive model for uncertainty in the price-adjustment distribution; prior counts are initialized from pre-window prices and updated as shopping-window prices are observed. \polRobust implements the robust buying model in \Cref{sec:robust}: it constructs feasible lower and upper price bounds from the pre-window history and the window-start price, and applies the model's randomized threshold policy. \polTrendAware implements the time-varying price-adjustment extension in \Cref{subsec:trend_stationary}: it fits a deterministic price trend on the pre-window history and applies the stopping computation to detrended residual prices. \Cref{Ec:policy_calibration} reports the calibration details.

\paragraph{Heuristic Selector.} 

We also evaluate a Heuristic Selector, which implements model selection through a sequence of pre-specified diagnostic screens and serves as a simple rule-based benchmark for automated policy selection. The selector uses only the pre-window history to assign each instance to one of the candidate OR policies: \polStationary, \polBayesian, \polRobust, or \polTrendAware. It applies the screens sequentially. The selector first checks the stability of the adjustment intensity and routes histories with unreliable timing to \polRobust. Then it applies a multiple-break screen for level shifts in pre-window prices, in the spirit of \cite{bai1998estimating,bai2003computation}, and a linear-trend screen. Among non-trending histories, unexplained level shifts trigger \polRobust; otherwise, sufficiently long histories are routed to \polStationary, and shorter histories to \polBayesian. Among trending histories, the selector detrends the prices and then reapplies the level-shift screen to the residuals; stable residuals yield \polTrendAware, while unstable residuals yield \polRobust. The screening details are reported in \cref{Ec:heuristic_selector}, and the diagnostic cutoffs used in the reported run are fixed across instances. 

\paragraph{Language-model agents.}
The third benchmark group consists of two language-model agents: LLM-baseline and LLM-OR. In the reported implementation, both agents call the DeepSeek API with the \texttt{deepseek-v4-flash} model. LLM-baseline is a direct-prompting agent. At the beginning of each shopping instance, the LLM receives the decision context, including the pre-window price history, valuation, and deadline. In a live deployment, the agent operates alongside an automatic price-monitoring program that calls the LLM at the start of the shopping window and after every price change. When queried at time $t$, the LLM receives the conversation history of this shopping task, the current price $P_t$, and the remaining time $T-t$. It then chooses either to purchase immediately or to wait and schedule a future purchase time $\xi\in(t,T]$. If the price remains unchanged before $\xi$, the purchase is triggered. Otherwise, if a new price $P_{\tilde t}$ arrives at $\tilde t < \xi$, the LLM is re-queried with the updated price $P_{\tilde t}$ and the new remaining time $T-\tilde t$ appended to the ongoing conversation history.\footnote{We do not query the LLM continuously. Suppose each query independently carries a positive probability of recommending a purchase; then, highly frequent querying would increase the chance that a purchase is triggered early. The scheduled purchase time thus serves as a commitment device for more disciplined execution.} In our numerical study, rather than using a live price monitor, we emulate this process on held-out historical price paths. The simulation directly compares the scheduled purchase time $\xi$ with the next price-adjustment time $\tilde t$ to decide whether to purchase or to query the LLM again. LLM-OR uses the language model at the beginning of the shopping window to select an OR model and the calibration samples, that is, the subset of pre-window price observations used to estimate the model inputs. Specifically, the LLM chooses a lookback horizon and a maximum number of recent price observations to use. The selected OR model then computes the threshold policy for making buy-or-wait decisions along the realized price path.

\paragraph{Simple baselines.}
We also include several simple policies that do not use our OR policies. The first group consists of two naive rules: \polBuyNow, which buys immediately when the initial price is no greater than the valuation, and \polBuyLast, which waits until the deadline and buys only if the last price is no greater than the valuation. The second group consists of three historical-price heuristics: \polHistMean, \polHistMed, and \polHistMin, which buy once the current price falls to or below the corresponding historical benchmark. If no purchase is triggered before the deadline, these waiting policies make a terminal purchase only when the last observed price is at most the valuation.

\subsection{Performance Measures and Results}\label{subsec:numerical-results}

Consistent with the payoff convention in \Cref{sec:environment}, we measure performance by realized consumer surplus: \((v-p_\tau)^+\) for a purchase at time \(\tau\) with price \(p_\tau\), and zero if no purchase is made. To compare outcomes across products with different price and valuation scales, we use normalized surplus, defined as realized surplus divided by the ex-post optimal surplus in the same shopping window whenever the latter is positive. When the offline best surplus is $0$, we simply set the normalized surplus to $1$. The offline best price is used only for ex-post evaluation. All methods are evaluated independently on the same set of shopping instances for comparability.


Table~\ref{tab:tie_aware_win_frequency} reports the two rate-based measures. The win rate is tie-aware and measures relative performance: any method that achieves the highest normalized surplus among all evaluated methods on an instance earns a win (i.e., if multiple methods tie for the best performance in the evaluated pool, all receive a win, hence the sum of win rates may exceed one). The optimal-hit rate counts instances on which a method attains the ex-post optimal surplus (i.e., buying at the lowest price along the whole shopping window, rather than merely outperforming other methods). The table shows a tight leading group on these measures: LLM-OR, \polBayesian, \polStationary, and \polHistMin all attain win rates above 48\% and optimal-hit rates of at least 43\%. LLM-OR belongs to this group, with a win rate of 48.9\% and an optimal-hit rate of 44.1\%, indicating that language-model-assisted model selection can recover competitive stopping decisions when the selected OR policy controls purchase timing. The \polStationary policy remains competitive despite its restrictive price-adjustment assumption, suggesting that a calibrated threshold rule captures much of the value of waiting in these instances.

\begin{table}[ht]
\centering
\caption{Win Rates Across 1,000 Instances. Win rate is tie-aware and counts all methods attaining the instance-level maximum surplus. Opt. hit rate counts instances with zero regret relative to the offline best surplus.}
\label{tab:tie_aware_win_frequency}
\begin{tabular*}{0.85\textwidth}{@{\extracolsep{\fill}}llcc}
\toprule
Method & Policy type & Win rate & Opt. hit rate \\
\midrule
LLM-OR & LLM-agent & 48.9\% & 44.1\% \\
Stationary & Fixed OR & 48.3\% & 43.2\% \\
Bayesian & Fixed OR & 48.3\% & 43.1\% \\
Historical Min & Simple baseline & 48.1\% & 43.0\% \\
LLM-baseline & LLM-agent & 46.8\% & 40.4\% \\
Buy Last & Simple baseline & 44.9\% & 40.9\% \\
Trend-aware & Fixed OR & 43.9\% & 38.8\% \\
Historical Median & Simple baseline & 43.6\% & 38.5\% \\
Historical Mean & Simple baseline & 43.4\% & 38.6\% \\
Buy Now & Simple baseline & 30.4\% & 25.8\% \\
Heuristic Selector & Heuristic selector & 26.5\% & 21.6\% \\
Robust & Fixed OR & 21.9\% & 17.6\% \\
\bottomrule
\end{tabular*}
\end{table}

Figure~\ref{fig:normalized_surplus_quantile_ranges} complements the rate-based comparison in Table~\ref{tab:tie_aware_win_frequency} by showing the cross-instance distribution of normalized surplus. Because normalized surplus is measured relative to the ex-post optimal surplus in the same instance, values near one indicate decisions that nearly match the offline benchmark. The leading methods are close over much of the distribution, but their quantile performance differs more substantially. For example, \polHistMin and \polBuyLast score well on optimal-hit rates, a pattern consistent with instances in which the lowest observed price occurs near an endpoint of the shopping window, but their lower tails are weaker. By contrast, \polRobust and \polSelector have lower win rates but stronger 10th-percentile normalized surplus than the methods with the highest win and optimal-hit rates, reflecting more conservative downside protection. These contrasts suggest that policy quality depends not only on how often a rule attains an ex-post hit, but also on how it trades off hit frequency against tail risk. The gap between \polSelector and the strongest fixed policies suggests room for improved selection rules.

\begin{figure}[tbp]
\centering
\caption{Distribution of normalized surplus across the 1,000 instances. For each method, the light segment spans the 10th to 90th percentiles, the dark segment spans the interquartile range, the black dot marks the median, and the orange diamond marks the mean. Methods are sorted by mean normalized surplus.}
\includegraphics[width=0.9\textwidth]{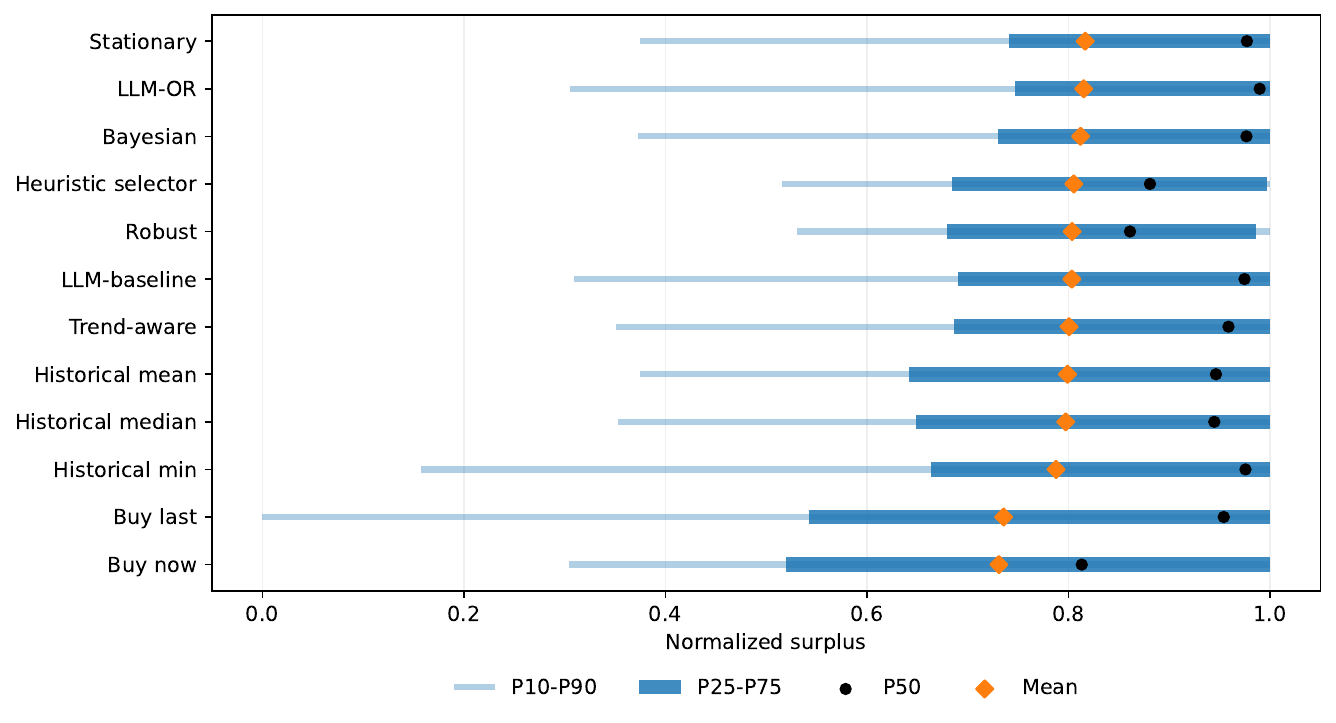}
\label{fig:normalized_surplus_quantile_ranges}
\end{figure}

\subsection{Illustration of Buying-Agent Trace}\label{subsec:agent-illustration}

We conclude the numerical study with an illustrative trace comparing the LLM baseline and LLM-OR on the same shopping instance. As described in \Cref{subsec:deployment-mechanisms}, LLM-baseline is queried at the window start and after each price change to make purchase-timing decisions, with any scheduled action superseded by a new price adjustment. LLM-OR uses the language model to select the OR model, choose the calibration history, and generate user-facing communication, while the selected OR policy controls the sequential purchasing decisions along the realized price path.


\begin{figure}[tbp]
  \centering
  \caption{Decision trace for one evaluation instance. Price panels show pre-window history, shopping-window prices, and realized purchase outcomes; the LLM-OR panel also highlights the truncated price data chosen by the LLM and used by the OR policy. Message strips report the corresponding summaries and purchase explanations.}
  \label{fig:buying_agent_decision_trace}
  \includegraphics[width=0.88\linewidth,height=0.84\textheight,keepaspectratio]{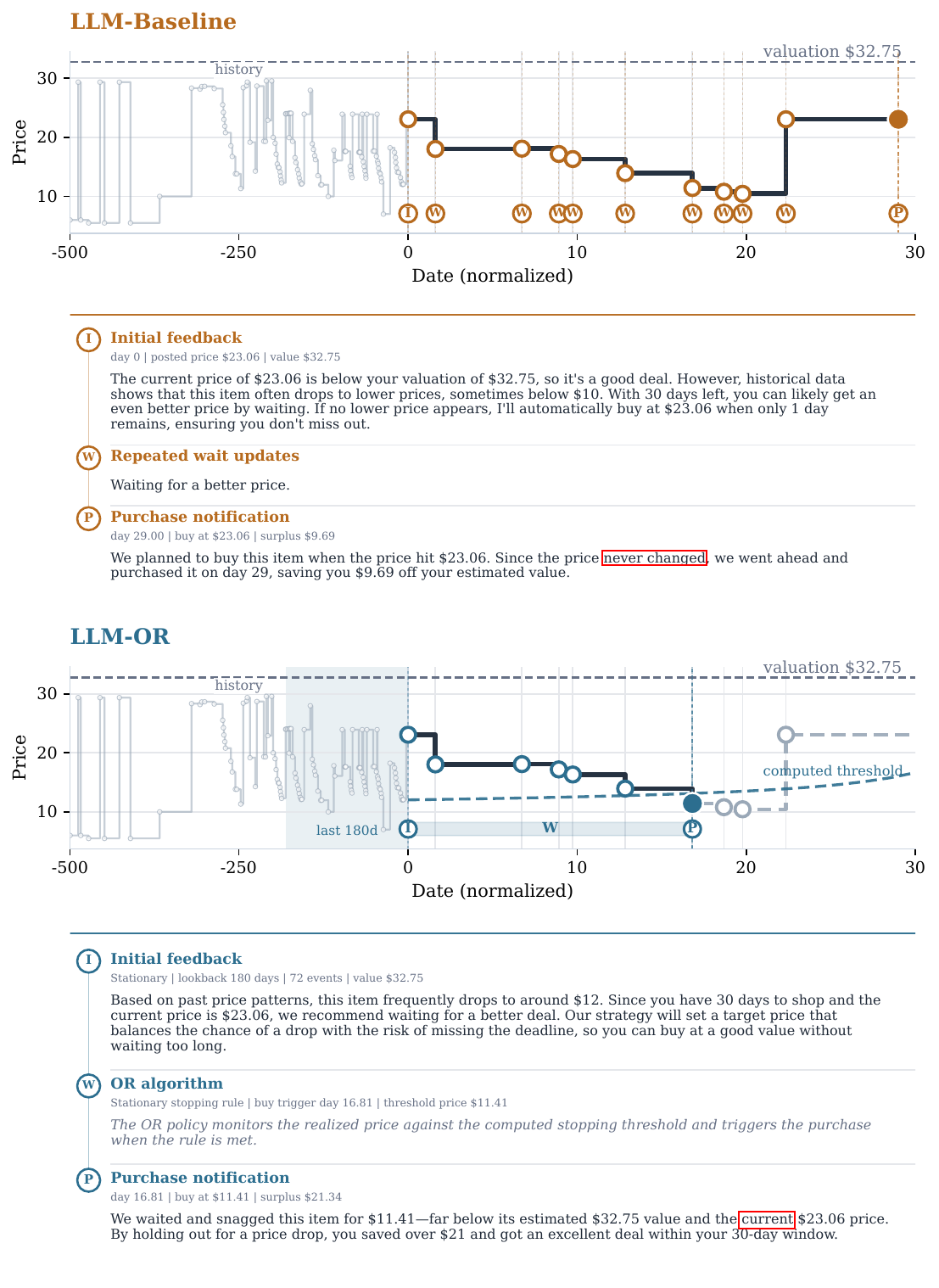}
  \begin{minipage}{\linewidth}
  \footnotesize
  \emph{Note:} The message excerpts are reproduced directly from the LLM output. The {\color{red}red} boxes mark two minor factual errors in the raw purchase notification: the first should indicate that the price is the same as the initial observed price, rather than that it never changed; the second should refer to the initial price at the start of monitoring, rather than to a current price of 23.06.
  \end{minipage}
\end{figure}

As part of the implementation diagnostic, we also record token usage. Across the 1,000 instances, LLM-OR uses 5.26 million total tokens (2.12 million input + 3.14 million output), compared with 21.80 million (19.26 million input + 2.53 million output) for LLM-baseline. This difference comes from using the OR policy rather than repeated LLM queries.
\section{Conclusion}

We develop a buyer-side operations framework for designing automated buying agents. We formulate purchase timing as a finite-horizon stopping problem and study how the optimal purchasing policy changes as the agent's information about future prices weakens. In the stationary price-adjustment benchmark, the optimal policy is a dynamic threshold characterized by an ordinary differential equation. Under Bayesian uncertainty about the price-adjustment distribution, the threshold becomes belief-dependent, and the information gap quantifies the value of knowing the true price-adjustment distribution. Under minimal information, randomized threshold policies provide competitive-ratio and minimax-regret guarantees. Together, these results show how the structure of the purchase policy changes as the agent moves from a calibrated stochastic model to learning and then to robust protection.

The numerical study complements the analytical results by applying these policies to real price histories. The comparison indicates that the OR policies proposed in this paper, especially the \polStationary policy, perform competitively on the evaluation instances. The LLM-OR implementation illustrates a complementary role for language models in buying-agent design: the LLM can select a model, choose calibration data, and generate user-facing explanations, while the selected OR policy retains control of the sequential purchase-timing decision.

Our analysis isolates the purchase-timing decision in a setting with a single item, a known valuation, and an exogenous price path. This abstraction separates the core timing problem from other market frictions and allows us to focus on how the agent's information about future prices shapes the purchasing policy. It also defines the limits of the current framework and points to several natural extensions. Relaxing the memoryless price-adjustment assumption yields richer autoregressive or contextual price dynamics. Relaxing the single-item assumption yields multi-item formulations that allow substitution across products. Finally, incorporating stock-out risk would make the purchase-timing decision account for both future prices and product availability.


\PutSingleSpacedBib
\end{bibunit}
\ECSwitch
\begin{bibunit}
\ECBibHyperlinks

\OneAndAHalfSpacedXI
\ECHead{\begin{center}E-Companion for \\``Strategic Buying Agents''\end{center}}
\makeatletter \providecommand*{\theHsection}{} \providecommand*{\theHsubsection}{} \providecommand*{\theHsubsubsection}{} \providecommand*{\theHtheorem}{} \providecommand*{\theHlemma}{} \providecommand*{\theHproposition}{} \providecommand*{\theHcorollary}{} \providecommand*{\theHassumption}{} \providecommand*{\theHdefinition}{} \providecommand*{\theHremark}{} \providecommand*{\theHexample}{} \renewcommand*{\theHsection}{EC.\arabic{section}} \renewcommand*{\theHsubsection}{EC.\arabic{section}.\arabic{subsection}} \renewcommand*{\theHsubsubsection}{EC.\arabic{section}.\arabic{subsection}.\arabic{subsubsection}} \renewcommand*{\theHtheorem}{EC.\arabic{section}.\arabic{theorem}} \renewcommand*{\theHlemma}{EC.\arabic{section}.\arabic{lemma}} \renewcommand*{\theHproposition}{EC.\arabic{section}.\arabic{proposition}} \renewcommand*{\theHcorollary}{EC.\arabic{section}.\arabic{corollary}} \renewcommand*{\theHassumption}{EC.\arabic{section}.\arabic{assumption}} \renewcommand*{\theHdefinition}{EC.\arabic{section}.\arabic{definition}} \renewcommand*{\theHremark}{EC.\arabic{section}.\arabic{remark}} \renewcommand*{\theHexample}{EC.\arabic{section}.\arabic{example}} \makeatother

\section{Details of OR Policies}\label{Ec:policy_calibration}

This section records how the candidate OR policies in \Cref{subsec:deployment-mechanisms} are calibrated from the pre-window history. For an evaluation instance, let \((t_i,p_i)_{i=1}^{n_h}\) denote the pre-window price observations, with times measured relative to the start of the shopping window, and let \(p_0\) denote the window-start price. The held-out shopping-window prices are not used for calibration, except that \(p_0\) is part of the online state faced by every policy at the beginning of the window.

\textbf{Stationary.} The Stationary policy implementation estimates the Poisson adjustment intensity $\widehat\lambda$ by the empirical frequency of price-adjustment timestamps over the historical span. The price-adjustment distribution \(H\) is replaced by the empirical distribution
\[
\widehat H=\frac{1}{n_h}\sum_{i=1}^{n_h}\delta_{p_i}.
\]
The policy then solves the stationary threshold ODE in \Cref{sec:stationary} with \((\widehat\lambda,\widehat H,v,T)\) and makes a purchase when the current price is no larger than the resulting time-to-go threshold.

\textbf{Bayesian.} The Bayesian policy implementation uses the same adjustment intensity estimate \(\widehat\lambda\), but replaces the unknown price-adjustment distribution by a finite Bayesian predictive model. The implementation discretizes prices into bins scaled to the pre-window price level and places a Dirichlet prior over the bin probabilities. Pre-window prices initialize the prior counts, and observed price adjustments during the shopping window update the posterior count vector. The dynamic program is then solved on this finite belief-state representation to obtain the belief-dependent threshold policy.

\textbf{Robust.} The Robust policy calibrates only price bounds. The lower bound is set by the smallest price observed in the pre-window history together with \(p_0\). For the upper bound, the implementation screens the pre-window history to remove isolated extremely high prices that are not representative of the feasible price range for the shopping window, and then uses the upper end of the screened history, enlarged if needed to include \(p_0\). Thus, the robust model is calibrated only on information available at the start of the shopping window. The resulting primitives \((\widehat p_L,p_0,\widehat p_U,v)\) are substituted into the randomized robust threshold distribution in \Cref{sec:robust}. The numerical evaluation reports the expected performance of this randomized policy by averaging over its threshold randomization.

\textbf{Trend-aware.} \polTrendAware fits the pre-window linear trend
\[
p_i=\alpha+\beta t_i+\varepsilon_i
\]
by least squares, forms residual prices \(x_i=p_i-(\alpha+\beta t_i)\), and uses the empirical residual distribution as the price-adjustment distribution around the deterministic trend. The fitted intercept is shifted to the shopping-window start, the adjustment intensity is set to \(\widehat\lambda\), and the threshold is computed in original price units by combining the deterministic trend with the residual continuation calculation. When the fitted trend is numerically negligible, or the pre-window history is too short to estimate it, the implementation falls back to \polStationary.

\section{Heuristic Selector}\label{Ec:heuristic_selector}

\polSelector in \Cref{subsec:deployment-mechanisms} maps each pre-window price history to one candidate model. It first screens the history for evidence about adjustment timing, level shifts, and deterministic trends, and then routes the instance to \polStationary, \polBayesian, \polTrendAware, or \polRobust according to the decision logic described in \Cref{subsec:deployment-mechanisms}. All screening cutoffs are fixed before evaluation and are held constant across instances. 

The timing screen partitions the pre-window observation span into at most five equal-length bins and computes Pearson's statistic
\[
X^2=\sum_{j=1}^{K}\frac{(N_j-\bar n)^2}{\bar n},
\]
where \(N_j\) is the number of observed price-adjustment timestamps in bin \(j\) and \(\bar n=n/K\). The statistic is compared with a \(\chi^2_{K-1}\) benchmark using the two-sided p-value \(p_\chi=2\min\{\Pr[\chi^2_{K-1}\ge X^2],\Pr[\chi^2_{K-1}\le X^2]\}\). The reported run uses \(\alpha_\chi=0.5\) and requires at least 10 pre-window timestamps for the timing screen. When fewer timestamps are available, the timing screen is skipped. The number of bins is reduced when needed to keep the expected cell count at least five.

The level-shift screen is Bai--Perron-style \citep{bai1998estimating,bai2003computation} in the following sense. For each candidate number of breaks, it fits a piecewise-constant mean model by dynamic programming, minimizing within-segment squared deviations subject to a minimum segment length. The implementation allows at most five breaks, requires each segment to contain at least three observations and, when feasible, at least 10\% of the pre-window sample, and selects the number of breaks by BIC. We use this screen only for routing diagnostics and do not use the estimated break dates for formal inference. The selector treats the pre-window price sequence as having unexplained level shifts when the selected number of breaks exceeds the tolerance \(\kappa=1\).

The trend screen fits a linear regression of pre-window price on time and flags a trend when the absolute \(t\)-statistic of the slope exceeds 2. If a trend is flagged, the selector applies the same level-shift screen to detrended residuals. Histories with unreliable timing, unexplained raw breaks, or unstable detrended residuals are routed to \polRobust; trending histories with stable residuals are routed to \polTrendAware; and stable non-trending histories are routed to \polStationary when the pre-window sample has at least \(n_0=15\) observations and to \polBayesian otherwise.

\section{Proofs}
\subsection{Proof of \Cref{lemma:mono_V}}
\begin{proof}{Proof.}
Fix $s\in[0,T]$ and let $0\le p_1\le p_2$. Couple the two systems on the same probability space so that they share the same future event times and, at each future event, the same post-adjustment price draw. Let $\sigma$ denote the first future event time.

For any admissible stopping time $\tau$, let $J(s,p;\tau)$ denote the expected payoff obtained from the initial state $(s,p)$ under $\tau$.

Let $\tau_2$ be an $\varepsilon$-optimal stopping time for the problem starting from $(s,p_2)$, where $\varepsilon>0$ is arbitrary. We construct an admissible stopping rule $\tau_1$ for the problem starting from $(s,p_1)$ as follows: before the first future event, the agent ignores the fact that the current price is lower and behaves exactly as under $\tau_2$; if $\tau_2<\sigma$, then set $\tau_1=\tau_2$; if $\tau_2\ge \sigma$, then after time $\sigma$ the two systems have the same time-to-go and the same posted price, so from that point onward $\tau_1$ follows exactly the same continuation rule as $\tau_2$.

Under this construction, if $\tau_2<\sigma$, then both agents stop before any price change, and hence the payoff satisfies \((v-p_1)^+\ge (v-p_2)^+\). If $\tau_2\ge \sigma$, then from time $\sigma$ onward the two systems are in the same state and generate the same continuation payoff. Therefore, the payoff under $\tau_1$ is almost surely no smaller than that under $\tau_2$. Taking expectations gives
\[
V(s,p_1)\ge J(s,p_1;\tau_1)\ge J(s,p_2;\tau_2)\ge V(s,p_2)-\varepsilon.
\]

Letting $\varepsilon\downarrow 0$ yields \(V(s,p_1)\ge V(s,p_2)\). Hence $V(s,p)$ is non-increasing in $p$. Using the same method, we also have $V(s,p_2)\ge V(s,p_1)- (p_2-p_1)$, hence $V(s,p)$ is 1-Lipschitz with respect to $p$.

Next fix $p$ and let $0\le s_1<s_2\le T$. Starting from state $(s_2,p)$, the agent can always ignore the last $s_2-s_1$ units of time-to-go (i.e., move the deadline forward) and then implement an optimal policy for state $(s_1,p)$. This shows \(V(s_2,p)\ge V(s_1,p)\). Moreover, the extra value from the additional interval of length $s_2-s_1$ can arise only if at least one event occurs during that interval, and the total payoff is bounded above by $v$. Hence
\[
0\le V(s_2,p)-V(s_1,p)
   \le v\,\mathbb{P}\left(N(s_2-s_1)\ge 1\right)
   = v\bigl(1-e^{-\lambda(s_2-s_1)}\bigr).
\]
Thus $V(\cdot,p)$ is non-decreasing and continuous in $s$. In particular, $m(s):=\mathbb{E}_H[V(s,P)]$ is also continuous in $s$.\hfill\qedsymbol
\end{proof}
\subsection{Proof of \Cref{prop:threshold}}
\begin{proof}{Proof.}
\medskip\noindent\textit{Part~\ref{prop:threshold_form}: Threshold structure.} For $p\in[0,v]$, define the option value of waiting by $W(s,p):=V(s,p)-(v-p)$. Because immediate purchase is always feasible, $W(s,p)\ge 0$ for all $p\in[0,v]$. In addition, for $0\le p_1\le p_2\le v$, we have $W(s,p_2)-W(s,p_1)=\bigl(V(s,p_2)-V(s,p_1)\bigr)+(p_2-p_1)\ge 0$, where the inequality follows from the $1$-Lipschitz property in \Cref{lemma:mono_V}. The same Lipschitz property gives continuity of $W(s,\cdot)$, so $W(s,\cdot)$ is continuous and non-decreasing on $[0,v]$.

At price $p=0$, the agent can secure a payoff $v$ by purchasing immediately, and no policy can generate more than $v$. Hence $V(s,0)=v$ and $W(s,0)=0$. By continuity and monotonicity of $W(s,\cdot)$, the set $\mathcal{S}_s:=\{p\in[0,v]:W(s,p)=0\}$ is a nonempty closed interval of the form $[0,b(s)]$ for some $b(s)\in[0,v]$. For $p>b(s)$ the inequality $W(s,p)>0$ is strict: if $W(s,p)=0$ for some $p>b(s)$, then $p\in\mathcal{S}_s$, contradicting $b(s)=\sup\mathcal{S}_s$. For $p>v$, immediate purchase yields $(v-p)^+=0$ while the option value of waiting is non-negative, so waiting is always weakly preferred and $b(s)\le v$ on the entire price space.

We adopt the convention of stopping at the threshold $p=b(s)$, so that at each state $(s,p)$ the optimal action is to purchase if and only if $p\le b(s)$. Equivalently, from elapsed time $T-s$, the optimal stopping rule is $\tau^\star=\inf\{t\in[T-s,T]:P_t\le b(T-t)\}$.

\medskip\noindent\textit{Part~\ref{prop:ODE}: Boundary identity and ODE.} Fix $s>0$. The variational inequality \eqref{eq:VI} gives the usual complementarity conditions for this obstacle problem. At a stopping point, where $V(s,p)=(v-p)^+$, the HJB residual of the immediate-purchase payoff must be non-positive. This residual is $\lambda m(s)-\lambda(v-p)^+$. Hence, any stopping point with $p\le v$ must satisfy $p\le v-m(s)$. In particular, since the boundary point is included in the stopping region by convention, $b(s)\le v-m(s)$. If, instead, no continuation point can lie strictly below $v-m(s)$. To see this, suppose that $p>b(s)$ and $p<v-m(s)$. Then $p\le v$ and $V(s,p)>v-p>m(s)$. At points where the continuation equation in \eqref{eq:VI} holds, $\partial_sV(s,p)=\lambda\bigl(m(s)-V(s,p)\bigr)<0$, which contradicts the monotonicity of $V$ in time-to-go established in \Cref{lemma:mono_V}. Thus, regular continuation points must satisfy $p\ge v-m(s)$, and the same inequality extends to all continuation points by continuity of $V$ and $m$. Taking $p\downarrow b(s)$ gives $b(s)\ge v-m(s)$. Combining the two inequalities yields the boundary identity
\begin{equation}\label{eq:boundary_identity_revised}
b(s)=v-m(s),\qquad s>0.
\end{equation}

We next derive the ODE. Since $V(\cdot,p)$ is Lipschitz in $s$ uniformly in $p$, differentiation under the expectation is valid for almost every $s>0$, and $m'(s)=\mathbb{E}_H[\partial_s V(s,P)]$. On the stopping region $\{P\le b(s)\}$, we have $V(s,P)=v-P$, so $\partial_s V(s,P)=0$. On the continuation region $\{P>b(s)\}$, the variational inequality \eqref{eq:VI} implies $\partial_s V(s,P)=\lambda\bigl(m(s)-V(s,P)\bigr)$ for a.e. $s>0$. Therefore $m'(s)=\lambda\,\mathbb{E}_H[(m(s)-V(s,P))\mathbf{1}_{\{P>b(s)\}}]$.

On $\{P\le b(s)\}$, the stopping rule is optimal, so $V(s,P)=v-P=m(s)+b(s)-P$, where the second equality uses \eqref{eq:boundary_identity_revised}. Hence $\mathbb{E}_H[V(s,P)\mathbf{1}_{\{P\le b(s)\}}]=m(s)\,H(b(s))+\mathbb{E}_H[(b(s)-P)^+]$. Since $m(s)=\mathbb{E}_H[V(s,P)]$, it follows that $\mathbb{E}_H[V(s,P)\mathbf{1}_{\{P>b(s)\}}]=m(s)(1-H(b(s)))-\mathbb{E}_H[(b(s)-P)^+]$. Substituting this expression into the expression for $m'(s)$ and using $b(s)=v-m(s)$ gives
\[
b'(s)
=
-\lambda\,\mathbb{E}_H\left[(b(s)-P)^+\right]
=
-\lambda\int_0^{b(s)}(b(s)-p)\,H(\mathrm{d}p).
\]

Finally, the terminal condition is immediate. At $s=0$, $V(0,p)=(v-p)^+$, so $m(0)=\mathbb{E}_H[(v-P)^+]$. Therefore,
\[
\lim_{s\downarrow 0}b(s)
=
v-m(0)
=
v-\mathbb{E}_H[(v-P)^+]
=
\mathbb{E}_H[\min\{v,P\}].
\]

At the exact deadline, the agent purchases if and only if $p\le v$, so $b(0)=v$.

\medskip\noindent\textit{Part~\ref{prop:mono_convex}: Monotonicity and convexity of $b$.} Define $G_H(x):=\mathbb{E}_H[(x-P)^+]$ for $x\ge 0$. Then $G_H(x)\ge 0$ for all $x$, and the ODE established above can be written as $b'(s)=-\lambda G_H(b(s))$ for a.e. $s>0$. Since $G_H\ge 0$, we have $b'(s)\le 0$ almost everywhere, so $b$ is non-increasing.

Moreover, $G_H$ is non-decreasing in $x$. Because $b$ is non-increasing in $s$, the composition $s\mapsto G_H(b(s))$ is non-increasing. Hence $b'(s)=-\lambda G_H(b(s))$ is non-decreasing almost everywhere. This implies that $b$ is convex on $(0,T]$. \hfill\qedsymbol
\end{proof}
\subsection{Proof of \Cref{prop:stationary_monotonicity}}
\begin{proof}{Proof of \Cref{prop:stationary_monotonicity}.}
For a price-adjustment distribution $H$, define $G_H(x):=\mathbb{E}_H[(x-P)^+]$. The map $G_H$ is non-negative, non-decreasing, and $1$-Lipschitz in $x$, so the ODE characterizing $b$ in \Cref{prop:threshold} has a unique absolutely continuous solution given the right-limit initial condition. At the exact deadline, $b(0)=v$ regardless of $\lambda$ and $H$, so Parts (i) and (ii) hold at $s=0$ with equality; Part (iii) holds at $s=0$ because $b_i(0)=v_i$ and $v_1\le v_2$. It remains to be verified that the ordering holds for $s>0$. For notational convenience, we write $b_i$ for the optimal threshold corresponding to parameter tuple~$i$. \textit{Part (i).} Fix $H$ and $v$. By \Cref{prop:threshold}, for each arrival rate $\lambda$, the threshold solves
\[
b'(s)=-\lambda\,G_H(b(s)), \qquad s>0,
\]
with the same right-limit initial condition
\[
\lim_{s\downarrow 0}b(s)=\mathbb{E}_H[\min\{v,P\}].
\]
Thus, changing $\lambda$ does not change the initial level of the threshold; it only changes the speed at which the same autonomous dynamics evolves. More precisely, let $\beta$ solve
\[
\beta'(u)=-G_H(\beta(u)), 
\qquad 
\beta(0)=\mathbb{E}_H[\min\{v,P\}].
\]
Then $b_i(s)=\beta(\lambda_i s)$ for $s>0$. Since $G_H\ge 0$, the function $\beta$ is non-increasing. Therefore, if $\lambda_1\le \lambda_2$, then for every $s>0$,
\[
b_{2}(s)=\beta(\lambda_2 s)\le \beta(\lambda_1 s)=b_{1}(s).
\]

\medskip\noindent\textit{Part (ii).} Fix $\lambda$ and $v$. From the ODE characterization, we have
\[
b_i'(s)=-\lambda\,G_{H_i}(b_i(s)), \qquad s>0,
\]
and
\[
\lim_{s\downarrow 0} b_i(s)=\mathbb{E}_{H_i}[\min\{v,P\}], \qquad i=1,2.
\]
By the expectation characterization of FSD \citep[Section~1.A.1]{shaked2007stochastic}, $H_1\preceq_{\mathrm{FSD}}H_2$ implies that, for every increasing function $\phi$ for which the expectations exist, one has $\mathbb{E}_{H_1}[\phi(P)]\le \mathbb{E}_{H_2}[\phi(P)]$. Applying this first to $\phi(p)=\min\{v,p\}$, which is increasing in $p$, gives
\[
\lim_{s\downarrow 0} b_1(s)=\mathbb{E}_{H_1}[\min\{v,P\}]
\le
\mathbb{E}_{H_2}[\min\{v,P\}]
=
\lim_{s\downarrow 0} b_2(s).
\]

Next, fix any $x\ge 0$. Since $p\mapsto (x-p)^+$ is decreasing in $p$, monotonicity under decreasing transformations and the expectation characterization of FSD imply $G_{H_1}(x)\ge G_{H_2}(x)$ \citep[Theorem~1.A.3(a) and Section~1.A.1]{shaked2007stochastic}. We now show that $b_1(s)\le b_2(s)$ for all $s$. If this were not the case, then by continuity, there exists a first time
\[
s_*:=\inf\{s>0: b_1(s)>b_2(s)\}.
\]
At this time, $b_1(s_*)=b_2(s_*)$. Moreover, for $s>s_*$ sufficiently close to $s_*$, one has $b_1(s)\ge b_2(s)$. On such an interval,
\[
b_1'(s)-b_2'(s)
=
-\lambda\,G_{H_1}(b_1(s))
+\lambda\,G_{H_2}(b_2(s)).
\]
Because $b_1(s)\ge b_2(s)$, $G_{H_2}$ is non-decreasing, and $G_{H_1}\ge G_{H_2}$ pointwise, we obtain $b_1'(s)-b_2'(s)\le 0$. This contradicts the first crossing of $b_1$ above $b_2$. Therefore $b_1(s)\le b_2(s)$ for all $s>0$.

\medskip\noindent\textit{Part (iii).} Fix $\lambda$ and $H$. The two thresholds satisfy the same ODE,
\[
b_i'(s)=-\lambda\,G_H(b_i(s)), \qquad s>0,
\]
but have different right-limit initial conditions,
\[
\lim_{s\downarrow 0} b_i(s)=\mathbb{E}_{H}[\min\{v_i,P\}], \qquad i=1,2.
\]
Since $v_1\le v_2$ and $v\mapsto \min\{v,p\}$ is non-decreasing for each fixed $p$, it follows that
\[
\lim_{s\downarrow 0} b_1(s)=\mathbb{E}_{H}[\min\{v_1,P\}]
\le
\mathbb{E}_{H}[\min\{v_2,P\}]
=
\lim_{s\downarrow 0} b_2(s).
\]

Suppose, for contradiction, that $b_1(s)>b_2(s)$ for some $s>0$, and let
\[
s_*:=\inf\{s>0: b_1(s)>b_2(s)\}.
\]
Then $b_1(s_*)=b_2(s_*)$, and for $s>s_*$ sufficiently close to $s_*$, one has $b_1(s)\ge b_2(s)$. On such an interval,
\[
b_1'(s)-b_2'(s)
=
-\lambda\,G_H(b_1(s))
+\lambda\,G_H(b_2(s))
\le 0,
\]
because $G_H$ is non-decreasing. This contradicts the first crossing of $b_1$ above $b_2$. Therefore $b_1(s)\le b_2(s)$ for all $s>0$. \hfill\qedsymbol
\end{proof}
\subsection{Proof of \Cref{prop:bayes_threshold}}
\begin{proof}{Proof of \Cref{prop:bayes_threshold}.}
Given a state $(p,\pi)$, define $M(s,\pi):=\E_{P\sim \bar{H}_\pi} \left[V^B\bigl(s,P,\Phi(\pi,P)\bigr)\right]$. Here, $M(s,\pi)$ is the continuation value immediately after a price adjustment when the time-to-go is $s$, before observing the realized new posted price. 

Because the adjustment process is Poisson with rate $\lambda$, the current posted price remains equal to $p$ until the first adjustment arrives. Moreover, no new information is revealed before that first adjustment. Hence, prior to the first adjustment, any admissible policy is equivalent to a two-step rule for some deterministic waiting duration $w\in[0,s]$. The policy waits for $w$ units of time. If no adjustment has occurred by then, it purchases at price $p$ when the time-to-go has fallen to $s-w$. If an adjustment occurs earlier, it switches to an optimal policy from the post-adjustment state.

Let $J_{s,w}(p,\pi)$ denote the value of this policy. Conditioning on the first adjustment time yields
  \[
    J_{s,w}(p,\pi)
    =
    e^{-\lambda w}(v-p)^+
    +
    \int_0^w \lambda e^{-\lambda z} M(s-z,\pi)\,dz .
  \]
Therefore,
  \[
    V^B(s,p,\pi)=\sup_{w\in[0,s]}J_{s,w}(p,\pi).
  \]

For each fixed $w$, the only term in $J_{s,w}(p,\pi)$ that depends on $p$ is $e^{-\lambda w}(v-p)^+$, which is non-increasing in $p$. Hence $p\mapsto J_{s,w}(p,\pi)$ is non-increasing for every $w$, and taking the supremum over $w$ shows that $p\mapsto V^B(s,p,\pi)$ is non-increasing.

Immediate purchase corresponds to $w=0$, for which
  \[
    J_{s,0}(p,\pi)=(v-p)^+.
  \]
For any fixed $w\in(0,s]$, write
  \[
    A_{s,w}(\pi):=\int_0^w \lambda e^{-\lambda z} M(s-z,\pi)\,dz,
  \]
which is independent of $p$. Then
  \[
    J_{s,w}(p,\pi)=e^{-\lambda w}(v-p)^+ + A_{s,w}(\pi).
  \]

If $p\le v$, then
  \[
    J_{s,0}(p,\pi)-J_{s,w}(p,\pi)
    =
    (1-e^{-\lambda w})(v-p)-A_{s,w}(\pi),
  \]
which is non-increasing in $p$. Therefore, if immediate purchase weakly dominates waiting time $w$ at some price $p_0$, then it also weakly dominates waiting time $w$ at every lower price $p\le p_0$.

Restrict attention first to prices $p\in[0,v]$. For each fixed $(s,\pi)$ and each $w>0$, the set $\{p\in[0,v]:J_{s,0}(p,\pi)\ge J_{s,w}(p,\pi)\}$ is therefore a lower interval. It is nonempty because immediate purchase at $p=0$ attains the maximal possible surplus $v$. Since immediate purchase is optimal exactly when it weakly dominates every waiting time $w\in(0,s]$, the set of prices in $[0,v]$ at which buying is optimal is the intersection of lower intervals and hence is itself a lower interval. Let $b(s,\pi)\in[0,v]$ denote its upper endpoint.

For prices $p>v$, immediate purchase yields zero surplus. Waiting is weakly optimal because it also yields a nonnegative payoff, and ties can be resolved in favor of purchasing. Thus, one can select an optimal policy that buys for $p\le b(s,\pi)$ and waits for $p>b(s,\pi)$. \hfill\qedsymbol
\end{proof}

\subsection{Proof of \Cref{prop:bayes_param_monotonicity}}
\begin{proof}{Proof of \Cref{prop:bayes_param_monotonicity}.}
We give the argument briefly because it follows the same comparison logic as \Cref{prop:stationary_monotonicity}. Fix a belief $\pi$ and valuation $v$. The arrival rate affects the Bayesian dynamic program only through the Poisson process. After the time change $\tilde{s}=\lambda s$, the problem with rate $\lambda$ and time-to-go $s$ is equivalent to a unit-rate problem with effective time-to-go $\tilde{s}$, with the same posterior update rule and predictive price-adjustment distributions. Hence
\[
  b(s,\pi;\lambda,v)=\tilde b(\lambda s,\pi;v),
\]
where $\tilde b$ denotes the threshold in the unit-rate formulation.

It remains to note that the Bayesian stopping boundary is non-increasing in effective time-to-go. A longer time horizon weakly expands the set of admissible waiting policies while leaving immediate purchase feasible. Thus, the option value of waiting is weakly larger at every current price. Since the stopping region is a lower interval by \Cref{prop:bayes_threshold}, the upper endpoint of this interval cannot increase with the time-to-go. Therefore, if $\lambda_1\le\lambda_2$, then
\[
  b(s,\pi;\lambda_2,v)
  =
  \tilde b(\lambda_2 s,\pi;v)
  \le
  \tilde b(\lambda_1 s,\pi;v)
  =
  b(s,\pi;\lambda_1,v).
\]

For valuation monotonicity, fix $\lambda$ and $\pi$, and let $v_1\le v_2$.
Write $V_i$ and $b_i$ for the Bayesian value function and threshold under valuation $v_i$, and set $\Delta:=v_2-v_1$. The posterior dynamics, adjustment process, and admissible stopping rules are identical under the two valuations; only the purchase payoff changes. For any admissible stopping rule~$\tau$, the pathwise bound \((v_2-P_\tau)^+\le (v_1-P_\tau)^++\Delta\) implies, after taking expectations and optimizing over stopping rules, that
\[
  V_2(s,p,\pi)\le V_1(s,p,\pi)+\Delta .
\]
Now take any price \(p\le b_1(s,\pi)\). By \Cref{prop:bayes_threshold}, purchase
is optimal under valuation \(v_1\); since \(b_1(s,\pi)\le v_1\), this gives
\(V_1(s,p,\pi)=v_1-p\). The preceding inequality then yields
\(V_2(s,p,\pi)\le v_2-p\). Immediate purchase is feasible under valuation \(v_2\), so \(V_2(s,p,\pi)\ge v_2-p\). Hence equality holds, and \(p\) is also in the stopping region under valuation \(v_2\). Therefore the stopping region under \(v_1\) is contained in the stopping region under \(v_2\), which implies
\[
  b(s,\pi;\lambda,v_1)\le b(s,\pi;\lambda,v_2)
  \qquad\text{for all }s.
\]
\hfill\qedsymbol
\end{proof}

\subsection{Proof of \Cref{lem:belief_order}}
\begin{proof}{Proof of \Cref{lem:belief_order}.}
\medskip\noindent\textit{Part~\ref{lem:bo_predictive}.} By \Cref{asm:mlrp} and the fact that MLR order implies FSD \citep[Theorem~1.C.1]{shaked2007stochastic}, $\theta_1\le \theta_2$ implies $H_{\theta_1}\preceq_{\mathrm{FSD}}H_{\theta_2}$. Let $\phi$ be any increasing function for which the expectations below exist, and define $m_\phi(\theta):=\mathbb E_{P\sim H_\theta}[\phi(P)]$. By the expectation characterization of FSD \citep[Section~1.A.1]{shaked2007stochastic}, the preceding implication makes $m_\phi$ increasing in $\theta$. Therefore, by the same characterization, if $\pi_1\preceq_{\mathrm{FSD}}\pi_2$, then \( \mathbb E_{P\sim\bar H_{\pi_1}}[\phi(P)] =\int_\Theta m_\phi(\theta)\,\pi_1(d\theta) \le \int_\Theta m_\phi(\theta)\,\pi_2(d\theta) =\mathbb E_{P\sim\bar H_{\pi_2}}[\phi(P)]. \) This is exactly $\bar H_{\pi_1}\preceq_{\mathrm{FSD}}\bar H_{\pi_2}$.

\medskip\noindent\textit{Part~\ref{lem:bo_update}.} Let $q_i$ denote densities of $\pi_i$ in the MLR definition, and fix an observed price $p$ for which both posteriors are defined. The posterior density is proportional to $h_\theta(p)q_i(\theta)$. For $\theta_1<\theta_2$, the MLR ordering $\pi_1\preceq_{\mathrm{MLR}}\pi_2$ gives $q_2(\theta_2)q_1(\theta_1)\ge q_2(\theta_1)q_1(\theta_2)$. Multiplying both sides by the common nonnegative factor $h_{\theta_2}(p)h_{\theta_1}(p)$ and dividing by the positive product $Z_1(p)Z_2(p)$ yields
\[
\frac{h_{\theta_2}(p)q_2(\theta_2)}{Z_2(p)}
\frac{h_{\theta_1}(p)q_1(\theta_1)}{Z_1(p)}
\ge
\frac{h_{\theta_1}(p)q_2(\theta_1)}{Z_2(p)}
\frac{h_{\theta_2}(p)q_1(\theta_2)}{Z_1(p)},
\]
where $Z_i(p)$ is the normalizing constant in Bayes' rule for $\Phi(\pi_i,p)$. This is the cross-product condition for $\Phi(\pi_1,p)\preceq_{\mathrm{MLR}}\Phi(\pi_2,p)$.

\medskip\noindent\textit{Part~\ref{lem:bo_signal}.} Fix prior densities $q_1,q_2$ with $\pi_1\preceq_{\mathrm{MLR}}\pi_2$ and prices $p_1\le p_2$ for which the two posteriors are defined. For $\theta_1<\theta_2$, Assumption~\ref{asm:mlrp} gives
\[
  h_{\theta_2}(p_2)h_{\theta_1}(p_1)
  \ge
  h_{\theta_2}(p_1)h_{\theta_1}(p_2).
\]
The prior ordering gives
\[
  q_2(\theta_2)q_1(\theta_1)
  \ge
  q_2(\theta_1)q_1(\theta_2).
\]
Multiplying the two inequalities and dividing by the positive posterior normalizing constants yields
\[
\frac{h_{\theta_2}(p_2)q_2(\theta_2)}{Z_2(p_2)}
\frac{h_{\theta_1}(p_1)q_1(\theta_1)}{Z_1(p_1)}
\ge
\frac{h_{\theta_1}(p_2)q_2(\theta_1)}{Z_2(p_2)}
\frac{h_{\theta_2}(p_1)q_1(\theta_2)}{Z_1(p_1)}.
\]
Here $Z_i(p_i)$ is the normalizing constant for $\Phi(\pi_i,p_i)$. The displayed inequality is the cross-product condition for $\Phi(\pi_1,p_1)\preceq_{\mathrm{MLR}}\Phi(\pi_2,p_2)$. Taking $\pi_1=\pi_2=\pi$ gives the stated monotonicity in the observed signal. This proves all three claims. \hfill\qedsymbol
\end{proof}

\subsection{Proof of \Cref{prop:bayes_belief_monotonicity}}
\begin{proof}{Proof of \Cref{prop:bayes_belief_monotonicity}.}
We use a finite-adjustment approximation and then pass to the original Poisson model. For each $n\ge 0$, let $V_n(s,p,\pi)$ be the value when the agent can use at most $n$ future price adjustments before the deadline. If no future adjustment can be used, the agent must decide immediately, so $V_0(s,p,\pi):=(v-p)^+$. For $n+1$ available adjustments,
\[
V_{n+1}(s,p,\pi)
=
\sup_{0\leq t\leq s}\left\{
e^{-\lambda t}(v-p)^+
+
\int_0^{t} \lambda e^{-\lambda u}\,
\E_{P\sim \bar H_\pi}
\!\left[
V_n\bigl(s-u,P,\Phi(\pi,P)\bigr)
\right]\dd u
\right\}.
\]

The sequence is non-decreasing in $n$, and the Poisson process generates only finitely many adjustments before the deadline almost surely. Hence $V_n(s,p,\pi)\uparrow V^B(s,p,\pi)$ pointwise as $n\to\infty$.

The key step is an induction showing that two monotonicity properties are preserved by the dynamic program. For every $n$,

\begin{enumerate}[label=(\alph*),ref=(\alph*)]
\item\label{prop:bayes_belief_ind_price} for each fixed $(s,\pi)$, the map $p\mapsto V_n(s,p,\pi)$ is non-increasing;
\item\label{prop:bayes_belief_ind_belief} if $\pi_1\preceq_{\mathrm{MLR}}\pi_2$, then
\[
V_n(s,p,\pi_1)\ge V_n(s,p,\pi_2)
\qquad\text{for all }(s,p).
\]
\end{enumerate}

The base case is immediate because $V_0(s,p,\pi)=(v-p)^+$ is non-increasing in $p$ and does not depend on $\pi$. Assume the two properties hold for some $n\ge0$.

To prove price monotonicity for $V_{n+1}$, fix $(s,\pi)$ and write, for each $t\in[0,s]$,
\[
J_n(t;s,p,\pi)
:=
e^{-\lambda t}(v-p)^+
+
\int_0^t \lambda e^{-\lambda u}\,
\E_{P\sim \bar H_\pi}\!\left[
V_n\bigl(s-u,P,\Phi(\pi,P)\bigr)
\right]\dd u .
\]
Since $V_{n+1}(s,p,\pi)=\sup_{0\le t\le s}J_n(t;s,p,\pi)$ and the current price enters $J_n$ only through the non-increasing term $e^{-\lambda t}(v-p)^+$, the supremum is also non-increasing in $p$. This proves Part~\ref{prop:bayes_belief_ind_price} for $n+1$.

It remains to prove belief monotonicity. Fix $\pi_1\preceq_{\mathrm{MLR}}\pi_2$. For each $u\in[0,s]$, let
\[
D_i:=\left\{p:\int_\Theta h_\theta(p)\pi_i(d\theta)>0\right\},
\qquad i=1,2,
\]
be the positive predictive support under belief $\pi_i$. On $D_i$, define
\[
g_i(u,p):=
V_n\bigl(s-u,p,\Phi(\pi_i,p)\bigr),
\qquad i=1,2.
\]

The continuation comparison has two moving parts: the predictive distribution is worse under $\pi_2$, and the posterior after a given signal is also worse. We map these effects to a common price domain using the following envelope. For each fixed $u$, the function $g_2(u,\cdot)$ admits a bounded non-increasing extension $\widetilde g_2(u,\cdot)$ to the full price space such that
\[
\widetilde g_2(u,p)=g_2(u,p)\quad \bar H_{\pi_2}\text{-a.s.},
\qquad
g_1(u,p)\ge \widetilde g_2(u,p)\quad \bar H_{\pi_1}\text{-a.s.}
\]
Indeed, $g_2(u,\cdot)$ is non-increasing on $D_2$: if $p_1,p_2\in D_2$ and $p_1\le p_2$, then Part~\ref{lem:bo_signal} gives $\Phi(\pi_2,p_1)\preceq_{\mathrm{MLR}}\Phi(\pi_2,p_2)$, and the induction hypothesis applied first to beliefs and then to prices gives $g_2(u,p_1)\ge g_2(u,p_2)$. Define
\[
\widetilde g_2(u,p)
:=
\begin{cases}
g_2(u,p),
& p\in D_2,\\[0.6ex]
\displaystyle \sup\left\{g_2(u,q):q\in D_2,\ q\ge p\right\},
& p\notin D_2 \text{ and } \{q\in D_2:q\ge p\}\neq\emptyset,\\[1.2ex]
0,
& \{q\in D_2:q\ge p\}=\emptyset.
\end{cases}
\]
This construction preserves $g_2$ on $D_2$, fills support gaps by the upper envelope generated by $\pi_2$-possible prices, and assigns the payoff lower bound above the $\pi_2$ predictive support. Hence $\widetilde g_2$ is bounded, non-increasing, and equal to $g_2$ $\bar H_{\pi_2}$-a.s. Moreover, if $p\in D_1$ and $q\in D_2$ with $q\ge p$, then Part~\ref{lem:bo_signal} gives $\Phi(\pi_1,p)\preceq_{\mathrm{MLR}}\Phi(\pi_2,q)$, so the induction hypothesis yields
\[
g_1(u,p)
=
V_n\bigl(s-u,p,\Phi(\pi_1,p)\bigr)
\ge
V_n\bigl(s-u,p,\Phi(\pi_2,q)\bigr)
\ge
V_n\bigl(s-u,q,\Phi(\pi_2,q)\bigr)
=
g_2(u,q).
\]
Taking the supremum over such $q$ proves the desired envelope inequality in the first two cases of the definition; in the last case, $\widetilde g_2(u,p)=0$ and the inequality follows from nonnegativity.

We can now compare the continuation expectations. Since $\pi_1\preceq_{\mathrm{MLR}}\pi_2$ implies $\pi_1\preceq_{\mathrm{FSD}}\pi_2$ \citep[Theorem~1.C.1]{shaked2007stochastic}, \Cref{lem:bo_predictive} yields
\(
\bar H_{\pi_1}\preceq_{\mathrm{FSD}}\bar H_{\pi_2}
\).

For each fixed $u$, we have
\[
\E_{P\sim \bar H_{\pi_1}}[g_1(u,P)]
\ge
\E_{P\sim \bar H_{\pi_1}}[\widetilde g_2(u,P)]
\ge
\E_{P\sim \bar H_{\pi_2}}[\widetilde g_2(u,P)]
=
\E_{P\sim \bar H_{\pi_2}}[g_2(u,P)].
\]
The first inequality follows from $g_1(u,p)\ge \widetilde g_2(u,p)$ $\bar H_{\pi_1}$-a.s. The second follows from $\bar H_{\pi_1}\preceq_{\mathrm{FSD}}\bar H_{\pi_2}$ and the fact that $\widetilde g_2(u,\cdot)$ is non-increasing, using monotonicity under decreasing transformations and the expectation characterization of FSD \citep[Theorem~1.A.3(a) and Section~1.A.1]{shaked2007stochastic}. The equality follows from $\widetilde g_2=g_2$ $\bar H_{\pi_2}$-a.s.

Thus, for every waiting time $t$,
\(
J_n(t;s,p,\pi_1)\ge J_n(t;s,p,\pi_2)
\) for all $t\in[0,s]$. Taking the supremum over $t$ gives
\[
V_{n+1}(s,p,\pi_1)\ge V_{n+1}(s,p,\pi_2).
\]
Thus Part~\ref{prop:bayes_belief_ind_belief} also holds for $n+1$.

By induction, both properties hold for all $n\ge 0$. Passing to the pointwise limit
\(
V_n(s,p,\pi)\uparrow V^B(s,p,\pi)
\)
preserves the belief comparison, so
\[
V^B(s,p,\pi_1)\ge V^B(s,p,\pi_2)
\qquad\text{for all }(s,p).
\]

Finally, translate the value comparison into a threshold comparison. By \Cref{prop:bayes_threshold}, for each $(s,\pi)$ the stopping region is of the form $\{p:p\le b(s,\pi)\}$. If $p$ is in the stopping region under $\pi_1$, then $V^B(s,p,\pi_1)=(v-p)^+$. The value comparison gives $V^B(s,p,\pi_2)\le V^B(s,p,\pi_1)$, while immediate purchase is feasible under $\pi_2$, so $V^B(s,p,\pi_2)\ge (v-p)^+$. Hence $V^B(s,p,\pi_2)=(v-p)^+$, and $p$ is also in the stopping region under $\pi_2$. Therefore the stopping region under $\pi_1$ is contained in the stopping region under $\pi_2$, and
\[
b(s,\pi_1)\le b(s,\pi_2)
\qquad\text{for all }s.
\]
This proves both claims. \hfill\qedsymbol
\end{proof}

\subsection{Proofs for Information Gap}

We organize the argument in two layers. The short-window bound follows directly from a deadline-purchase comparison. For longer windows, we compare the oracle with a feasible learn-then-act policy: the loss first splits into a learning-delay term and an acting-phase term; the learning-delay term is controlled by the lower tail of the price distribution near its support endpoint; and the acting-phase term is controlled by posterior concentration together with Lipschitz continuity of the oracle threshold. The final proof chooses the learning length and confidence level to balance these bounds.

\subsubsection{Short-Window Bound.}

\begin{proof}{Proof of \Cref{lem:short_window_gap}.}
The lower bound is the value of information. For the upper bound, compare the oracle with the feasible Bayesian policy that waits until the deadline and then purchases if the posted price is smaller than or equal to the consumer's valuation. Let $N_T$ denote the number of adjustments during the shopping window. On $\{N_T=0\}$, the price never changes, so the deadline policy attains the same payoff as the oracle, $(v-P_T)^+$. On $\{N_T\ge1\}$, the oracle's advantage over this feasible policy is at most $v$. Since the Bayesian value is at least the payoff from the deadline policy,
\[
  \mathcal G(T,\hat\pi)
  \le
  v\,\mathbb P(N_T\ge1)
  =
  v(1-e^{-\lambda T})
  \le
  v\lambda T,
\]
where the last inequality uses $1-e^{-x}\le x$. \hfill\qedsymbol
\end{proof}

\subsubsection{ Long-Window Setup: Posterior Confidence Set and LTA Policy.}

For the long-window bound, we first construct the posterior confidence set used by the learn-then-act policy.
For a realized price history $x_{1:n}$, the posterior effective support is
\[
  \Theta_n(x_{1:n})
  :=
  \{\theta\in\Theta:\hat\pi(\theta)\prod_{i=1}^n h_\theta(x_i)>0\}.
\]
The posterior-regularity condition in \Cref{ass:regular_family} implies that, after any $n$ observed prices, the posterior is $(a_0{+}a_1 n)$-strongly log-concave on its effective support, which is an interval. Let $\bar\theta_n$ denote the posterior mean and, for $\delta\in(0,1)$, define
\[
  d_n(\delta):=
  \sqrt{\frac{2\log(2/\delta)}{a_0+a_1 n}},
  \qquad
  I_n(\delta):=
  [\bar\theta_n-d_n(\delta),\bar\theta_n+d_n(\delta)]\cap\Theta.
\]

The proof of \Cref{prop:info_gap_confidence} compares the oracle's expected payoff with that of a feasible \emph{learn-then-act} (LTA) policy. Fix a split time $\ell\in(0,T)$. We use elapsed-time notation: $P_0$ is the entry price and $P_\ell$ is the price at the end of the learning phase. The LTA policy operates in two phases: 
\begin{enumerate}[label=(\alph*)]
  \item \emph{Learning phase} (elapsed time $[0,\ell]$): starting from the
posterior $\Phi(\hat\pi,P_0)$  after observing the entry price, observe prices without purchasing. After $\ell$ units of time, the agent has seen $n_\ell:=1+N_\ell$ prices, where $N_\ell\sim\mathrm{Poisson}(\lambda\ell)$, and formed the posterior mean $\bar\theta_{n_\ell}$.
  \item \emph{Acting phase} (elapsed time $[\ell,T]$): construct the
posterior confidence interval $I_{n_\ell}(\delta)$ and follow the upper-envelope threshold
    \[
      \bar b_{I_{n_\ell}(\delta)}(s)
      :=
      \sup_{\eta\in I_{n_\ell}(\delta)} b^\eta(s),
      \qquad s\in[0,T-\ell].
    \]
At elapsed time $t\in[\ell,T]$, the policy stops whenever the current price satisfies $P_t\le \bar b_{I_{n_\ell}(\delta)}(T-t)$. 
\end{enumerate}

Since $V^{\mathrm B}$ is the value of the optimal Bayesian policy from the entry state $(T,P_0,\Phi(\hat\pi,P_0))$, it is at least as large as the expected payoff of any feasible policy from the same state, including the LTA policy. Hence,  
\begin{equation*}
  \mathcal G(T,\hat\pi)
  \;\le\;
  \E\bigl[V^{\tilde{\theta}}(T,P_0)\bigr] 
  -
  \E\bigl[\text{LTA payoff}\bigr].
\end{equation*}
We decompose the right-hand side into two terms:
\begin{equation}\label{eq:gap_two_terms}
  \mathcal G(T,\hat\pi)
  \;\le\;
  \underbrace{%
    \E\bigl[V^{\tilde{\theta}}(T,P_0)-V^{\tilde{\theta}}(T{-}\ell,P_\ell)\bigr]}
  _{\mathcal E(\ell)}
  \;+\;
  \underbrace{%
    \E\bigl[V^{\tilde{\theta}}(T{-}\ell,P_\ell)
    -J_{T-\ell}^{\tilde{\theta}}\bigl(P_\ell,\,
    \tau_{\bar b_{I_{n_\ell}(\delta)}}\bigr)\bigr]
    }
  _{\mathcal R(\ell)},
\end{equation}
where $J_s^\theta(p,\tau)$ is the expected payoff of stopping rule~$\tau$ under $\theta$-dynamics from a state with time-to-go $s$ and current price $p$, and $\tau_{\bar b_{I_{n_\ell}(\delta)}}$ denotes the stopping rule using the confidence-envelope threshold.  The first term~$\mathcal E(\ell)$ is the \emph{learning-delay gap}; the second term~$\mathcal R(\ell)$ is the \emph{acting-phase regret}.

\subsubsection{Learning-Delay Bound.}

\begin{lemma}[{\sc Learning Delay}]\label{lem:early_purchase_rate}
Under Assumption~\ref{ass:regular_family}(i), there exists $C_{\mathrm E}<\infty$ such that for every $\ell\in[0,T)$,
  \[
    \mathcal E(\ell)
    \le
    C_{\mathrm E}(1+T-\ell)^{-1/\alpha}.
  \]
\end{lemma}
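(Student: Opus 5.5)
The plan is to bound the two terms of $\mathcal E(\ell)$ separately, conditional on $\tilde\theta=\theta$, with constants that do not depend on $\theta$. Throughout, write $s:=T-\ell$, $V:=V^\theta$, $b:=b^\theta$, $m(s):=\E_{H_\theta}[V(s,P)]$ and $p_L:=p_L^\theta$.

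\emph{Upper bound on the first term.} Every feasible price is at least $p_L$ almost surely. Hence $V(T,P_0)\le v-p_L$.

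\emph{Lower bound on the second term by a wait-then-act policy.} If $P_\ell\le b(s)$, then $V(s,P_\ell)=v-P_\ell$. Otherwise, consider the policy that waits for the first adjustment in the next $s/2$ units of time and then acts optimally. Using \Cref{lemma:mono_V} (monotonicity in time-to-go) and the boundary identity $b=v-m$ from \Cref{prop:threshold}, this gives
\[
V(s,p)\;\ge\;\int_0^{s/2}\lambda e^{-\lambda z}\,m(s-z)\,\dd z\;\ge\;(1-e^{-\lambda s/2})\bigl(v-b(s/2)\bigr).
\]
In the first case, $v-P_\ell\ge v-b(s)\ge v-b(s/2)$, because $b$ is non-increasing. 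So in both cases
\[
v-p_L-V(s,P_\ell)\;\le\;\bigl(b(s/2)-p_L\bigr)+v\,e^{-\lambda s/2}.
\]
The term $v e^{-\lambda s/2}$ is $O((1+s)^{-1/\alpha})$. It remains to bound the threshold excess $y(s):=b(s)-p_L$.

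\emph{Threshold decay via the ODE.} By \Cref{prop:threshold}\ref{prop:ODE}, $y'(s)=-\lambda G(y(s))$ with $G(x):=\E_{H_\theta}[(p_L+x-P)^+]=\int_0^x H_\theta([p_L,p_L+u])\,\dd u$. \Cref{ass:regular_family}(i) then gives $G(x)\ge c\,x^{\alpha+1}/(\alpha+1)$ for $x\le\varepsilon$, and $G(x)\ge G(\varepsilon)>0$ for $x\ge\varepsilon$.

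Since $y(0^+)\le v$, the phase where $y>\varepsilon$ lasts at most $s_0:=(v-\varepsilon)/(\lambda G(\varepsilon))$ units of time. After that, comparison with the Bernoulli-type ODE $z'=-\lambda c' z^{\alpha+1}$, where $c':=c/(\alpha+1)$, yields
\[
y(s)\le\bigl(\varepsilon^{-\alpha}+\alpha\lambda c'(s-s_0)\bigr)^{-1/\alpha}.
\]
For $s\le s_0$ we use the trivial bound $y\le v$. Together these give $y(s)\le C(1+s)^{-1/\alpha}$ for all $s\ge0$, with $C$ depending only on $(v,\lambda,c,\alpha,\varepsilon)$.

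Substituting $s/2$ for $s$ and taking expectations over $\theta$ and over the path gives $\mathcal E(\ell)\le C_{\mathrm E}(1+T-\ell)^{-1/\alpha}$.

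The main obstacle is the second step. The lower bound on $V(s,P_\ell)$ must be uniform over the random price $P_\ell$, including prices above $b(s)$, and it must avoid a logarithmic loss. Routing it through $m$ and the boundary identity handles this, but one must check that $m(s-z)\ge m(s/2)$ for all $z\le s/2$, which is exactly the monotonicity of $V$ in time-to-go from \Cref{lemma:mono_V}.
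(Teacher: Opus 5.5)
Your proof is correct, and its second half matches the paper's threshold-decay step: a uniform bound on the hitting time of $[0,\varepsilon]$, followed by comparison with $z'=-Kz^{\alpha+1}$. The difference is in how you reduce $\mathcal E(\ell)$ to the threshold excess.

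The paper uses the fact that, conditional on $\tilde\theta=\theta$, both $P_0$ and $P_\ell$ have marginal law $H_\theta$. Hence $\E[V^\theta(T,P_0)]=m^\theta(T)$ and $\E[V^\theta(T-\ell,P_\ell)]=m^\theta(T-\ell)$, and the boundary identity gives the exact equality $\mathcal E(\ell)=\E[b^{\tilde\theta}(T-\ell)-b^{\tilde\theta}(T)]\le\E[y_{\tilde\theta}(T-\ell)]$. With that identity, the step you flag as the main obstacle, a lower bound on $V(s,P_\ell)$ uniform in the realized price, never arises.

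You instead bound both terms pathwise: $V(T,P_0)\le v-p_L^\theta$, and $V(s,P_\ell)\ge(1-e^{-\lambda s/2})(v-b(s/2))$ via the wait-for-one-adjustment policy. This costs a halved horizon and an extra $ve^{-\lambda s/2}$ term, both absorbed into $C_{\mathrm E}$. In return you get an almost-sure bound on $V^\theta(T,P_0)-V^\theta(T-\ell,P_\ell)$ that uses nothing about the laws of $P_0$ and $P_\ell$ except $P_0\ge p_L^\theta$. It would therefore survive, for example, a window-start price not drawn from $H_\theta$.

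Three small remarks:
\begin{itemize}
\item The case split on $P_\ell\le b(s)$ is unnecessary. The waiting policy is feasible at every price, so it already yields the lower bound for all $p$.
\item To make the constant uniform in $\theta$, replace $G(\varepsilon)$ in $s_0$ by its lower bound $c\varepsilon^{\alpha+1}/(\alpha+1)$, and $v-\varepsilon$ by $(v-\varepsilon)^+$, as the paper does.
\item The Bernoulli comparison implicitly needs $y\ge 0$. This holds because $y(0^+)\ge 0$ and the drift vanishes at $y=0$; the paper states it explicitly.
\end{itemize}
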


\begin{proof}{Proof of \Cref{lem:early_purchase_rate}.}
Fix $\ell<T$. Conditional on $\tilde{\theta}=\theta$, both $P_0$ and $P_\ell$ have marginal~$H_\theta$ (the initial price is drawn from~$H_\theta$, and at any later time the current price is either unchanged or the most recent adjustment, both distributed as~$H_\theta$). Hence, 
  \begin{align*}
\mathcal E(\ell) &= \E\bigl[m^{\tilde{\theta}}(T)-m^{\tilde{\theta}}(T{-}\ell)\bigr] = \E\bigl[b^{\tilde{\theta}}(T{-}\ell)-b^{\tilde{\theta}}(T)\bigr],
  \end{align*}
using $m^\theta(s)=\E_{H_\theta}[V^\theta(s,P)]=v-b^\theta(s)$ from \Cref{prop:threshold}.

The threshold remains above the lower endpoint $p_L^\theta$. Indeed, $b^\theta(0^+)=\E_{H_\theta}[\min\{v,P\}]\ge p_L^\theta$, and the ODE has zero drift at and below $p_L^\theta$ because $H_\theta([0,p_L^\theta))=0$. Thus, the solution cannot cross below $p_L^\theta$. Define $y_\theta(s):=b^\theta(s)-p_L^{\theta}\ge0$. Let $c>0$, $\alpha>0$, and $\varepsilon>0$ be the common constants in
Assumption~\ref{ass:regular_family}, the ODE from \Cref{prop:threshold} gives, for a.e. $s>0$,
  \[
    y_\theta'(s)
    =
    -\lambda\,\E_{H_\theta}\bigl[(p_L^{\theta}{+}y_\theta(s)-P)^+\bigr]
    \le
-\lambda\,\int_0^{\min\{y,\varepsilon\}} c u^\alpha\,du
    =
    -\frac{\lambda c}{\alpha+1}\,
    \min\bigl\{y_\theta(s)^{\alpha+1},\,\varepsilon^{\alpha+1}\bigr\}.
  \]
The inequality follows by writing $\E_{H_\theta}[(p_L^\theta+y-P)^+]=\int_0^y H_\theta([p_L^\theta,p_L^\theta+u])\,du$ and applying the lower-tail condition for $u\le\varepsilon$. Let $K:=\lambda c/(\alpha+1)$. When $y_\theta(s)>\varepsilon$, the preceding inequality implies $y_\theta'(s)\le -K\varepsilon^{\alpha+1}$. Since $p_L^\theta\le b^\theta(s)\le v$, the hitting time of $[0,\varepsilon]$ is bounded above by the common constant $\tau_0:=(v-\varepsilon)^+/(K\varepsilon^{\alpha+1})$. Let $\tau_\theta$ denote this hitting time. For $s\ge\tau_\theta$, comparison with the solution of $z'=-Kz^{\alpha+1}$ starting from $z(\tau_\theta)=\varepsilon$ gives
  \[
    y_\theta(s)
    \le
    \left(\varepsilon^{-\alpha}+\alpha K(s-\tau_\theta)\right)^{-1/\alpha}.
  \]
If $s\ge 2\tau_0+1$, then $s-\tau_\theta\ge (1+s)/2$, so $y_\theta(s)$ is bounded by $(2/(\alpha K))^{1/\alpha}(1+s)^{-1/\alpha}$. If $s<2\tau_0+1$, then $y_\theta(s)\le v\le v(2\tau_0+2)^{1/\alpha}(1+s)^{-1/\alpha}$. Taking
  \[
    C_{\mathrm E}\ge
    \max\left\{
      \left(\frac{2}{\alpha K}\right)^{1/\alpha},
      v(2\tau_0+2)^{1/\alpha}
    \right\}
  \]
yields $y_\theta(s)\le C_{\mathrm E}(1+s)^{-1/\alpha}$ uniformly in~$\theta$ and $s$. Since $b^\theta(T)\ge p_L^\theta$,
  \[
    b^\theta(T{-}\ell)-b^\theta(T)
    \leq b^\theta(T-\ell)-p_L^{\theta}
    =
    y_\theta(T-\ell)
    \leq C_{\mathrm E}(1+T-\ell)^{-1/\alpha}.
  \]
Taking expectations over $\Theta$ completes the proof. \hfill\qedsymbol
\end{proof}

\subsubsection{Posterior Concentration Bound.}

\begin{lemma}[{\sc Brascamp--Lieb Variance Bound}]\label{lem:brascamp_lieb_variance}
Let $\mu$ be a probability measure on an interval $I$ with density proportional to $\exp\{-\varphi(\theta)\}$ on $I$. If $\varphi''(\theta)\ge \kappa>0$ on the relative interior of $I$, then, for every absolutely continuous function $f$,
\[
  \operatorname{Var}_{\mu}(f(\theta))
  \le
  \frac{1}{\kappa}\int_I |f'(\theta)|^2\,\mu(d\theta).
\]
In particular, $\operatorname{Var}_{\mu}(\theta)\le 1/\kappa$.
\end{lemma}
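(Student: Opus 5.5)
The plan is to prove the Brascamp--Lieb variance bound in this one-dimensional setting through the Poincaré-type route based on the Langevin/Stein operator associated with $\mu$. The aim is to prove $\operatorname{Var}_\mu(f)\le \kappa^{-1}\int_I |f'|^2\,d\mu$ directly by solving a Poisson equation.

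Write $\rho(\theta)=Z^{-1}e^{-\varphi(\theta)}$ on $I$. By a standard approximation argument, it suffices to consider $f$ that are $C^1$ with bounded derivative, and first the case where $I$ is bounded and $\varphi$ is $C^2$ up to the boundary.

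The key construction is the following. Given $f$ with $\int f\,d\mu=0$, define
\[
u(\theta):=\frac{1}{\rho(\theta)}\int_{\underline\iota}^{\theta} f(x)\rho(x)\,dx,
\]
where $\underline\iota$ is the left endpoint of $I$. This $u$ vanishes at both endpoints, since $f$ is centered. It solves the first-order Stein equation
\[
u'-\varphi' u=f .
\]
Integration by parts, with boundary terms vanishing because $u\rho=0$ at the endpoints, gives two identities:
\[
\int f^2\,d\mu=\int f(u'-\varphi'u)\,d\mu=-\int f' u\,d\mu,
\]
\[
\int f^2\,d\mu=\int (u'-\varphi'u)^2\,d\mu=\int (u')^2\,d\mu+\int \varphi''u^2\,d\mu .
\]
The second identity comes from expanding the square and integrating the cross term $-2\int u u'\varphi'\rho=\int u^2(\varphi'\rho)'$, using $\rho'=-\varphi'\rho$. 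Hence
\[
\kappa\int u^2\,d\mu\le \int f^2\,d\mu .
\]
Combining this with Cauchy--Schwarz in the first identity gives
\[
\int f^2\,d\mu\le \Big(\int (f')^2\,d\mu\Big)^{1/2}\Big(\int f^2\,d\mu/\kappa\Big)^{1/2},
\]
which yields $\operatorname{Var}_\mu(f)\le \kappa^{-1}\int |f'|^2\,d\mu$. The special case $f(\theta)=\theta$ gives $\operatorname{Var}_\mu(\theta)\le 1/\kappa$.

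The main obstacle is justifying the boundary terms and the regularity, since the paper's posterior may live on an interval with $\varphi$ merely twice differentiable in the interior, possibly unbounded or blowing up at the ends. To handle this, I would first exhaust $I$ by compact subintervals $I_k$ and apply the argument to the conditioned measures $\mu_k$. For these the boundary values of $u\rho$ vanish by construction, and $\varphi''\ge\kappa$ still holds. I would then let $k\to\infty$ using monotone and dominated convergence, which is valid because $\mu$ is a probability measure and $f$ is locally absolutely continuous with $\int |f'|^2\,d\mu<\infty$ (otherwise the claim is trivial). Strong log-concavity gives Gaussian tails, which ensures that $\theta$ and Lipschitz $f$ are square integrable so the limit passes.
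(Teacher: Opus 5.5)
Your argument is correct, but it takes a genuinely different route from the paper. The paper gives no derivation. It invokes the classical Brascamp--Lieb inequality, citing Brascamp--Lieb and the Pr\'ekopa--Leindler/Brunn--Minkowski treatment of Bobkov--Ledoux, and then substitutes $f(\theta)=\theta$.

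You instead give a self-contained one-dimensional proof, the $L^2$ (H\"ormander-type) duality method. You solve the Stein equation $u'-\varphi'u=f$ and use the integrated commutation identity $\int(u'-\varphi'u)^2\,d\mu=\int(u')^2\,d\mu+\int\varphi''u^2\,d\mu$.

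\begin{itemize}
\item The citation buys brevity and the full multidimensional statement. It leaves implicit the regularity issues that actually arise where the paper applies the lemma: a posterior potential that is only twice differentiable on the interior of an effective-support interval, with possibly degenerate endpoints. Your exhaustion by compact subintervals handles these explicitly.
\item Your route also proves more than is claimed. Bounding $-\int f'u\,d\mu\le(\int (f')^2/\varphi''\,d\mu)^{1/2}(\int\varphi''u^2\,d\mu)^{1/2}$ yields the sharp form $\operatorname{Var}_\mu(f)\le\int (f')^2/\varphi''\,d\mu$.
\end{itemize}

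Two simplifications are available.
\begin{itemize}
\item On the compact pieces you do not need $\varphi\in C^2$. You only need $\varphi'$ absolutely continuous there, which holds because $\varphi'$ is monotone and everywhere differentiable. The paper's likelihood is only assumed twice differentiable, so this matters.
\item The limit $k\to\infty$ needs neither Gaussian tails nor a preliminary $C^1$ approximation of $f$. Integration by parts on each compact piece already works for absolutely continuous $f$, since $\rho$ is bounded above and below there. Then write $\operatorname{Var}_{\mu_k}(f)=\tfrac12\iint(f(x)-f(y))^2\,\mu_k(dx)\,\mu_k(dy)$ and apply monotone convergence. This gives the inequality for every locally absolutely continuous $f$, and $f\in L^2(\mu)$ follows as a consequence rather than being assumed.
\end{itemize}
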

\begin{proof}{Proof.}
This is the one-dimensional Brascamp--Lieb variance inequality \citep{brascamp1976extensions}; see also the treatment in \citet{bobkov2000brunn}. The final statement follows from taking $f(\theta)=\theta$. \hfill\qedsymbol
\end{proof}

\begin{lemma}\label{lem:posterior_interval}
Under \Cref{ass:regular_family}, for every $\ell\ge0$ and $\delta\in(0,1)$, the interval $I_{n_\ell}(\delta)$ satisfies \( \Pbb \left[\tilde{\theta}\in I_{n_\ell}(\delta)\right]\ge 1-\delta, \) and there exists $C_I<\infty$ such that
\[
  \mathbb E[\operatorname{diam}(I_{n_\ell}(\delta))]
  \le
  C_I\sqrt{\frac{\log(2/\delta)}{1+\ell}}.
\]
\end{lemma}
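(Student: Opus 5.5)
The plan is to work conditionally on the realized history. The pair $(\tilde\theta, x_{1:n_\ell})$ is drawn from the joint Bayesian model: $\tilde\theta\sim\hat\pi$, then $n_\ell=1+N_\ell$ prices i.i.d.\ from $H_{\tilde\theta}$, with $N_\ell$ independent of $\tilde\theta$ because $\lambda$ is known. Conditional on $(n_\ell,x_{1:n_\ell})$, the law of $\tilde\theta$ is exactly the posterior $\Phi$ applied $n_\ell$ times to $\hat\pi$.

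By Assumption~\ref{ass:regular_family}(iii), this posterior has density $\propto\exp\{-\varphi\}$ on the interval $\Theta_{n}(x_{1:n})$, where
\[
\varphi=-\log\hat\pi-\sum_{i}\log h_\theta(x_i),
\]
so that $\varphi''\ge a_0+a_1 n=:\kappa_n$. The intersection of the intervals $\Theta(x_i)$ with $\Theta$ is again an interval, so the effective support is convex. One point needs care: whether the curvature bound survives at boundary points where the supports differ. Since strong log-concavity is only needed on the relative interior, and the restriction of a strongly log-concave density to a convex set remains strongly log-concave, this should be fine.

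\textbf{Coverage.} I would use the standard fact that a $\kappa$-strongly log-concave measure on an interval satisfies a Gaussian concentration inequality for 1-Lipschitz functions around their mean:
\[
\Pbb\bigl(|\theta-\E\theta|\ge r\bigr)\le 2e^{-\kappa r^2/2}.
\]
This follows from the Bakry--Émery log-Sobolev inequality plus Herbst's argument, or from Caffarelli contraction in one dimension. It is a strengthening of Lemma~\ref{lem:brascamp_lieb_variance}, and invoking it is the main step I expect to need justification beyond the paper's stated tools. If I wanted to stay with the variance bound only, Chebyshev would give the weaker width $d_n\propto\sqrt{1/(\delta\kappa_n)}$, which would not match the stated $d_n(\delta)$. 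So I would cite the sub-Gaussian concentration explicitly.

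Setting $r=d_n(\delta)$ gives conditional miscoverage at most $2e^{-\log(2/\delta)}=\delta$. Integrating over histories, which is exactly Bayesian averaging since $\tilde\theta$ is drawn from the prior, yields $\Pbb[\tilde\theta\in I_{n_\ell}(\delta)]\ge1-\delta$. Intersecting with $\Theta$ does not lose coverage because $\tilde\theta\in\Theta$.

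\textbf{Diameter.} We have $\operatorname{diam} I_{n}(\delta)\le 2d_n(\delta)$, so
\[
\E[\operatorname{diam}]\le 2\sqrt{2\log(2/\delta)}\;\E\bigl[(a_0+a_1(1+N_\ell))^{-1/2}\bigr].
\]
By Jensen's inequality applied to the concave map $x\mapsto x^{-1/2}$ this bound goes the wrong way, so I would bound the Poisson expectation directly. Write $\kappa_*=\min\{a_0+a_1,a_1\}$, so that $a_0+a_1(1+N)\ge \kappa_*(1+N)$. Then use
\[
\E[(1+N)^{-1/2}]\le\sqrt{\E[(1+N)^{-1}]}=\sqrt{\frac{1-e^{-\lambda\ell}}{\lambda\ell}}\le\sqrt{\frac{C}{1+\ell}},
\]
where the first inequality is Cauchy--Schwarz (Jensen in the right direction for $\sqrt{\cdot}$ applied to $(1+N)^{-1}$), and the last step holds with $C=\max\{1,1/\lambda\}\cdot 2$, using $(1-e^{-x})/x\le\min\{1,1/x\}$. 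Combining these gives the claim with
\[
C_I=2\sqrt2\,\sqrt{C/\kappa_*}.
\]
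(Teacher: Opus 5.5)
Your argument is correct, but it takes a different route from the paper in both halves.

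For coverage, the paper does not appeal to log-Sobolev/Herbst or to Caffarelli. It observes that exponential tilting, $\tilde q_{n,t}\propto e^{t\theta}q_n$, changes $\varphi_n$ only by a linear term and so preserves $\varphi_n''\ge\kappa_n$. Hence \Cref{lem:brascamp_lieb_variance}, applied to every tilted measure, gives $\Lambda_n''(t)=\operatorname{Var}_{\tilde q_{n,t}}(\theta)\le 1/\kappa_n$ for the centered log-MGF. Integrating twice and applying Chernoff yields the same one-sided tail $e^{-\kappa_n z^2/2}$ that you cite, with the same constant. So the sub-Gaussian concentration you flag as needing tools beyond the paper follows in a few lines from the Brascamp--Lieb variance inequality it already states. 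Your citation is still legitimate, since Bakry--\'Emery applies to strongly log-concave measures on convex domains; it is just heavier and less self-contained.

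For the diameter, the paper splits on $\{N_\ell\ge\lambda\ell/2\}$ and uses the Poisson lower-tail bound $\Pbb(N_\ell<\lambda\ell/2)\le e^{-\lambda\ell/8}$. You instead combine $\E[(1+N_\ell)^{-1/2}]\le(\E[(1+N_\ell)^{-1}])^{1/2}$ with the exact identity $\E[(1+N_\ell)^{-1}]=(1-e^{-\lambda\ell})/(\lambda\ell)$. Your route is shorter and yields an explicit $C_I$.

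Two cosmetic points. The map $x\mapsto x^{-1/2}$ is convex, not concave, on $(0,\infty)$; that convexity is exactly why Jensen points the wrong way, so your conclusion is right. Also, $\kappa_*=\min\{a_0+a_1,a_1\}$ is simply $a_1$.
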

\begin{proof}{Proof of \Cref{lem:posterior_interval}.}
Fix a realized history with $n$ observed prices $x_1,\ldots,x_n$. Its posterior effective support is
\[
  \Theta_n(x_{1:n})
  =
  \left\{\theta\in\Theta:\hat\pi(\theta)\prod_{i=1}^n h_\theta(x_i)>0\right\}.
\]
The posterior-regularity part of Assumption~\ref{ass:regular_family} implies that this set is an interval and that the posterior density $q_n$, defined by $\pi_n(d\theta\mid x_{1:n})=q_n(\theta)d\theta$, is strongly log-concave on its relative interior:
\[
  -\partial_\theta^2\log q_n(\theta)
  \ge
  a_0+a_1 n.
\]
Let $\kappa_n:=a_0+a_1n$ and write $q_n(\theta)\propto\exp\{-\varphi_n(\theta)\}$, so that $\varphi_n''(\theta)\ge\kappa_n$.

This gives sub-Gaussian posterior tails around the posterior mean. For $t\ge0$, let
\[
  M_n(t):=\int_{\Theta_n(x_{1:n})}\exp\{t u\}q_n(u)\,du
\]
be the posterior moment generating function, conditional on $x_{1:n}$. The compactness of $\Theta$ makes $M_n(t)$ finite. The exponentially tilted posterior density
\[
  \tilde q_{n,t}(\theta)
  :=
  \frac{\exp\{t\theta\}q_n(\theta)}{M_n(t)}
\]
has negative log-density $\varphi_n(\theta)-t\theta$ up to an additive constant, so the same curvature lower bound applies. By \Cref{lem:brascamp_lieb_variance}, $\operatorname{Var}_{\tilde q_{n,t}}(\theta)\le1/\kappa_n$. For the centered log-moment generating function
\[
  \Lambda_n(t):=\log M_n(t)-t\bar\theta_n,
\]
we have $\Lambda_n(0)=\Lambda_n'(0)=0$ and
\[
  \Lambda_n''(t)
  =
  \operatorname{Var}_{\tilde q_{n,t}}(\theta)
  \le
  \frac{1}{\kappa_n}.
\]
Integrating twice gives $\Lambda_n(t)\le t^2/(2\kappa_n)$ for all $t\ge0$. Therefore, for any $z>0$, Chernoff's inequality gives
\[
  \pi_n(\theta-\bar\theta_n>z\mid x_{1:n})
  \le
  \inf_{t>0}\exp\{-tz+\Lambda_n(t)\}
  \le
  \inf_{t>0}\exp\left\{-tz+\frac{t^2}{2\kappa_n}\right\}
  =
  \exp\left(-\frac{\kappa_n z^2}{2}\right).
\]
Applying the same argument to the tilt proportional to $\exp\{-t\theta\}q_n(\theta)$ gives the same lower-tail bound. A union bound then yields
\[
  \pi_n\!\left(
    |\theta-\bar\theta_n|>z
    \,\middle|\,x_{1:n}
  \right)
  \le
  2\exp\!\left(-\frac{\kappa_n z^2}{2}\right).
\]
With $z=d_n(\delta)=\sqrt{2\log(2/\delta)/\kappa_n}$, the right-hand side equals $\delta$, so $\pi_n(I_n(\delta)\mid x_{1:n})\ge1-\delta$ for every realized history and every fixed $n$. Let $\mathcal H_\ell$ be the random history observed during the learning phase. Averaging the conditional coverage bound over $\mathcal H_\ell$ gives
\[
  \Pbb(\tilde{\theta}\in I_{n_\ell}(\delta))
  =
  \E\!\left[
    \Pbb(\tilde{\theta}\in I_{n_\ell}(\delta)\mid \mathcal H_\ell)
  \right]
  \ge
  1-\delta .
\]

It remains to convert this concentration radius into an expected diameter bound. Since $I_n(\delta)$ is the intersection of $\Theta$ with an interval of radius $d_n(\delta)$,
\[
  \operatorname{diam}(I_{n_\ell}(\delta))
  \le
  2d_{n_\ell}(\delta)
  =
  2\sqrt{\frac{2\log(2/\delta)}{a_0+a_1 n_\ell}}.
\]
Thus it suffices to control $\mathbb E[(a_0+a_1 n_\ell)^{-1/2}]$. Since $n_\ell=1+N_\ell$ with $N_\ell\sim\mathrm{Poisson}(\lambda\ell)$, define $A_\ell:=\{N_\ell\ge\lambda\ell/2\}$. Splitting on $A_\ell$,
\[
  \mathbb E[(a_0+a_1 n_\ell)^{-1/2}]
  \le
  \frac{1}{\sqrt{a_0+a_1(1+\lambda\ell/2)}}
  +
  \frac{\Pbb(A_\ell^c)}{\sqrt{a_0+a_1}}.
\]
The Poisson Chernoff bound gives $\Pbb(A_\ell^c)\le\exp(-\lambda\ell/8)$. The first term in the preceding display is at most $c^{-1/2}(1+\ell)^{-1/2}$ with $c:=\min\{a_0+a_1,a_1\lambda/2\}$, and the exponential term is also bounded by a constant multiple of $(1+\ell)^{-1/2}$. Hence, for a constant $C$ depending only on $(a_0,a_1,\lambda)$,
\[
  \mathbb E[(a_0+a_1 n_\ell)^{-1/2}]
  \le
  C(1+\ell)^{-1/2}.
\]
Combining this bound with the radius inequality gives
\[
  \mathbb E[\operatorname{diam}(I_{n_\ell}(\delta))]
  \le
  2\sqrt{2}\,C
  \sqrt{\frac{\log(2/\delta)}{1+\ell}}.
\]
Taking $C_I:=2\sqrt{2}\,C$ yields the desired result. \hfill\qedsymbol
\end{proof}

\subsubsection{Threshold and Acting-Regret Bounds.}

\begin{lemma}[{\sc Lipschitz Continuity of Oracle Thresholds}]
\label{lem:threshold_lipschitz}
Suppose \Cref{ass:regular_family} holds. Then, for all $\theta,\theta'\in\Theta$,
\[
  \sup_{s>0}|b^\theta(s)-b^{\theta'}(s)|
  \le
  W_\infty(H_\theta,H_{\theta'})
  \le
  L|\theta-\theta'|.
\]
\end{lemma}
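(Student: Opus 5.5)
The plan is to work directly with the ODE characterization in \Cref{prop:threshold}\ref{prop:ODE}. For each $\theta$ on the positive-horizon branch, that result gives
\[
(b^\theta)'(s)=-\lambda G_{\theta}(b^\theta(s)),\qquad G_\theta(x):=\E_{H_\theta}[(x-P)^+],
\qquad b^\theta(0^+)=\E_{H_\theta}[\min\{v,P\}].
\]
At $s=0$ both thresholds equal $v$, which is why the statement takes the supremum over $s>0$ only. I will show that shifting $b^\theta$ up by $d:=W_\infty(H_\theta,H_{\theta'})$ gives a supersolution of the $\theta'$-ODE, and then apply the first-crossing comparison already used in the proof of \Cref{prop:stationary_monotonicity}. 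Exchanging the roles of $\theta$ and $\theta'$ gives the two-sided bound. The second inequality, $W_\infty(H_\theta,H_{\theta'})\le L|\theta-\theta'|$, is exactly \Cref{ass:regular_family}(ii).

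\emph{Step 1 (coupling).} On $\mathbb R$, the quantile (comonotone) coupling attains $W_\infty$. So there is a coupling $(P,P')$ with $P\sim H_\theta$, $P'\sim H_{\theta'}$ and $|P-P'|\le d$ almost surely. If one prefers to avoid the attainment claim, use $d+\varepsilon$ and let $\varepsilon\downarrow0$ at the end.

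\emph{Step 2 (initial conditions and drift comparison).} Since $p\mapsto\min\{v,p\}$ is $1$-Lipschitz, $|b^\theta(0^+)-b^{\theta'}(0^+)|\le \E|P-P'|\le d$. Since $P'\le P+d$, we have $(x-P')^+\ge (x-d-P)^+$ pointwise, so $G_{\theta'}(x)\ge G_\theta(x-d)$ for every $x$; symmetrically, $G_\theta(x)\ge G_{\theta'}(x-d)$.

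\emph{Step 3 (supersolution and comparison).} Set $c(s):=b^\theta(s)+d$. Then
\[
c'(s)=-\lambda G_\theta(c(s)-d)\ \ge\ -\lambda G_{\theta'}(c(s))\quad\text{for a.e. } s>0,
\]
and $c(0^+)\ge b^{\theta'}(0^+)$, so $c$ is a supersolution of the $\theta'$-ODE. Suppose $b^{\theta'}$ exceeded $c$ somewhere, and let $s_*:=\inf\{s>0:b^{\theta'}(s)>c(s)\}$. Then the two functions coincide at $s_*$; if $s_*=0$, use the right limits. Just to the right of $s_*$ we have $b^{\theta'}\ge c$, so
\[
(b^{\theta'})'-c'\le -\lambda G_{\theta'}(b^{\theta'})+\lambda G_{\theta'}(c)\le 0,
\]
because $G_{\theta'}$ is non-decreasing. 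This contradicts the definition of $s_*$. To make the contradiction airtight when the inequality is only weak, I will use that $G_{\theta'}$ is $1$-Lipschitz: a Grönwall estimate on $(b^{\theta'}-c)^+$ shows this quantity stays identically zero. Hence $b^{\theta'}\le b^\theta+d$ on $(0,T]$, and the symmetric argument gives $b^\theta\le b^{\theta'}+d$.

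\emph{Main obstacle.} The delicate point is the comparison step at $s=0$, where the thresholds are only defined through right limits and the ODE holds only almost everywhere. I expect to handle this by using that the $b^\theta$ are absolutely continuous on $(0,T]$ (indeed convex and monotone by \Cref{prop:threshold}\ref{prop:mono_convex}) with finite right limits. I will run the Grönwall argument on $[\epsilon,T]$, where the initial gap is at most $d+o(1)$, and then let $\epsilon\downarrow0$.
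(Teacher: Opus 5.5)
Your proposal is correct and matches the paper's proof essentially step for step. A (near-)optimal $W_\infty$ coupling gives both the initial-condition bound and the drift comparison $G_{\theta'}(x)\ge G_\theta(x-d)$; the shifted threshold $b^\theta+d$ is then a supersolution of the $\theta'$-ODE, and a scalar comparison argument yields the one-sided bound, with the other side obtained by swapping roles. The differences are only cosmetic: the paper shifts the other threshold, and it closes the comparison by noting that $z^+$ cannot increase where $z'\le 0$ on $\{z>0\}$, whereas you use Gr\"onwall.
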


\begin{proof}{Proof.}
Fix $\theta,\theta'\in\Theta$, and write $H:=H_\theta$ and $G:=H_{\theta'}$. Define
\[
  F_H(x):=\lambda\,\E_{P\sim H}[(x-P)^+],
  \qquad
  F_G(x):=\lambda\,\E_{Q\sim G}[(x-Q)^+].
\]
By \Cref{prop:threshold}, the corresponding oracle thresholds satisfy $b_H'(s)=-F_H(b_H(s))$ and $b_G'(s)=-F_G(b_G(s))$ for a.e. $s>0$, with right-limit initial conditions $b_H(0^+)=\E_{P\sim H}[\min\{v,P\}]$ and $b_G(0^+)=\E_{Q\sim G}[\min\{v,Q\}]$.

Let $d:=W_\infty(H,G)$. For any $\varepsilon>0$, choose a coupling $(P,Q)$ with marginals $H$ and $G$ such that $|P-Q|\le d+\varepsilon$ almost surely. Then $Q\ge P-(d+\varepsilon)$, and hence, for every $x$, $ (x-Q)^+ \le (x+d+\varepsilon-P)^+ $. Taking expectations gives $ F_G(x) \le F_H(x+d+\varepsilon) $.

The same coupling also gives
\[
  b_H(0^+)-b_G(0^+)
  \le
  \E[|\min\{v,P\}-\min\{v,Q\}|]
  \le
  d+\varepsilon,
\]
because $p\mapsto\min\{v,p\}$ is 1-Lipschitz.

Now define $y_\varepsilon(s):=b_G(s)+d+\varepsilon$. The previous inequality implies
\[
  y_\varepsilon'(s)
  =
  -F_G(b_G(s))
  \ge
  -F_H(b_G(s)+d+\varepsilon)
  =
  -F_H(y_\varepsilon(s))
\]
for a.e. $s>0$, while $y_\varepsilon(0^+)\ge b_H(0^+)$. The function $F_H$ is non-decreasing in its argument, so $-F_H$ is non-increasing. Therefore the scalar comparison argument applies: if $z(s):=b_H(s)-y_\varepsilon(s)$, then on the set $\{z(s)>0\}$,
\[
  z'(s)
  \le
  -F_H(b_H(s))+F_H(y_\varepsilon(s))
  \le
  0.
\]
Since $z(0^+)\le0$, the positive part $z^+(s)$ cannot increase from zero, and thus $b_H(s)\le y_\varepsilon(s)$ for all $s>0$. Hence $b_H(s)\le b_G(s)+d+\varepsilon$. Letting $\varepsilon\downarrow0$ gives $b_H(s)\le b_G(s)+d$ for all $s>0$.

Repeating the same argument with $H$ and $G$ interchanged gives $b_G(s)\le b_H(s)+d$ for all $s>0$. Therefore
\[
  \sup_{s>0}|b_H(s)-b_G(s)|
  \le
  W_\infty(H,G).
\]
Assumption~\ref{ass:regular_family}(ii) then yields $\sup_{s>0}|b^\theta(s)-b^{\theta'}(s)|\le L|\theta-\theta'|$.

\end{proof}

\begin{lemma}[{\sc Acting Regret}]
\label{lem:acting_regret_confidence}
Suppose \Cref{ass:regular_family} holds. There exists $C_R<\infty$ such that, for every learning length $\ell\in[0,T]$ and confidence level $\delta\in(0,1)$,
\[
  \mathcal R(\ell)
  \le
  C_R\sqrt{\frac{\log(2/\delta)}{1+\ell}}
  + v\delta.
\]
\end{lemma}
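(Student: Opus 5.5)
The plan is to condition on the latent parameter and the learning-phase history, and to split on whether the confidence interval covers the truth. On the coverage event, the envelope threshold is above the oracle threshold, but only by a Lipschitz-controlled amount. I will show that stopping weakly earlier than the oracle with such a threshold costs at most that amount, and then average.

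\emph{Step 1 (conditioning).} Let $\mathcal H_\ell$ denote the learning-phase history, and condition on $(\tilde\theta,\mathcal H_\ell)=(\theta,h)$. By memorylessness of the Poisson process and the i.i.d.\ structure of draws given $\theta$, the acting phase is the stationary $\theta$-problem of \Cref{sec:stationary}, started from $(T-\ell,P_\ell)$. The envelope $\bar b:=\bar b_{I_{n_\ell}(\delta)}$ is a deterministic function of $h$. Hence the conditional acting loss equals
\[
\Delta(\theta,h):=V^\theta(T-\ell,P_\ell)-J^\theta_{T-\ell}(P_\ell,\tau_{\bar b}).
\]
Since $V^\theta\le v$ and payoffs are nonnegative, always $0\le\Delta\le v$.

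\emph{Step 2 (good event).} Let $G:=\{\theta\in I_{n_\ell}(\delta)\}$; this event is determined by $(\theta,h)$. On $G$, the definition of $\bar b$ gives $b^\theta(s)\le\bar b(s)$. By \Cref{lem:threshold_lipschitz}, for every $s>0$,
\[
\bar b(s)\le b^\theta(s)+\eta,\qquad \eta:=L\,\operatorname{diam}(I_{n_\ell}(\delta)).
\]
Both thresholds lie in $[0,v]$, and at $s=0$ both rules use the terminal valuation rule.

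\emph{Step 3 (key comparison, main obstacle).} I claim that $\Delta\le\eta$ on $G$. Let $\tau^\star$ be the oracle rule and $\tau:=\tau_{\bar b}$. Because $\bar b\ge b^\theta$ pointwise and both rules see the same path, $\tau\le\tau^\star$.
\begin{itemize}
\item On $\{\tau=\tau^\star\}$ the payoffs coincide.
\item On $\{\tau<\tau^\star\}$, the envelope rule stops at some state $(s,p)$ with $s>0$ and $b^\theta(s)<p\le\bar b(s)\le b^\theta(s)+\eta$, and $p\le v$. By the strong Markov property, the oracle's conditional expected payoff from this state is $V^\theta(s,p)$.
\item Since $V^\theta(s,\cdot)$ is non-increasing (\Cref{lemma:mono_V}) and $b^\theta(s)$ is in the stopping set (\Cref{prop:threshold}),
\[
V^\theta(s,p)\le V^\theta\bigl(s,b^\theta(s)\bigr)=v-b^\theta(s)\le v-p+\eta .
\]
\end{itemize}
So at the envelope rule's stopping time, the oracle's continuation value exceeds the immediate payoff $v-p$ by at most $\eta$. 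Before that time the two rules act identically. Writing $J^\theta(\tau^\star)-J^\theta(\tau)=\mathbb E\bigl[(V^\theta(T-\tau,P_\tau)-(v-P_\tau))\mathbf 1_{\{\tau<\tau^\star\}}\bigr]$ then gives $\Delta\le\eta$.

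I expect the main care to go into this step. Two points need checking: the strong Markov property at the stopping time $\tau$, and the optional-stopping identity above. The identity holds because the oracle value process is a martingale up to $\tau^\star$, and boundedness makes this valid. If needed, I would first discretize time and then pass to the limit.

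\emph{Step 4 (averaging).} Combining the two cases,
\[
\mathcal R(\ell)=\mathbb E[\Delta]\le L\,\mathbb E\bigl[\operatorname{diam}(I_{n_\ell}(\delta))\bigr]+v\,\Pbb(G^c).
\]
By \Cref{lem:posterior_interval}, $\Pbb(G^c)\le\delta$ and $\mathbb E[\operatorname{diam}(I_{n_\ell}(\delta))]\le C_I\sqrt{\log(2/\delta)/(1+\ell)}$. Setting $C_R:=LC_I$ yields the claimed bound
\[
\mathcal R(\ell)\le C_R\sqrt{\frac{\log(2/\delta)}{1+\ell}}+v\delta .
\]
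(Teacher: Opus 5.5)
Your proposal is correct and follows the paper's proof essentially step for step. On the coverage event the envelope rule stops no later than the $\theta$-oracle; at its stopping state, monotonicity of $V^\theta(U,\cdot)$ and value matching at $b^\theta(U)$ bound the local loss by $\bar b_I(U)-b^\theta(U)\le L\operatorname{diam}(I)$ via \Cref{lem:threshold_lipschitz}; the complement contributes at most $v\delta$; and \Cref{lem:posterior_interval} finishes with $C_R=LC_I$. Your explicit strong-Markov decomposition of $V^\theta-J^\theta$ over $\{\tau<\tau^\star\}$ simply makes precise the step the paper summarizes as ``taking conditional expectations over the stopping state.''
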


\begin{proof}{Proof of \Cref{lem:acting_regret_confidence}.}
Fix the realized confidence interval $I:=I_{n_\ell}(\delta)$ and let $\tau_I$ be the stopping rule generated by the upper-envelope threshold $\bar b_I$. On the event $\{\tilde{\theta}\in I\}$, condition on a realization $\tilde{\theta}=\theta$. Then $\theta\in I$, so $\bar b_I(s)\ge b^\theta(s)$ for every $s\in[0,T-\ell]$, hence $\tau_I\le \tau_\theta^*$ pathwise. Therefore, up to time $\tau_I$, both the confidence-envelope policy and the $\theta$-oracle take the same action, namely to wait.

Consider an acting-phase initial state $(s,p)$ with $s>0$; the case $s=0$ is trivial. Let $(U,P)$ denote the time-to-go horizon and current price at the random state where $\tau_I$ stops. The terminal condition gives \(V^\theta(0,P)=(v-P)^+\). Thus, if $U=0$, both the confidence-envelope policy and the oracle face the same terminal decision, and the local loss from stopping at $(U,P)$ is zero. It remains to consider the case $U>0$. Since the confidence-envelope policy stops only when $P\le \bar b_I(U)$ and $\bar b_I(U)\le v$, its stopping payoff is $v-P$.

If $P\le b^\theta(U)$, then the oracle threshold policy also stops at $(U,P)$, so $V^\theta(U,P)=v-P$ and the local loss is zero. If $P>b^\theta(U)$, threshold value matching gives
\[
  V^\theta(U,b^\theta(U))=v-b^\theta(U).
\]
Because $p\mapsto V^\theta(U,p)$ is non-increasing,
\[
  V^\theta(U,P)
  \le
  V^\theta(U,b^\theta(U))
  =
  v-b^\theta(U).
\]
Therefore,
\[
  V^\theta(U,P)-(v-P)
  \le
  P-b^\theta(U)
  \le
  \bar b_I(U)-b^\theta(U)
  \le
  \sup_{u\in(0,s]}(\bar b_I(u)-b^\theta(u)).
\]
Taking conditional expectations over the stopping state gives, on $\{\theta\in I\}$,
\[
  V^\theta(s,p)-J_s^\theta(p,\tau_I)
  \le
  \sup_{u\in(0,s]}(\bar b_I(u)-b^\theta(u)).
\]
By \Cref{lem:threshold_lipschitz}, for every $\eta\in I$ and every $u>0$,
\[
  b^\eta(u)-b^\theta(u)
  \le
  |b^\eta(u)-b^\theta(u)|
  \le
  L|\eta-\theta|.
\]
On $\{\theta\in I\}$, it follows that
\[
  \sup_{u\in(0,s]}(\bar b_I(u)-b^\theta(u))
  \le
  L\,\operatorname{diam}(I).
\]
On the complement event $\{\tilde{\theta}\notin I\}$, the regret is trivially bounded by $v$. Taking expectations gives
\[
  \mathcal R(\ell)
  \le
  L\,\mathbb E[\operatorname{diam}(I_{n_\ell}(\delta))]
  + v\,\mathbb P(\tilde{\theta}\notin I_{n_\ell}(\delta)).
\]
By \Cref{lem:posterior_interval}, the first term is bounded by $LC_I\sqrt{\log(2/\delta)/(1+\ell)}$ and the second probability is at most $\delta$. Taking $C_R:=LC_I$ proves the claim. \hfill\qedsymbol
\end{proof}

\subsubsection{Proof of the Long-Window Bound.}

\begin{proof}{Proof of \Cref{prop:info_gap_confidence}.}
Set $\ell=T/2$ and $\delta=1/(eT)$. By \eqref{eq:gap_two_terms} and \Cref{lem:early_purchase_rate,lem:acting_regret_confidence}, for $T\ge1$,
\[
  \mathcal G(T,\hat\pi)
  \le
  C_{\mathrm E}(1+T/2)^{-1/\alpha}
  +
  C_R\sqrt{\frac{\log(2eT)}{1+T/2}}
  +
  \frac{v}{eT}.
\]
Absorbing constants yields the stated bound. \hfill\qedsymbol
\end{proof}

\subsubsection{Example: Bounded Exponential-Tilt Price-Adjustment Distribution.}
\label{sec:example_exponential_tilt}

We give a simple bounded-support family satisfying both the monotone-likelihood-ratio condition and the regularity conditions used in the long-window information-gap bound. Fix $\bar p>0$ and let
\[
  h_\theta(p)
  =
  \frac{\theta e^{\theta p}}{e^{\theta\bar p}-1},
  \qquad
  0\le p\le \bar p,
  \qquad
  \theta\in[\underline\theta,\bar\theta]\subset(0,\infty).
\]
Higher values of $\theta$ tilt probability mass toward higher prices. Indeed, for $\theta_2>\theta_1$, the likelihood ratio $h_{\theta_2}(p)/h_{\theta_1}(p)$ is proportional to $e^{(\theta_2-\theta_1)p}$ and is increasing in $p$, so \Cref{asm:mlrp} holds.

The family also satisfies \Cref{ass:regular_family}. First, with $p_L^\theta=0$,
\[
  H_\theta([0,x])
  =
  \frac{e^{\theta x}-1}{e^{\theta\bar p}-1}
  \ge
  \frac{\underline\theta}{e^{\bar\theta\bar p}-1}\,x,
  \qquad
  0\le x\le \bar p,
\]
so the lower-tail condition holds with $\alpha=1$. Second, the quantile function is
\[
  Q_\theta(u)
  =
  \frac{1}{\theta}
  \log\left(1+u(e^{\theta\bar p}-1)\right),
  \qquad u\in[0,1].
\]
Because $(\theta,u)\mapsto Q_\theta(u)$ is continuously differentiable on the compact set $[\underline\theta,\bar\theta]\times[0,1]$, there exists $L<\infty$ such that
\[
  W_\infty(H_\theta,H_{\theta'})
  =
  \sup_{u\in[0,1]}|Q_\theta(u)-Q_{\theta'}(u)|
  \le
  L|\theta-\theta'|.
\]
Third, $\Theta(x)=[\underline\theta,\bar\theta]$ for $x\in[0,\bar p]$, and
\[
  -\partial_\theta^2\log h_\theta(x)
  =
  \frac{1}{\theta^2}
  -
  \frac{\bar p^2 e^{-\theta\bar p}}{(1-e^{-\theta\bar p})^2}
  >
  0.
\]
The last expression is continuous and strictly positive on the compact $\Theta$, so it is bounded below by some $a_1>0$. Finally, a truncated Gamma prior on $\Theta$ with density proportional to $\theta^{\alpha_0-1}e^{-\beta_0\theta}$ and $\alpha_0>1$ satisfies
\[
  -\frac{d^2}{d\theta^2}\log\hat\pi(\theta)
  =
  \frac{\alpha_0-1}{\theta^2}
  \ge
  \frac{\alpha_0-1}{\bar\theta^2}
  =:a_0.
\]
Thus \Cref{prop:info_gap_confidence} applies with $\alpha=1$, giving $\mathcal G(T,\hat\pi)=O(T^{-1}+\sqrt{\log T/T})=O(\sqrt{\log T/T})$.

\subsection{Adversarial upper bound for \Cref{prop:optimal_cr}}
We prove \Cref{lem:flash_sale_upper_main} using the sale-termination rule in \eqref{eq:flash_sale_end}.

\begin{proof}{Proof of \Cref{lem:flash_sale_upper_main}.}
We first record the reduction behind the upper bound. By Yao's minimax principle \citep{yao1977probabilistic}, it is enough to construct a flash-sale process under which every deterministic online policy satisfies \(\mathbb E[\ALG]/\mathbb E[\OPT]\le\rho\). Equivalently, if a randomized online algorithm guaranteed competitive ratio $\rho$ on every price path, then under any distribution over price paths its expected payoff would be at least $\rho\,\mathbb E[\OPT]$; but a randomized policy is a mixture of deterministic policies, so an upper bound that holds for every deterministic policy under the constructed distribution also holds for every randomized policy.

If \(p_0=p_L\), the initial price already attains the lowest feasible price. Hence, buying immediately matches the offline oracle on every feasible path, and the competitive ratio is \(1\). Thus, the upper bound is immediate in this boundary case. In the remainder of the proof, assume \(p_0>p_L\).

The first step is to convert the random end-sale time into the markdown reached by the price path before the sale expires. Fix one such distribution and write
\[
q:=\hat p(\tilde t),
\qquad
M:=q-p_L.
\]
Define the realized markdown depth from the reference price \(q\) by
\[
X:=q-\hat p(\sigma)\in[0,M].
\]
For any $x\in[0,M]$, let
\[
t(x):=\tilde t+\frac{t_2-t_1}{p_0-p_L}\,x,
\]
so that $\hat p(t(x))=q-x$. By construction,
\[
\mathbb P(X\ge x)
=
\mathbb P(\sigma\ge t(x))
=
\frac{A}{A+x},
\qquad x\in[0,M].
\]
For \(x<M\), this follows directly from the distribution of \(\sigma\); at \(x=M\), it uses the atom at \(t_2\). Thus the sale-termination rule \eqref{eq:flash_sale_end} is equivalently a distribution over the realized markdown depth \(X\) with survival function
\[
\mathbb P(X\ge x)=\frac{A}{A+x},
\qquad x\in[0,M].
\]
Therefore,
\[
\mathbb E[X]
=
\int_0^M \mathbb P(X\ge x)\,dx
=
\int_0^M \frac{A}{A+x}\,dx
=
A\log\!\left(\frac{A+M}{A}\right).
\]

It remains to bound the payoff of an arbitrary deterministic online policy. Since $\sigma\ge\tilde t$ almost surely and $\hat p$ is strictly decreasing, buying before $\tilde t$ is weakly dominated by waiting until $\tilde t$. After $\tilde t$, while the sale is still active, the policy observes no new uncertainty beyond the deterministic decline of \(\hat p\): the only random event is the sale ending. Therefore, a deterministic policy is characterized, during the active sale, by the first sale price at which it would buy. We write this target price as \(q-y\), or equivalently, use the target markdown \(y\in[0,M]\). If a policy would never buy during the sale, we may upper-bound its payoff by assigning it the target \(y=M\), which only makes the policy weakly better for the upper-bound calculation. When \(v>p_U\), if the target is missed, we also upper bound the policy's continuation payoff by allowing it to buy at the reversion price \(p_U\).

We now consider the four parameter choices separately.

\medskip \noindent \textbf{Case (i): $p_L<v\le p_0$.} Fix $\varepsilon\in(0,v-p_L)$. Here $A=\varepsilon$ and $q=v-\varepsilon$, so $M=v-\varepsilon-p_L$. If the algorithm chooses target markdown $y\in[0,M]$, then it buys at price $q-y$ on the event $\{X\ge y\}$ and earns surplus
\[
v-(q-y)=\varepsilon+y.
\]
If $X<y$, then the sale ends before the target is reached, and buying later at $p_U$ is not beneficial because $v\le p_0\le p_U$; hence, the payoff is $0$. Thus
\[
\mathbb E[\ALG]
=
\frac{\varepsilon}{\varepsilon+y}(\varepsilon+y)
=
\varepsilon.
\]
Since this is independent of $y$, every deterministic online algorithm has expected payoff at most $\varepsilon$.

The offline oracle observes $\sigma$ and buys just before the sale ends, hence its payoff is
\[
v-\hat p(\sigma^-)=v-\hat p(\sigma)=\varepsilon+X,
\]
where the middle equality follows from the continuity of \(\hat p\). Therefore
\[
\mathbb E[\OPT]
=
\varepsilon+\mathbb E[X]
=
\varepsilon+\varepsilon
\log\!\left(\frac{\varepsilon+v-\varepsilon-p_L}{\varepsilon}\right)
=
\varepsilon\left(1+\log\!\left(\frac{v-p_L}{\varepsilon}\right)\right).
\]
Hence
\[
\frac{\mathbb E[\ALG]}{\mathbb E[\OPT]}
\le
\frac{1}{1+\log\!\left(\frac{v-p_L}{\varepsilon}\right)}.
\]
Letting $\varepsilon\downarrow0$ gives a zero upper bound for this valuation regime.

\medskip \noindent \textbf{Case (ii): $p_0<v\le p_U$.} Now $A=v-p_0$ and $q=p_0$, so $M=p_0-p_L$. If the algorithm chooses markdown $y\in[0,M]$, then on $\{X\ge y\}$ it buys at price $p_0-y$ and earns
\[
v-(p_0-y)=A+y.
\]
If $X<y$, then the sale ends too early and buying later at $p_U$ yields a non-positive surplus, since $v\le p_U$. Thus
\[
\mathbb E[\ALG]
=
\frac{A}{A+y}(A+y)
=
A
=
v-p_0.
\]

The oracle again buys just before termination, so its payoff is
\[
v-\hat p(\sigma^-)=v-\hat p(\sigma)=A+X.
\]
Hence
\[
\mathbb E[\OPT]
=
A+A\log\!\left(\frac{A+p_0-p_L}{A}\right)
=
(v-p_0)\left(1+\log\!\left(\frac{v-p_L}{v-p_0}\right)\right).
\]
Therefore
\[
\frac{\mathbb E[\ALG]}{\mathbb E[\OPT]}
\le
\frac{1}{1+\log\!\left(\frac{v-p_L}{v-p_0}\right)}.
\]

\medskip \noindent \textbf{Case (iii): $v>p_U$ and $x_0<p_U-p_0$.} Here $A=p_U-p_0$ and $q=p_0$, so $M=p_0-p_L$. If the algorithm chooses markdown $y\in[0,M]$, then on $\{X\ge y\}$ it buys during the sale at price $p_0-y$ and earns
\[
v-(p_0-y)=v-p_U+A+y.
\]
If $X<y$, the sale ends first. For the purpose of an upper bound, we allow the algorithm to buy at price $p_U$ and obtain a surplus of $v-p_U$. Thus
\[
\mathbb E[\ALG]
\le 
(v-p_U)+\frac{A}{A+y}(A+y)
=
v-p_U+A
=
v-p_0.
\]

The oracle buys just before termination, so its payoff is
\[
v-\hat p(\sigma^-)=v-\hat p(\sigma)=v-p_U+A+X=v-p_0+X.
\]
Therefore
\[
\mathbb E[\OPT]
=
v-p_0+A\log\!\left(\frac{A+p_0-p_L}{A}\right)
=
v-p_0+(p_U-p_0)\log\!\left(\frac{p_U-p_L}{p_U-p_0}\right).
\]
Hence
\[
\frac{\mathbb E[\ALG]}{\mathbb E[\OPT]}
\le
\frac{v-p_0}
{\,v-p_0+(p_U-p_0)\log\!\left(\frac{p_U-p_L}{p_U-p_0}\right)}.
\]

\medskip \noindent \textbf{Case (iv): $v>p_U$ and $x_0\ge p_U-p_0$.} The equation defining $x_0$ has a unique positive solution because $x/(v-p_U)+\log x$ is strictly increasing on $(0,\infty)$. Moreover, $x_0<p_U-p_L$, since the left-hand side evaluated at $p_U-p_L$ exceeds $\log(p_U-p_L)-1$. Thus $q=p_U-x_0$ lies above $p_L$; when $x_0\ge p_U-p_0$, it also satisfies $q\le p_0$, so the corresponding $\tilde t$ is well defined on the flash-sale path. Now $A=x_0$ and $q=p_U-x_0$, so
\[
M=q-p_L=p_U-p_L-x_0.
\]
If the algorithm chooses markdown $y\in[0,M]$, then on $\{X\ge y\}$ it buys during the sale at price $q-y$ and earns
\[
v-(q-y)=v-p_U+x_0+y.
\]
If $X<y$, the sale ends first. For the purpose of an upper bound, we allow the algorithm to buy at $p_U$, obtaining surplus $v-p_U$. Hence
\[
\mathbb E[\ALG]
\le 
(v-p_U)+\frac{x_0}{x_0+y}(x_0+y)
=
v-p_U+x_0.
\]

The oracle buys just before termination, and its payoff is \(v-\hat p(\sigma^-)=v-\hat p(\sigma)=v-p_U+x_0+X\). Thus
\[
\mathbb E[\OPT]
=
v-p_U+x_0+\mathbb E[X]
=
v-p_U+x_0+x_0\log\!\left(\frac{x_0+M}{x_0}\right).
\]
Since $x_0+M=p_U-p_L$, this becomes
\[
\mathbb E[\OPT]
=
v-p_U+x_0+x_0\log\!\left(\frac{p_U-p_L}{x_0}\right).
\]
Using the defining equation of $x_0$,
\[
\log\!\left(\frac{p_U-p_L}{x_0}\right)
=
1+\frac{x_0}{v-p_U},
\]
and therefore
\[
\mathbb E[\OPT]
=
v-p_U+x_0+x_0\left(1+\frac{x_0}{v-p_U}\right)
=
\frac{(v-p_U+x_0)^2}{v-p_U}.
\]
It follows that
\[
\frac{\mathbb E[\ALG]}{\mathbb E[\OPT]}
\le
\frac{v-p_U+x_0}{(v-p_U+x_0)^2/(v-p_U)}
=
\frac{v-p_U}{v-p_U+x_0}.
\]

Combining the four cases proves \(\mathrm{CR}(\mathcal A)\le\rho(v)\) for every online policy \(\mathcal A\).\hfill\qedsymbol
\end{proof}
\subsection{Proof of \Cref{lem:matching_threshold_policy}}\label{sec:proof:lem:matching_threshold_policy}
\begin{proof}{Proof of \Cref{lem:matching_threshold_policy}.}
The case \(v\le p_0\) is immediate because the target ratio is \(0\). Consider \(v>p_0\), and fix an arbitrary feasible price path \(p(\cdot)\). Let
\[
p^\star:=\inf_{t\in[0,T]}p(t).
\]
Since \(p(0)=p_0\), we have \(p^\star\le p_0\). Because \(v>p_0\), the offline value is \(\OPT(p)=v-p^\star\). If the threshold \(b\) is reached before the deadline, the purchase price is at most \(b\). If the threshold is not reached before the deadline and \(v>p_U\), the terminal valuation rule purchases at \(T\) and yields a payoff of at least \(v-p_U\). Thus, it suffices to lower-bound the randomized policy's payoff as a function of \(p^\star\). Boundary events where \(b=p^\star\) do not affect the integrals below; the only atom is at \(p_0\), and that threshold is reached at the initial price.

\medskip \noindent \textbf{Case (i): $p_0<v\le p_U$.} The CDF is valid because
\[
F_b(p_0-)
=
\rho\log\!\left(\frac{v-p_L}{v-p_0}\right)
=
1-\rho
<1.
\]
Hence there is an atom of size $\rho$ at $p_0$, and on $[p_L,p_0)$ the distribution has density
\[
f_b(z)=F_b'(z)=\frac{\rho}{v-z}.
\]
Since \(v\le p_U\), any payoff from the terminal valuation rule can only improve the lower bound. When \(b\ge p^\star\), the policy obtains realized surplus at least \(v-b\): either the threshold is reached before the deadline, or the terminal valuation rule buys at \(T\) if the relevant crossing occurs only at the endpoint. Therefore
\[
\mathbb E[\ALG\mid p^\star]
\ge
\rho(v-p_0)+\int_{p^\star}^{p_0}(v-z)\frac{\rho}{v-z}\,dz
=
\rho(v-p_0)+\rho(p_0-p^\star)
=
\rho(v-p^\star).
\]
Thus
\[
\frac{\mathbb E[\ALG\mid p^\star]}{\OPT(p)}\ge\rho
\]
for every feasible price path.

\medskip \noindent \textbf{Case (ii): $v>p_U$ and $x_0<p_U-p_0$.} Let
\[
D:=v-p_U,
\qquad
C:=p_U-p_0,
\qquad
L:=\log\!\left(\frac{p_U-p_L}{p_U-p_0}\right).
\]
Then
\[
\rho=\frac{D+C}{D+C+CL}.
\]
We first check that the CDF is valid. Since $x_0<C$ and the function
\[
h(x):=\frac{x}{v-p_U}+\log x
\]
is strictly increasing on $(0,\infty)$, the defining equation
\[
h(x_0)=\log(p_U-p_L)-1
\]
implies
\[
h(C)>h(x_0),
\]
that is,
\[
\frac{C}{v-p_U}+\log C>\log(p_U-p_L)-1.
\]
Rearranging gives
\[
L<1+\frac{C}{D}.
\]
Hence
\[
F_b(p_0-)=\rho L
=
\frac{(D+C)L}{D+C+CL}
<1.
\]
So the CDF indeed has an atom of size $1-\rho L$ at $p_0$, and density
\[
f_b(z)=\frac{\rho}{p_U-z}
\qquad\text{on }[p_L,p_0).
\]

If the threshold is reached before the deadline, the agent pays at most \(b\) and gets a payoff of at least \(v-b=D+(p_U-b)\). If the threshold is not reached before the deadline, the terminal valuation rule buys at \(T\) and yields a payoff of at least \(D\). Therefore
\[
\mathbb E[\ALG\mid p^\star]
\ge
D+\mathbb E\!\left[(p_U-b)\mathbf 1\{b\ge p^\star\}\right].
\]
Using the atom at $p_0$ and the density on $[p_L,p_0)$,
\[
\mathbb E[\ALG\mid p^\star]
\ge
D+(1-\rho L)(p_U-p_0)+\int_{p^\star}^{p_0}(p_U-z)\frac{\rho}{p_U-z}\,dz.
\]
That is,
\[
\mathbb E[\ALG\mid p^\star]
\ge
D+(1-\rho L)C+\rho(p_0-p^\star).
\]
Since
\[
1-\rho L
=
1-\frac{D+C}{D+C+CL}L,
\]
we obtain
\[
D+(1-\rho L)C
=
D+C-\rho CL
=
\rho(D+C)
=
\rho(v-p_0).
\]
Therefore
\[
\mathbb E[\ALG\mid p^\star]
\ge
\rho(v-p_0)+\rho(p_0-p^\star)
=
\rho(v-p^\star).
\]
Thus
\[
\frac{\mathbb E[\ALG\mid p^\star]}{\OPT(p)}\ge\rho
\]
for every feasible price path.

\medskip \noindent \textbf{Case (iii): $v>p_U$ and $x_0\ge p_U-p_0$.} Let
\[
D:=v-p_U,
\qquad
q:=p_U-x_0,
\qquad
\rho:=\frac{D}{D+x_0}.
\]
The cutoff \(q\) is admissible. The case condition gives \(q\le p_0\). Moreover, \(x\mapsto x/D+\log x\) is strictly increasing, and at \(x=p_U-p_L\) its value is larger than \(\log(p_U-p_L)-1\). The defining equation for \(x_0\) therefore implies \(x_0<p_U-p_L\), so \(q>p_L\). By the defining equation of $x_0$,
\[
\log\!\left(\frac{p_U-p_L}{x_0}\right)=1+\frac{x_0}{D},
\]
and hence
\[
\rho\log\!\left(\frac{p_U-p_L}{x_0}\right)
=
\frac{D}{D+x_0}\left(1+\frac{x_0}{D}\right)
=
1.
\]
Therefore, the CDF is valid and continuous at $q$, with density
\[
f_b(z)=\frac{\rho}{p_U-z}
\qquad\text{on }[p_L,q).
\]

If \(p^\star\le q\), then the event \(\{b\ge p^\star\}\) is the same as \(b\in[p^\star,q]\), up to a boundary event. The terminal valuation rule contributes at least \(D\) whenever the threshold is not reached before the deadline, and a reached threshold contributes the additional term \(p_U-b\). Hence
\[
\mathbb E[\ALG\mid p^\star]
\ge
D+\int_{p^\star}^{q}(p_U-z)\frac{\rho}{p_U-z}\,dz
=
D+\rho(q-p^\star).
\]
Since $q=p_U-x_0$ and $D=\rho(D+x_0)$,
\[
\mathbb E[\ALG\mid p^\star]
\ge
D+\rho(p_U-x_0-p^\star)
=
\rho(D+p_U-p^\star)
=
\rho(v-p^\star).
\]

If \(p^\star>q\), then no threshold in the support is reached before the deadline, so the terminal valuation rule yields payoff at least \(D\):
\[
\mathbb E[\ALG\mid p^\star]\ge D.
\]
Because \(p^\star>q=p_U-x_0\), we have \(p_U-p^\star<x_0\), and thus
\[
v-p^\star=D+(p_U-p^\star)<D+x_0.
\]
Therefore
\[
\mathbb E[\ALG\mid p^\star]
\ge
D
>
\frac{D}{D+x_0}(v-p^\star)
=
\rho(v-p^\star).
\]

Combining the two subcases yields
\[
\mathbb E[\ALG\mid p^\star]\ge \rho(v-p^\star)
\]
for every feasible price path. Since \(\OPT(p)=v-p^\star\), the policy achieves competitive ratio at least
\[
\rho=\frac{v-p_U}{v-p_U+x_0}.
\] 
This completes the proof. \hfill\qedsymbol
\end{proof}
\subsection{Proofs for Minimax Regret}
\begin{proof}{Proof of \Cref{lem:regret_lower_bound}.}
It suffices, by Yao's principle, to construct a distribution over feasible price paths under which every deterministic online policy has expected regret at least \(\bar R(v)\). Throughout the proof, \(a\), \(\gamma\), and \(\Delta\) are the quantities defined immediately before \(\bar R(v)\); let \(\sigma\) be defined as in \Cref{lem:regret_lower_bound}. By construction, \(p_L\le \gamma\le p_0\), \(\Delta>0\), and
\begin{equation}\label{eq:Lemma_6_1}
\Delta\log\frac{a-p_L}{\Delta}=\bar R(v).    
\end{equation}

Indeed, if \(\gamma=p_0\), then \(\Delta=a-p_0\); if \(\gamma<p_0\), then \(\Delta=(a-p_L)/e\), so equation \eqref{eq:Lemma_6_1} equals \((a-p_L)/e\).

Let \(X:=\hat p(\sigma)\) be the terminal sale price. For \(x\in(p_L,\gamma)\), let \(t(x)\) be the unique time satisfying \(\hat p(t(x))=x\). The end-sale distribution gives
\[
    \mathbb P(X\le x)
    =
    \mathbb P(\sigma\ge t(x))
    =
    \frac{\Delta}{a-x},
    \qquad p_L<x<\gamma.
\]
At \(x=p_L\), the same CDF value is induced by the atom at \(t_2\). Thus \(X\) has CDF
\[
G(x):=
\begin{cases}
0, & x<p_L,\\[0.6ex]
\dfrac{\Delta}{a-x}, & p_L\le x<\gamma,\\[1.2ex]
1, & x\ge \gamma.
\end{cases}
\]
Consider any deterministic online policy. Before the jump, all paths share the same declining price history. Hence, the policy's first purchase along this common history can be represented by a target price \(y\): the first price at which it would buy if the decline continued all the way to \(p_L\). If the policy would not buy before \(p_L\), set \(y=0<p_L\).

If \(y\ge \gamma\), every path reaches \(\gamma\) before reversion and the policy buys before reversion at a price at least \(\gamma\). Its online payoff is therefore at most \(v-\gamma\), whereas the offline payoff is \(v-X\). Its regret is at least \(\gamma-X\), and hence its expected regret is at least
\[
    \mathbb E[\gamma-X]
    =
    \int_{p_L}^{\gamma} G(x)\,dx
    =
    \Delta\log\frac{a-p_L}{\Delta}
    =
    \bar R(v).
\]

If \(p_L\le y<\gamma\), then the policy buys before reversion on \(\{X\le y\}\) and misses the target on \(\{X>y\}\). In the first event, regret is at least \(y-X\). In the second event, even granting the policy the best possible continuation payoff after reversion, regret is at least \(a-X\): when \(v\le p_U\), this is \(v-X\), and when \(v>p_U\), it is \(p_U-X\). Thus, its expected regret is at least
\[
    r(y)
    :=
    \int_{p_L}^{y}(y-x)\,dG(x)
    +
    \int_{y}^{\gamma}(a-x)\,dG(x).
\]
The CDF has an atom at \(p_L\) and density \(g(x)=\Delta/(a-x)^2\) on \((p_L,\gamma)\). For \(y\in(p_L,\gamma)\), the atom contributes to the first integral and is already captured by \(G(y)\). Hence
\[
    r'(y)=G(y)-(a-y)g(y)=0,
\]
so \(r(y)\) is constant on \((p_L,\gamma)\). Taking the limit as \(y\uparrow \gamma\) gives \(r(y)=\mathbb E[\gamma-X]=\bar R(v)\). Finally, if \(y<p_L\), the policy does not buy before reversion on any path, and its regret is at least \(a-X\ge \gamma-X\) pathwise. Its expected regret is therefore also at least \(\bar R(v)\).

Thus, every deterministic policy has expected regret at least \(\bar R(v)\) under the constructed distribution. A randomized online policy is a mixture of deterministic policies, so its expected regret under the same path distribution is also at least \(\bar R(v)\). Consequently, for every online policy \(\mathcal A\), there exists a feasible path with \(\Reg(\mathcal A,p)\ge\bar R(v)\), proving the claim. \hfill\qedsymbol
\end{proof}

\begin{proof}{Proof of \Cref{lem:regret_achievable}.}
For a feasible price path, write
\[
    p^*:=\inf_{t\in[0,T]}p(t).
\]
Since \(p(0)=p_0\), we have \(p^*\le p_0\). Boundary events at which the path attains exactly the realized threshold do not affect the integrals below.

If a threshold \(b\) is reached before the deadline, the policy purchases at a price no higher than \(b\), so its regret relative to the offline minimum is at most \(b-p^*\). If the threshold is not reached before the deadline, the terminal valuation rule yields a nonnegative payoff when \(v\le p_U\), so the regret is at most \(v-p^*\). When \(v>p_U\), the terminal valuation rule purchases at the deadline; because the deadline price is no higher than \(p_U\), the regret is at most \(p_U-p^*\). Thus the missed-threshold penalty is governed by \(a-p^*\); in Case~(i), this reduces to \(v-p^*\).

\medskip \noindent\textbf{Case~(i): $p_L < v \le p_0$.} Set \(\bar p:=v-(v-p_L)/e\), draw \(b\) with density
\[
    f(b)=\frac{1}{v-b}\mathbf 1_{\{p_L\le b\le \bar p\}},
\]
This is a probability density because
\[
    \int_{p_L}^{\bar p}\frac{db}{v-b}
    =
    \log\frac{v-p_L}{v-\bar p}
    =1.
\]
If \(p^*\le\bar p\), then with \(F(x):=\int_{p_L}^x f(b)\,db\),
\[
    \Reg(\mathcal A,p)
    \le
    \int_{p^*}^{\bar p}(b-p^*)\frac{db}{v-b}
    +F(p^*)(v-p^*).
\]
Since \(F(p^*)=\log((v-p_L)/(v-p^*))\), the right-hand side equals
\[
    -(\bar p-p^*)+(v-p^*)\log\frac{v-p_L}{v-\bar p}
    =
    v-\bar p
    =
    \frac{v-p_L}{e}.
\]
If \(p^*>\bar p\), then no threshold in the support is reached and the regret is at most \((v-p^*)^+<v-\bar p\). Hence \(\sup_{p(\cdot)}\Reg(\mathcal A,p)\le (v-p_L)/e\).

\medskip \noindent\textbf{Case~(ii): $p_0 < v \le p_U$.} Let \(L_v:=\log((v-p_L)/(v-p_0))\). If \(L_v\le1\), draw \(b\) with density \(f(b)=1/(v-b)\) on \([p_L,p_0)\) and put an atom \(q_0:=1-L_v\) at \(p_0\). If \(p^*\in[p_L,p_0]\), then
\[
    \Reg(\mathcal A,p)
    \le
    q_0(p_0-p^*)
    +\int_{p^*}^{p_0}(b-p^*)\frac{db}{v-b}
    +F(p^*)(v-p^*),
\]
where \(F(p^*)=\int_{p_L}^{p^*}db/(v-b)\). The derivative of the right-hand side with respect to \(p^*\) is zero, so the bound is constant on \([p_L,p_0]\). Evaluating at \(p^*=p_0\) gives
\[
    \Reg(\mathcal A,p)
    \le
    (1-q_0)(v-p_0)
    =
    (v-p_0)\log\frac{v-p_L}{v-p_0}.
\]
If \(L_v>1\), set \(\bar p:=v-(v-p_L)/e\). Then \(\bar p<p_0\), and the truncated density \(f(b)=1/(v-b)\) on \([p_L,\bar p]\) is a probability density. The same calculation as in Case~(i) gives \(\sup_{p(\cdot)}\Reg(\mathcal A,p)\le (v-p_L)/e\). The two subcases imply \(\sup_{p(\cdot)}\Reg(\mathcal A,p)\le\bar R(v)\).

\medskip \noindent\textbf{Case~(iii): $v > p_U$.} The same construction applies with \(p_U\) replacing \(v\), together with the terminal valuation rule if the threshold is not reached before the deadline. Since \(v>p_U\), this terminal rule always purchases at \(T\). Let \(L_U:=\log((p_U-p_L)/(p_U-p_0))\), with the convention that \(L_U=+\infty\) when \(p_U=p_0\). If \(L_U\le1\), draw \(b\) with density \(f(b)=1/(p_U-b)\) on \([p_L,p_0)\) and put an atom \(q_0:=1-L_U\) at \(p_0\). Since a missed threshold costs at most \(p_U-p^*\), the same derivative calculation gives
\[
    \Reg(\mathcal A,p)
    \le
    (p_U-p_0)\log\frac{p_U-p_L}{p_U-p_0}.
\]
If \(L_U>1\), use the truncated density \(f(b)=1/(p_U-b)\) on \([p_L,\bar p_U]\), where \(\bar p_U:=p_U-(p_U-p_L)/e\). Then \(\sup_{p(\cdot)}\Reg(\mathcal A,p)\le (p_U-p_L)/e\). Hence \(\sup_{p(\cdot)}\Reg(\mathcal A,p)\le\bar R(v)\).  \hfill\qedsymbol
\end{proof}

\section{Decreasing Price Envelopes}
\label{ec:robust_trend_envelopes}

This subsection explains the reduction used in \Cref{subsec:robust_trend_envelopes}. Let \(p_L(\cdot)\) and \(p_U(\cdot)\) be non-increasing envelopes. Define the effective primitives
\[
  p_L^{\mathrm{eff}}:=p_L(T),\qquad
  p_U^{\mathrm{eff}}:=p_U(T),\qquad
  p_0^{\mathrm{eff}}:=\min\{p_0,p_U(T)\}.
\]

For every feasible path, the ex-post minimum \(p^\star:=\inf_{t\in[0,T]}p(t)\) satisfies \(p^\star\ge p_L^{\mathrm{eff}}\). In addition, because the agent observes \(p_0\) at the beginning and can wait until the deadline, the best realized price is no higher than \(p_0^{\mathrm{eff}}\); equivalently, \(p^\star\le p_0^{\mathrm{eff}}\) whenever the effective interval is nondegenerate. If a randomized threshold is not reached before the deadline, the deadline price is at most \(p_U^{\mathrm{eff}}\), so the missed-threshold continuation bound used in the constant-bound proof remains valid with \(p_U\) replaced by \(p_U^{\mathrm{eff}}\).

The achievability arguments for \Cref{lem:matching_threshold_policy,lem:regret_achievable} therefore go through verbatim after replacing
\[
  (p_L,p_U,p_0)
  \quad\text{by}\quad
  (p_L^{\mathrm{eff}},p_U^{\mathrm{eff}},p_0^{\mathrm{eff}}).
\]
The randomized threshold distributions depend only on the lower bound for \(p^\star\), the largest relevant upper bound on the best realized price, and the upper bound on the deadline price when the threshold is not reached before the deadline. These are precisely \(p_L^{\mathrm{eff}}\), \(p_0^{\mathrm{eff}}\), and \(p_U^{\mathrm{eff}}\), respectively.

The adversarial optimality constructions can also be embedded in the decreasing-trend class. Fix a small terminal interval \([T-\eta,T]\). Because the envelopes are non-increasing and continuous at \(T\), their values on this interval can be made arbitrarily close to the terminal band \([p_L^{\mathrm{eff}},p_U^{\mathrm{eff}}]\) by taking \(\eta\) small. The adversary first chooses a feasible decreasing path that reaches the effective initial level \(p_0^{\mathrm{eff}}\) at the beginning of the markdown experiment. From that point, the path follows a decreasing trajectory from \(p_0^{\mathrm{eff}}\) toward \(p_L^{\mathrm{eff}}\). The random interruption time is the end-sale time \(\sigma\) from the flash-sale construction: at \(\sigma\), the markdown expires, the price jumps upward to the upper-envelope path, and then follows \(p(t)=p_U(t)\) until the deadline. Thus, the only uncertainty faced by the online rule is the realized markdown depth before the end of the sale. Up to an arbitrarily small envelope-approximation error, the same randomization over this markdown depth equalizes deterministic target-threshold rules exactly as in \Cref{lem:flash_sale_upper_main} and \Cref{lem:regret_lower_bound}. Hence, the formulas remain the optimal guarantees for the decreasing-trend problem under the effective primitives \((p_L^{\mathrm{eff}},p_U^{\mathrm{eff}},p_0^{\mathrm{eff}})\).

\PutSingleSpacedBib
\end{bibunit}

\end{document}